\newcolumntype{P}[1]{>{\centering\arraybackslash}p{#1}}
\newtheorem{theorem}{Theorem}[section]
\theoremstyle{definition}
\newtheorem{lemma}[theorem]{Lemma}
\newtheorem{defn}[theorem]{Definition}
\newtheorem{claim}[theorem]{Claim}
\newtheorem{corollary}[theorem]{Corollary}
\newtheorem{observation}[theorem]{Observation}
\theoremstyle{remark}
\newtheorem{remark}{Remark}
  \apptocmd{\EndUpon}{\algpx@endIndent}{}{}%
\pretocmd{\Upon}{\algpx@endCodeCommand}{}{}
  \pretocmd{\EndUpon}{\algpx@endCodeCommand[1]}{}{}%
  \pretocmd{\EndUpon}{\algpx@endCodeCommand[0]}{}{}%
\newcommand{\invec}{\textbf{X}}
\newcommand{\invX}[1]{#1^{\textsf{inv}}}
\newcommand{\invB}{\byz^{\textsf{inv}}}
\newcommand{\randvec}{\textbf{R}}
\newcommand{\good}{\textsf{Good}}
\newcommand{\byz}{\textsf{Byz}}
\newcommand{\goodv}{\mathcal{M}_\good(v)}
\newcommand{\goodpv}{\mathcal{M}'_\good(v)}
\newcommand{\byzv}{\mathcal{M}_\byz(v)}
\newcommand{\byzpv}{\mathcal{M}'_\byz(v)}
\newcommand{\allv}{\mathcal{M}(v)}
\newcommand{\allpv}{\mathcal{M}'(v)}
\def\ConstDecTree{\mbox{\sc Const\_Decision\_Tree}}
\def\Determine{\mbox{\sc Determine}}
\def\inline#1:{\par\vskip 7pt\noindent{\bf #1:}\hskip 10pt}
\def\dnsinline#1:{\par\vskip -7pt\noindent{\bf #1:}\hskip 10pt}
\def\inline#1:{\par\vskip 7pt\noindent{\bf #1:}\hskip 10pt}
\def\dnsinline#1:{\par\vskip -7pt\noindent{\bf #1:}\hskip 10pt}
\def\EXSET{\mathcal{E}}
\def\Source{source\xspace}
\def\peers{peers\xspace}
\def\peer{peer\xspace}
\def\nonfaulty{nonfaulty\xspace}
\def\Advers{\mathcal{A}}
\def\Inum{\mathcal{K}}
\newcommand{\modof}[1]{\lvert#1\rvert}
\def\indID{ind}
\def\indID{\ell}
\def\Exp{\hbox{\rm I\kern-2pt E}}
\long\def\commentstart #1\commentend{}
\def\byzfrac{\beta}
\def\goodfrac{\gamma}
\def\lead{\mathtt{lead}}
\def\minbit{I_{min}}
\def\peers{peers\xspace}
\def\peer{peer\xspace}
\def\BYZ{\mathcal{F}} 
\def\view{\mbox{\sf view\xspace}\xspace}
\def\viewchange{\mbox{\sf view change\xspace}\xspace}
\def\indx{I\xspace}
\def\res{res}
\def\ith{$i^{\text{th}}$\ }
\def\jth{$j^{\text{th}}$\ }
\def\DataQuery{\mbox{\tt Query}}
\def\Time{\mathcal{T}}
\def\Query{\mathcal{Q}}
\def\download{{\textsf{Download}}}
\def\phase{{\mathtt{phase}}}
\def\stage{{\mathtt{stage}}}
\def\metoo{``me neither"}
\def\EXEC{\mathtt{EX}}
\newcommand{\BB}{\mathcal{B}}
\newcommand{\RR}{\mathcal{R}}
\newcommand{\hRR}{{\hat{\mathcal{R}}}}
\def\cH{\mathcal{H}}
\def\FS{\mbox{\sf FS}}
\def\Time{\mathcal{T}}
\def\optqexact{O\left(\frac{n\lg{n}}{\goodfrac k}\right)}
\def\optqtilde{\tilde{O}\left(\frac{n}{\goodfrac k}\right)}
\def\Query{\mathcal{Q}}
\def\Message{\mathcal{M}}
\def\Messagesize{\mathcal{S}}
\def\leftchild{\mbox{\sf\small left-child}}
\def\rightchild{\mbox{\sf\small right-child}}
\title{Distributed Download from an External Data Source in \\
Faulty Majority Settings}
\author[1]{John Augustine \thanks{\href{mailto: augustine@iitm.ac.in}{\tt augustine@iitm.ac.in}}}
\author[1]{Soumyottam Chatterjee \thanks{\href{mailto: soumyottam@acm.org}{\tt soumyottam@acm.org}}}
\author[2]{Valerie King \thanks{\href{mailto: val@uvic.ca}{\tt val@uvic.ca}}} 
\author[1]{Manish Kumar \thanks{\href{mailto: manishsky27@gmail.com}{\tt manishsky27@gmail.com}}}
\author[3]{Shachar Meir \thanks{\href{mailto: shachar.meir@weizmann.ac.il}{\tt shachar.meir@weizmann.ac.il}}}
\author[3]{David Peleg \thanks{\href{mailto: david.peleg@weizmann.ac.il}{\tt david.peleg@weizmann.ac.il}}}
\affil[1]{Indian Institute of Technology Madras, India}
\affil[2]{University of Victoria, Canada}
\affil[3]{Weizmann Institute of Science, Israel}
\begin{document}

\maketitle

\begin{abstract}

We extend the study of retrieval problems in distributed networks, focusing on improving the efficiency and resilience of protocols in the \emph{Data Retrieval (DR) Model}. The DR Model consists of a complete network (i.e., a clique) with $k$ peers, up to $\beta k$ of which may be Byzantine (for $\beta \in [0, 1)$), and a trusted \emph{External Data Source} comprising an array $X$ of $n$ bits ($n \gg k$) that the peers can query. Additionally, the peers can also send messages to each other. In this work, we focus on the \download\ problem that requires all peers to learn $X$. Our primary goal is to minimize the maximum number of queries made by any honest peer and additionally optimize time.

We begin with a randomized algorithm for the \download\ problem that achieves optimal query complexity up to a logarithmic factor. For the stronger dynamic adversary that can change the set of Byzantine peers from one round to the next, we achieve the optimal time complexity in peer-to-peer communication but with larger messages. In broadcast communication where all peers (including Byzantine peers) are required to send the same message to all peers, with larger messages, we achieve almost optimal time and query complexities for a dynamic adversary. Finally, in a more relaxed crash fault model, where peers stop responding after crashing, we address the \download\ problem in both synchronous and asynchronous settings. Using a deterministic protocol, we obtain nearly optimal results for both query complexity and message sizes in these scenarios.

\end{abstract}

\section{Introduction} \label{sec: introduction}

We consider the Data Retrieval Model (DR) which is comprised of a peer-to-peer network and an external data source in the form of an $n$ bit array. There are $k$ peers, some of which may be faulty. The peers initially do not know the array's contents but must learn all its bits. The peers can learn the bits either through direct querying of the data source or from other peers. The primary goal is to minimize the maximum number of queries by any peer.  

The DR model was introduced in DISC 2024~\cite{ABMPRT24} and was inspired by distributed oracle networks (DONs) which are a part of blockchain systems. In such networks, nodes are tasked with retrieving information from external data sources such as stock prices. We believe that the problem is of more general interest. Consider a collection of facts about the real world, each of which may be learned by deep investigation. This work may be shared among researchers to reduce the cost for any individual, even if some fraction of the researchers may be Byzantine or unreliable.

Here we focus on the fundamental \download\ problem, where each peer must learn every bit in the array. The problem is easily solved in a query-balanced manner in the absence of failures. When faults are allowed, this becomes harder. We consider a setting where up to $\byzfrac k$ are faulty and at least $\goodfrac  \geq  1 - \byzfrac$ fraction are non-faulty.

For synchronous systems, a tight bound on query complexity is established in~\cite{ABMPRT24} for deterministic \download, complemented by two randomized protocols that solve \download\ w.h.p. The first can tolerate any fraction $\byzfrac <1$ of Byzantine faults but has non-optimal query complexity of $O(n/\goodfrac k + \sqrt{n})$, while the second has optimal query complexity of $\tilde{O}(n/\goodfrac k)$\footnote{We use the $\tilde{O}(\cdot)$ notation to hide $\byzfrac$ factors and polylogarithms in $n$ and $k$.} but can only tolerate up to $\byzfrac <1/3$ fraction of Byzantine faults. In this paper, we first present a novel randomized protocol in a synchronous network that combines the best of both results, achieving optimal query complexity while tolerating any fraction $\byzfrac < 1$ of Byzantine faults.

We then proceed to provide a lower bound on time/query complexity. Specifically, we show that in any {\em single round} randomized protocol, every peer must essentially query the entire input. On the other hand, we show that $\tilde{O}(1)$ rounds suffice to bring down {\it expected} query cost to optimal, at the cost of large message size. By adding the assumption of a Broadcast model where all peers including faulty ones are restricted to sending the same message to all peers, we are able to bring down the message size and simultaneously achieve optimal query size, time, and message size, to within log factors. Furthermore, unlike the protocols in \cite{ABMPRT24}, these last two protocols allow for a ``dynamic" adversary where the set of faulty \peers can be arbitrarily changed in each round, provided that in any given round, at most $\byzfrac$ fraction of peers are faulty.

Lastly, we turn to the easier setting of crash failures, for which we are able to provide query-efficient deterministic algorithms in both synchronous and {\em asynchronous} networks.

\subsection{The Model}  The Data Retrieval model consists of (i) $k$ \peers\ that form a \emph{clique} and (ii) a source of data that is external to the clique called the \emph{\Source} that stores the input array comprising $n$ bits and provides read-only access to its content through queries.

\subparagraph*{Clique network and communication mode.}  In a clique (complete) network, the $k$ \peers\ are identified by unique IDs assumed to be from the range $[1,k]$.

In each round, every \peer can send a message of up to $b$ bits to each of the other peers. The common variant of this communication mechanism, referred to as \emph{peer-to-peer message passing} communication, allows a peer to send in each round a \emph{different} message to each of the other peers. However, we also discuss (in Sect. {\ref{ss:opt_in_broadcast_model}}) a variant termed \emph{broadcast} communication, where each peer (including a faulty one) can send at most one message per round, and that message is delivered to all other peers.

\subparagraph*{The \Source.} 
The $n$-bit input array\footnote{Throughout this paper we assume $n\ge k$. In typical applications, $n \gg k$.} $\invec=\{b_1,\ldots,b_n\}$ is stored in the \Source. It allows \peers\ to retrieve that data through queries of the form $\DataQuery(i)$, for $1\le i\le n$. 
The answer returned by the \Source\ would then be $b_i$, the $i^{th}$ element in the array.
This type of communication is referred to as \emph{\Source-to-\peer} communication.

\subparagraph*{Network delay and rounds.}
We consider both synchronous and asynchronous settings. In the synchronous setting, \peers\ share a global clock.
Each round consists of three sub-rounds:
\begin{enumerate}
    \item ~The \emph{query sending sub-round}, in which each peer can send $q$ queries ($0\le q\le n$) of the form $\DataQuery(\cdot)$ to the \Source.
    
    \item The \emph{query response sub-round}, in which the \Source responds to all the queries.
    
    \item The \emph{message-passing sub-round} of \peer-\peer\ communication, consisting of messages exchanged between peers. Every message is of size $O(\log n)$ unless otherwise stated.
\end{enumerate}
We assume that local computation is instantaneous and is performed at the beginning of each sub-round.
We assume that a \peer\ $M$ can choose to ignore (not process) messages received from another peer during the execution.  Such messages incur no communication cost to the recipient peer. 

In section \ref{sec:async_crash_faults} we consider a fully asynchronous communication where an adversary may delay every message by any finite amount of time. 

\subparagraph*{The adversarial settings.}
The behavior of the environment in which our protocols operate is modeled by an adversary $\Advers$ that selects the input data and determines the \peers' failure pattern. In executing a protocol, a \peer is considered \emph{\nonfaulty} if it obeys the protocol throughout the execution. 

In this work, we consider two types of \emph{faulty} \peers. A \emph{crashed} \peer is one that stops its local execution of the protocol arbitrarily and permanently (controlled by $\Advers$). This might happen \emph{in the middle of a sub-round}, that is, after the \peer has already sent some but not all of its messages.
A \emph{Byzantine} \peer\
can deviate from the protocol arbitrarily
(controlled by $\Advers$) .
The adversary $\Advers$ can corrupt at most $\byzfrac k$ \peers\ for some given\footnote{We do not assume~$\byzfrac$ to be a fixed constant (unless mentioned otherwise).}
$\byzfrac \in [0,1)$. 
Letting $\goodfrac = 1-\byzfrac$,
there is (at least) a $\goodfrac$ fraction of \nonfaulty \peers. 
We denote the set of faulty (respectively, \nonfaulty) \peers\ in the execution by $\BYZ$. (resp., $\cH$). 

In both cases the total number of corrupted peers at any given time is bounded by $\byzfrac k$. In both cases, for a randomized algorithm, the adversary can adaptively decide on which peers to corrupt.

The Byzantine adversary can select peers to corrupt at the start of each round. Corrupted peers are called Byzantine peers and they can behave arbitrarily. We consider two types of Byzantine adversaries. Under the \emph{fixed adversary}, a corrupted peer remains corrupted for the rest of the execution. 
The \emph{dynamic} adversary can
decide on the set of corrupt peers arbitrarily at the start of any round, or more explicitly, it can make a peer $v$ faulty on one round and non-faulty on the next.
In both cases the total number of corrupted peers at any given time is bounded by $\byzfrac k$. In both cases, for a randomized algorithm, the adversary can adaptively decide on which peers to corrupt.

When assuming a synchronous network, the adversary has no control over the delay of the network, but when assuming an asynchronous network, the adversary $\Advers$ can delay messages from arriving at their destinations for any finite amount of time.

We design both deterministic and randomized protocols. When the protocol is deterministic, the adversary can be thought of as all-knowing. Thus, $\Advers$ 
knows exactly how the complete execution 
will proceed and can select Byzantine \peers from the beginning based on this knowledge. When the protocol is randomized,  the \peers\ may generate random bits locally.
At the beginning of each round~$t$, $\Advers$ has knowledge of $\invec$, all the local random bits generated up to round $t-1$, and all \peer-\peer\ and \Source-\peer\ communications up to round $t-1$.

\subparagraph*{Complexity measures.}
The following complexity measures are used to analyze our protocols. 
\begin{description}
\item{(i)} 
\emph{Query} Complexity ($\Query$): the maximum number of bits queried by a \nonfaulty \peer\ during the execution of the protocol,
\item{(ii)}
\emph{Round} Complexity ($\Time$): the number of rounds (or \emph{time}) it takes for the protocol to terminate,
\item{(iii)}
\emph{Message} Complexity ($\Message$): the total number of messages sent by \nonfaulty \peers\ during the execution of the protocol.
\item{(iv)}
\emph{Message size}($\Messagesize$)
: the maximum number of bits sent in one message by any \nonfaulty \peer\ during the execution of the protocol. 
\end{description}

As queries to the \Source\ are expected to be the more expensive component in the foreseeable future, we focus mainly on optimizing the query complexity $\Query$. 
Note that our definition of~$\Query$ (measuring the maximum cost per \peer\ rather than the total cost) favors a fair and balanced load of queries across \nonfaulty \peers.

\paragraph*{The \download\ problem and its complexity in a failure-free setting.}  A natural class of problems, called \emph{retrieval problems}, arises from the definition of the DR model. In a retrieval problem, each \peer needs to output some computable boolean function $f$ of the \Source (i.e $f(\invec)$). In this work, we focus on the case where $f$ is the identity function and hereafter refer to this particular problem as the \download\ problem. This problem is the most fundamental retrieval problem since every computable function $f$ of the input can be computed by the \peers by first running a \download\ protocol and then computing $f(\invec)$ locally at no additional cost. Hence, query cost of the \download\ problem serves as a baseline against which to compare the costs of other specialized algorithms for specific problems. 
Observe that a $\Query$ lower bound for computing any Boolean function on $\invec$ serves as a lower bound for \download\ as well.

\paragraph{The \download\ problem.}  Consider a DR network with $k$ \peers, where at most $\byzfrac k$ can be faulty, and a \Source that stores an array of bits $\invec=[b_1, \dots, b_n]$. The \download\ problem requires each \peer $M$ to learn $\invec$. Formally, each non-faulty peer $M$ outputs a bit array $\res_M$, and it is required that, upon termination, $\res_M[i]=b_i$ for every $i \in \{1, \cdots, n\}$ and $M \in \cH$, where $\cH$ is the set of non-faulty peers.

To solve this problem in the absence of failures, all $n$ bits need to be queried, and this workload can be shared evenly among $k$ \peers, giving $\Query = \Theta(n/k)$. The message complexity is $\Message = \tilde{O}(n k)$, assuming small messages of size $\tilde{O}(1)$, and round complexity is $\Time = \tilde{O}(n/k)$ since  $\Omega(n/k)$ bits need to be sent along each communication link when the workload is shared. 

\subsection{Methods} \label{sec:methods}

In this subsection, we highlight the main tools used throughout this work.

\subparagraph*{Blacklisting.} 
During an execution, \nonfaulty \peers\ can \emph{blacklist} faulty ones, after identifying a deviation from the behavior expected of a \nonfaulty \peer, and subsequently ignore their messages. A Byzantine
\peer $M'$ can be blacklisted for a variety of reasons, such as
when $M'$ is directly ``caught'' in a lie about the value of some bit, or stops sending messages while they are expected of it, or sends more messages than what they are expected to send.

\subparagraph*{A primary-backup approach.}  In some of our protocols, we use the \emph{primary-backup} approach, first introduced in {\cite{primay-backup}}. In this approach, {\peers} move through a succession of configurations called \emph{views}, during which one {\peer} will be designated the \emph{leader} (or \emph{primary}), in charge of driving progress and coordinating the rest of the {\peers} (or \emph{backups}). In case the leader is faulty (which can be recognized in several ways, including but not limited to the blacklisting techniques highlighted above), {\peers} can initiate a \emph{view change} which will result in a new leader being selected.

\subparagraph*{Sifting and the use of decision-trees.}  This technique is used in Section {\ref{sec:alg_fast}. Consider a multi-set of the strings proposed by different peers that purport their respective strings to be equal to a particular interval of the input bit array. If it is known that at least $t>0$ proposers were nonfaulty peers that correctly know the interval, we can discard all strings in the multi-set that do not appear at least $t$ times. We can conclude that the remaining set of distinct strings contains a correct string.

A {\em decision tree} for a set $S$  of strings is a rooted binary tree.
Each internal node $x$ is labeled by an index $i$ of the input array and each leaf is labeled by a string $s$ such that if a path from the root to the leaf goes to the left subtree of a node labeled $i$, then the $i^{th}$ bit of $s$ is 0, else it's 1. Given a set of strings $S$ of which one is consistent with the input array, we can build a decision tree with $|S|-1$ nodes and determine the correct leaf with $|S|-1$ simultaneous queries.
 
\subsection{Related Work} 

Our work studies a new class of fault tolerant problems that is heavily inspired by Blockchain oracles.
There are multiple classic BFT problems (e.g., agreement, broadcast, and state machine replication) that provide insight and inspiration when considering the \download\ problem. 
 
\subparagraph*{Traditional BFT problems.}
The theory of Byzantine fault tolerance has been a fundamental part of distributed computing  ever since its introduction by Pease, Shostak, and Lamport~\cite{LSM82, PSL80} in the early 80's,
and has had a profound influence on cryptocurrencies, blockchains, distributed ledgers, and other decentralized P2P systems. 
It largely focused on a canonical set of  problems like Broadcast~\cite{DS83},  Agreement~\cite{B87,LSM82,PSL80,R83}, and State Machine Replication~\cite{CL99}. In most of these studies, the main parameter of interest is the maximum fraction $\byzfrac$ of the \peers\ that can be corrupted by the adversary in an execution. 

Consider the Byzantine Agreement problem that requires $n$ \peers, each with an input bit, to agree on a common output bit that is \emph{valid}, in the sense that at least one \nonfaulty (non-Byzantine) \peer\ held it as input. In the synchronous setting, even without cryptographic assumptions, there are agreement algorithms that can tolerate any fraction $\byzfrac < 1/3$ of Byzantine \peers~\cite{LSM82} (and this extends to asynchronous settings as well~\cite{B87}). When $\byzfrac \ge 1/3$, agreement becomes impossible in these settings~\cite{LSM82}. However, the bound improves to $\byzfrac < 1/2$ with message authentication by using cryptographic digital signatures~\cite{RSA78}.  By the well-known network partitioning argument (discussed shortly), $\byzfrac < 1/2$ is required for any form of Byzantine agreement. For most of the Byzantine fault tolerance literature, $\byzfrac$ hovers around either 1/3 or 1/2, with some notable exceptions like authenticated broadcast~\cite{DS83} that can tolerate any $\byzfrac < 1$. 

The main reason for this limitation stems from the inherent coupling of data and computing. Consider, for instance, any Byzantine Agreement variation with $\byzfrac \ge 1/2$.  When all \nonfaulty \peers\ have the same input bit (say, 1), the Byzantine \peers\ hold at least half the input bits and can unanimously claim 0 as their input bits. This ability of Byzantine \peers\ to spoof input bits makes it fundamentally impossible for \nonfaulty \peers\ to reach a correct agreement with the validity requirement intact. At the heart of this impossibility is the 
adversary's power to control information
crucial to solving the problem. 
In fact, this issue leads to many impossibilities and the inability to solve problems exactly (see e.g., \cite{AMP21}). 

In contrast, having a reliable \Source\ that provides the data in read-only fashion yields a distributed computing context where access to data cannot be controlled by Byzantine \peers. Taken to the extreme, any \nonfaulty \peer\ can individually solve all problems by directly querying the \Source\ for all required data. However, queries are charged for and can be quite expensive. So the challenge is to design effective and secure collaborative techniques to solve the problem at hand while minimizing the number of queries made by each \nonfaulty \peer\footnote{Note that appointing some individual \peers\ to query each input bit and applying a \emph{Byzantine Reliable Broadcast (BRB)} protocol \cite{ECCB,B87,DS83} for disseminating the bits to all \peers\ will not do, since the appointed \peers\ might be Byzantine, in which case the BRB protocol can only guarantee agreement on \emph{some} value, but not necessarily the true one. Moreover, Byzantine Reliable Broadcast (BRB) cannot be solved when $\byzfrac \geq 1/3$ with no authenticated messages.}.
Hence, despite the \Source\ being passive (read-only with no computational power), its reliability makes the model stronger than the common Byzantine model.

One problem that stands out among canonical BFT problems is the Byzantine Reliable Broadcast (BRB) problem, first introduced by Bracha \cite{B87}. In BRB, a designated sender holds a message $M$, and the goal is for every \nonfaulty \peer to output the same $M'$, which in addition must satisfy $M' = M$ if the sender is \nonfaulty. The \download\ problem can be viewed as a variant of BRB, where the sender is always \nonfaulty but has no computational powers and is passive (read-only), and \peers are always required to output the correct message $M$. 
These differences distinguish \download\ from BRB and require us to shift our focus to minimizing queries (while keeping the upper bound $\byzfrac$ on the fraction of faults as high as possible). One easy-to-see difference in results is that \download\ can be solved trivially even when $1/3 \leq \byzfrac < 1$ and there are no authenticated messages, whereas BRB can not be solved under the same conditions \cite{DS83}. Another difference is that state-of-the-art BRB protocols like \cite{ECCB} where the sender uses error-correcting codes and collision-resistant hash functions are inapplicable (when considering the \Source\ to be the sender).
In optimal \emph{balanced} BRB protocols like in \cite{ECCB}, the sender sends $O(\frac{n}{k})$ bits to each \peer whereas it is shown in \cite{ABMPRT24} that deterministic \download\ requires $\Omega(\byzfrac n)$ queries (the difference stems from the inability of the \Source\ to perform computations).

\paragraph{Oracle networks.}  As mentioned above, in Oracle networks a set of \peers are assigned the task of bringing external off-chain data to the network, where a subset of these \peers can be faulty. Generally, one can describe the operation of an Oracle network as follows. The Oracle network generates a report containing the observations made by some (sufficient) number of \peers.
Once a report is successfully generated, one or multiple \peers transmit the report to an intermediary program that executes on the blockchain (known as a smart contract) that verifies the validity of the report, derives a final value from it (e.g., the median), and then exposes the value for consumption on-chain.
Since the traditional usage of these networks is to track exchange rates (e.g USD-ETH) that change with time, studies on Oracle networks focus on the problem of creating a report where the derived value (say the median) must be acceptable (in the sense that it does not deviate much from the set of \nonfaulty observations) while keeping costs (sending reports to the smart contract has a relatively high cost) low and tolerating as many Byzantine \peers as possible, even at the expense of higher communication and computation off-chain.
The Off-Chain Reporting(OCR) protocol \cite{ocr} solves this problem with $\byzfrac<1/3$ by running a BA protocol to agree on $2f+1$ values, then a designated \emph{leader} generates a report and sends it to the contract (the leader acts as an aggregator).
The Distributed ORacle Agreement (DORA) protocol \cite{DORA} takes it a step further by using an approximate agreement scheme, using the inherent ability of a Blockchain to act as an ordering service, and multiple aggregators. They improve results to sustain $\byzfrac<1/2$ and $\byzfrac<1/3$ w.h.p when the size of the Oracle network is significantly smaller than the size of the entire system. 
In both of these works, every \peer reads all the external data and goes on to participate in the report generation.
Our work complements the OCR and DORA protocols and focuses on how to efficiently read the off-chain data while minimizing the number of bits read per \peer (as reading from an external source is also more costly than off-chain communication). Our approach would drastically reduce cost when the Oracle network keeps track of a large number of (static) variables (e.g., financial information) and could be used as a black box in the OCR and DORA protocols, every \peer would query all of the $n$ values individually, whereas in our solution that is not the case. Note that both OCR and DORA use cryptographic primitives, whereas we do not.

\subsection{Our Contributions}

We explore the \download\ problem under various adversarial and network models and present several deterministic and randomized protocols and a lower bound for \download. Here, we state only simplified bounds, in which the $\tilde{O}(\cdot)$ notation hides factors dependent on~$\byzfrac$ and poly log factors in $n$. The main results are summarized in Tables \ref{tbl: comparative_analysis} and \ref{tbl: sync_results} for convenience.

\subsubsection{Query Optimality with Synchronous Point-to-point Communication and Byzantine Failures}

We start with closing the gap left open in \cite{ABMPRT24}. 
The model studied in that paper involves a synchronous point to point communication network and Byzantine failures. In this model, for deterministic algorithms, the \download\ problem turns out to be expensive, requiring $\Omega(\byzfrac n)$ queries in the worst case (a matching upper bound is shown which works under asynchrony as well). 
Every \peer essentially has to query the entire input array for itself. However, \cite{ABMPRT24} gives a randomized algorithm that solves the \download\ problem (and consequently \emph{any} function of the input) for an \emph{arbitrary} fraction $\byzfrac < 1$ of Byzantine faults while requiring at most $\tilde{O}(n/k + \sqrt{n})$ queries per \peer. The result is nearly as efficient as the \emph{failure-free model} whenever $k < \sqrt{n}$. 
The time and message costs are $\Time= O(n)$ and 
$\Message= \tilde{O}(kn + k^2\sqrt{n})$.
A natural question then, is whether the additive $\sqrt{n}$ term is necessary for $k > \sqrt{n}$. It was shown in \cite{ABMPRT24} that as long as $\byzfrac < 1/3$, one can be fully efficient for all $k \in [1, n]$,
getting $\Query = \tilde{O}\left(\frac{n}{k}\right)$, 
$\Time = \tilde{O}(n)$, and
$\Message = \tilde{O}(nk^2)$.
In Section \ref{sec: optimal_time_P2P} we close the gap and show the existence of a randomized \download\ algorithm with query complexity $\Query = O\left(\frac{n}{\gamma k}\right)$, $\Time = O(n \log k)$, and
$\Message = O(nk^2)$ for $\beta \in [0,1)$.

\subsubsection{Faster Query-optimal Solutions in Synchronous Communication and Byzantine Failures}

We next ask whether \download\ can be achieved faster than linear time. To exclude the extreme end of the scale, we show (in Sect. \ref{section-lower-bounds}) that hoping for a single round \download\ (assuming arbitrarily large messages can be sent in a single round) is too ambitious. 
The only way to achieve this is via the trivial exhaustive algorithm, where every \peer queries every bit. More explicitly, allowing each \peer to query $(n - 1)$ bits is not enough to solve the \download\ problem. (for $\byzfrac \approx 1/2$).

Nevertheless, we next derive faster algorithms (in section \ref{sec:alg_fast}) than the one of Section \ref{sec: optimal_time_P2P}. Specifically, in Subsect. \ref{sec: 2-step algorithm} we show how \download\ can be achieved in \emph{two} rounds. 
This solution enjoys the additional advantage that it can cope with a stronger type of adversary, termed \emph{dynamic adversary}, which can change the set of Byzantine \peers from one round to the other, provided that this set never exceeds a size of $\byzfrac k$.

Unfortunately, this algorithm no longer attains query-optimality; its query complexity is $O(n/(\gamma k) + \sqrt{n})$. We improve this query complexity in Subsection \ref{ss:expected} where we describe an iterated version of the 2-step algorithm with {\it expected} query complexity $O(n\log n/(\gamma k))$ and $O(\log n)$ time.

Finally, in Subsection \ref{ss:opt_in_broadcast_model}
we show how the same nearly optimal results can be achieved, with worst case query complexity and small message message size, provided we assume {\em broadcast} communication among peers rather than point-to-point. That is, in each round, each \peer must send the same message to all other \peers, and this applies also to the Byzantine \peers.
In this model, we get an algorithm with worst case
query complexity $O((1/\gamma)\log^2 n)$ and $O((1/\gamma)\log^2 n)$ time, and message size $O(\log n/\gamma)$

\subsubsection{Crash Faults} 
We then turn to studying settings that allow only crash failures. Here, we assume only point-to-point communication, and consider both synchronous and asynchronous communication. Generally speaking, in this model one can achieve stronger results in two aspects: first, efficient deterministic solutions are possible, and second, the problem is solvable even in fully asynchronous environments.

\paragraph{Crash faults in synchronous networks.}  When relaxing the failures to crashes, one can overcome the lower bound of $\Omega(\byzfrac n)$ that applies to deterministic \download\ under Byzantine faults while managing to tolerate an \emph{arbitrary} fraction $\byzfrac < 1$ of crashes. We first show a simple query optimal protocol that archives $\Query=O(n/\goodfrac k)$, $\Time=O\left((n+\byzfrac k)\byzfrac k\right)$ and $\Message=O(n\byzfrac k^2)$. Then, we show a more complicated and carefully constructed protocol that also achieves query optimality and improves the time and message complexity to $\Time=O(n+\byzfrac k)$, $\Message=O\left((n+\byzfrac k) \cdot k\right)$.

\paragraph{Crash faults in asynchronous networks.}
Another significant distinction of the Crash fault model is that it allows query optimal \emph{asynchronous} protocols. We show a query optimal protocol that achieves $\Query = O(n/k)$, $\Time = O(n)$, and $\Message=O(nk^2)$.

\begin{table}[!t]
\centering \footnotesize
\begin{tabular}
{|P{3.68cm}|P{2.15cm}|P{2.90cm}|P{2.59cm}|P{2.9cm}|}

\hline
\multicolumn{5}{|c|} {\bf Comparison of the Existing and Developed Results for Byzantine Fault}\\
\hline
  {\bf Adversary}&{\bf Theorem} &{\bf Query} & {\bf Time} & {\bf Message Size}\\
\hline

                 & \cite{ABMPRT24} & $\tilde{O}(n/k + \sqrt{n})$  & $O(n)$        & $O(\log n)$\\
Fixed Byzantine  & \cite{ABMPRT24}\# & $\tilde{O}(n/k)$             & $O(n)$        & $O(\log n)$\\
                 & Thm~\ref{thm:opt-alg-p2p} & $\tilde{O}(n/k)$     & $O(n\log k)$  & $O(1)$\\
\hline

Dynamic Byzantine   &Thm~\ref{thm:2-round}         &$\tilde{O}( n/ k + \sqrt{n})$     &$2$ & $\tilde{O}(n/ k + \sqrt{n})$\\
           &Thm~\ref{thm:fast_expected_optimal}   &$\tilde{O}( n/ k)$*      &$O(\log n)$ & $\tilde{O}(n)$\\
\hline

Dynamic Byzantine &Cor~\ref{cor:broadcast-time-query} &$O(n/k)$* &$O(\log n)$& $O(n/k)$\\
and Broadcast     &Thm~\ref{thm:broadcast_whp}    &$O(n/k)$   & $O(\log^2 n)$ &$\tilde{O}(n/k)$\\

\hline
\end{tabular}
\caption{Our main results (with $\beta$ treated as {\it any} positive constant in $[0,1)$). Here, $^\ast$ denotes results that only hold in expectation, and \# denotes that $\beta<1/3$. }\label{tbl: comparative_analysis}
\end{table}
\begin{table}[!t]
\centering \footnotesize
\begin{tabular}
{|P{3.68cm}|P{2.15cm}|P{2.90cm}|P{2.59cm}|P{2.9cm}|}

\hline
\multicolumn{5}{|c|} {\bf Deterministic Protocol with Crash Fault}\\
\hline
  {\bf Model}&{\bf Theorem} &{\bf Query} & {\bf Time} & {\bf Message Size}\\
\hline

Synchronous &Thm~\ref{thm:sync_rapid_download}\# &$O(n/k)$ &$O(n+k)$& $O(\log n)$\\
\hline

Asynchronous &Thm~\ref{thm:async_single_crash} &$O(n/k)$ & $\tilde{O}\left(\frac{n}{k}\right)$& $O(\log n)$\\
             &Thm~\ref{thm:async_multi_crash} & $O(n/k)$ &$O(n)$& $O(\log n)$\\

\hline
\end{tabular}
\caption{Our main results for $\beta<1$ in the crash fault setting for synchronous and asynchronous model. \# denotes that $\beta \leq 1/ k$.}\label{tbl: sync_results}
\end{table}
\section{Query-optimal \download} \label{sec: optimal_time_P2P}

In this section, we show how to optimize the query complexity of \download\ up to a factor of $\log{n}$, achieving $\Query  =  \optqexact  =  \optqtilde$ for any $\byzfrac < 1$. This provides an improvement over the best previously known results for \download~\cite{ABMPRT24}, which either guaranteed a query complexity of $\Query = \tilde{O}\left(\frac{n}{\goodfrac k} + \sqrt{n}\right)$ or imposed the additional restriction of $\byzfrac < \frac{1}{3}$.

The algorithm described next works under the assumption that $2\goodfrac k > 2^\delta \cdot \lg^2{n}$, for some constant $\delta$.\footnote{For any positive real number $x$, we use the notation $\lg{x}$ to denote $\log_2{x}$.}  Note that, for smaller values of $k$, the desired bound of $\optqtilde$ on the query complexity holds trivially: when $k  \leq  \left(\frac{2^{\delta}}{2\goodfrac}\right) \cdot \lg^2{n}$, this bound is $\optqtilde = \tilde{O}(n)$, which is attainable by the trivial algorithm where every peer queries the entire input vector.

Our algorithm runs in $n$ \emph{epochs}. For $1 \leq i \leq n$, a non-faulty peer $M$ dedicates the \ith epoch to learning the \ith input bit. Furthermore, each epoch $i$ is divided into \emph{rounds}. During each round $j$ of the epoch $i$, $M$ attempts to learn the \ith input bit $x_i$ in one of two ways:
\begin{enumerate}
    \item[(i)]  based on messages received from the other peers in earlier rounds, or

    \item[(ii)] by querying the source directly (which happens with gradually increasing probability).
\end{enumerate}
$M$ relies on the messages of its peers for learning $x_i = b$ in round $j$ only if the number of peers who sent it the value $b$ exceeds a specified threshold (depending on $j$) \emph{and} the number of peers who sent it the value $(1 - b)$ is below that threshold. If $M$ fails to learn the bit in round $j$, then it proceeds to round $(j + 1)$, in which it doubles its probability of querying. This is repeated until $M$ successfully learns $x_i$. (Note that a head is always thrown when $2^j$ exceeds $\frac{\goodfrac k}{\lg{n}}$.

Once $M$ learns $x_i$, it broadcasts it and blacklists any peers who send contradictory values. In subsequent rounds, $M$ never relies on bits sent by blacklisted peers. See Algorithm~\ref{alg:improved} for the pseudocode. Note that in each epoch $i$, $M$ learns the value of the \ith bit either by \emph{gossip learning} (i.e., receiving messages with a decisive majority, in line \ref{l:gossip learning} of the algorithm) or by \emph{query learning} (in line \ref{l:query learning}). We name the epoch accordingly as a \emph{gossip} epoch or as a \emph{query} epoch. Furthermore, in every epoch $i$, $M$ performs a ``learning step'' in \emph{exactly one} round, hereafter referred to as the \emph{learning round} of epoch $i$ and denoted $\ell(i)$.

\begin{algorithm}
\caption{Resilient \download\ for any fixed known $\byzfrac<1$; code for \peer $M$} \label{alg:improved}
    
    \begin{algorithmic}[1]
        \State $B \gets \emptyset$ \Comment{Set of blacklisted \peers}
        \For{$i$ from 1 to $n$} \label{for:outer} \Comment{Epoch $i$ deals with input bit $x_i$} 

            \State $I_{voted} \gets 0$ \Comment{Indicator of learning the $i$-th bit and voting on it}
            \State $COUNT^0_i \gets 0$; $COUNT^1_i \gets 0$; \Comment{Counters for the number of \peers that voted 0/1 for bit $i$}
            \State $f \gets\lceil{\delta + \lg \lg n}\rceil$; \Comment{See Lemmas~\ref{lem:correctness} and \ref{lem:complexity} for the choice of the constant $\delta$. } \label{lno:constant}
            \For{$j$ from $f$ to $\lceil{\lg (\gamma k)-\lg \lg n} \rceil $} 
            \label{for:inner}
                \If{$I_{voted}=0$ ($M$ has not voted for bit $i$ yet)}
                        \If{$COUNT^{1-b}_i< \nu \cdot 2^{j}$ for $\mbox{$\nu=1/4$ and}$ for some $b \in \{0,1\}$ AND $(j \neq f)$ }
                   \label{l:insufficient_opposition}   \State $b_i^M \gets b$ 
                    \label{l:gossip learning}               \Comment{Perform a “gossip learning” step}
                      \State Broadcast a vote $\langle M,i,b_i^M=b \rangle$\; \Comment{$M$ and $i$ are implicit}
                       \State Set $I_{voted}\gets 1$
                        \Else
                \If{$j = \lceil {\lg (\gamma k-\lg \lg n}\rceil$} 
                \State Set coinflip to heads
                \Else\label{line: else_part}
                    \State Toss $2^j$ independent Coins with bias $1/\gamma k$ towards heads.
                    \State Let $S_{i,j}$ be the number of Coins that turned heads.
                    Set coinflip to heads if $S_{i,j} \ge 1$.
                    \label{lno:toss}
                \EndIf
                    \If{heads} \Comment{Perform a ``query learning" step}
                            \State  $b_i^M \leftarrow b \leftarrow query(i)$; \label{l:query learning} \Comment{$M$ learns $b^M_i=b$ by query}
                        \State Broadcast a vote $\langle M,i,b_i^M=b \rangle$\; \Comment{$M$ and $i$ are implicit}
                        \State Set $I_{voted}\gets 1$
                        \EndIf
                        \EndIf
                         \State Receive messages from other \peers.
                         \For{$b \in \{0,1\}$}
                        \State $COUNT^b_i \gets |\{M' \not\in B \mid M ~\hbox{received message~}\langle M',i,b_i^{M'}=b \rangle ~\hbox{during epoch $i$}\}|$
                        \\
                        \Comment{count voters for $b \in \{0,1\}$; ignore peers
                        blacklisted in earlier epochs}
                	\EndFor
                \EndIf
                \If{$I_{voted}=1$} \Comment{Including the case when $I_{voted}$ changed during the current phase}
                	\State $B\leftarrow B ~ \cup \{M' \mid \hbox{$M'$ sent message with~} b^{M'}_i \ne b^M_i\}$ \Comment{peers with  contradictory vote}
                \EndIf
            \EndFor
        \EndFor
    \end{algorithmic}
\end{algorithm}

\paragraph{Correctness.}  Let $P_j  =  1 - \left(1 - \frac{1}{\goodfrac k}\right)^{2^j}$ be the probability that when flipping $2^j$ independent random coins, each with bias $\frac{1}{\goodfrac k}$ toward head, at least one turns heads. The following technical lemma provides upper and lower bounds on $P_j$.
\begin{lemma} \label{lem: bound Pj - 2}
    For $j \in [f,\lg (\goodfrac k) - \lg \lg n ]$, $P_j \in \left(\frac{2^j}{\goodfrac k} \left(1 -\frac{1}{2\lg n}\right), \frac{2^j}{\goodfrac k}\right)$. 
\end{lemma}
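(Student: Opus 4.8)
The plan is to put $p:=\tfrac{1}{\goodfrac k}$ and $m:=2^j$, so that $P_j=1-(1-p)^m$, and to derive both inequalities from elementary estimates on $(1-p)^m$. Two properties of the admissible range $j\in[f,\lg(\goodfrac k)-\lg\lg n]$ are relevant: from $j\ge f=\lceil\delta+\lg\lg n\rceil$ we get $m=2^j\ge 2^f\ge 2^\delta\lg n\ge 2$ (so in particular $m\ge 2$), and the upper end $j\le\lg(\goodfrac k)-\lg\lg n$ is precisely equivalent to the quantitative bound $mp=\tfrac{2^j}{\goodfrac k}\le\tfrac{1}{\lg n}$, which is the fact that drives the lower estimate.

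For the upper bound, apply the (strict, for $m\ge 2$) Bernoulli inequality $(1-p)^m>1-mp$: for instance $(1-p)^m=(1-p)(1-p)^{m-1}\ge(1-p)\bigl(1-(m-1)p\bigr)=1-mp+(m-1)p^2>1-mp$, using $0<p<1$ and $m\ge 2$. Hence $P_j<mp=\tfrac{2^j}{\goodfrac k}$, which is the right-hand side of the claimed interval.

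For the lower bound, invoke the second Bonferroni inequality (truncated inclusion--exclusion over the $m$ coins, or a direct expansion of $(1-p)^m$): $P_j=1-(1-p)^m\ge mp-\binom{m}{2}p^2$. Writing $\binom{m}{2}=\tfrac{m(m-1)}{2}$ and factoring out $mp$, this gives $P_j\ge mp\bigl(1-\tfrac{(m-1)p}{2}\bigr)>mp\bigl(1-\tfrac{mp}{2}\bigr)$, where the last inequality is strict because $(m-1)p<mp$ and $mp>0$. Substituting the range bound $mp\le\tfrac{1}{\lg n}$ then yields $P_j>mp\bigl(1-\tfrac{1}{2\lg n}\bigr)=\tfrac{2^j}{\goodfrac k}\bigl(1-\tfrac{1}{2\lg n}\bigr)$, the left-hand side of the interval.

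There is essentially no hard step here: the arguments are entirely routine second-order estimates. The only thing that needs care is the bookkeeping that ties the two endpoints of the $j$-range to the two facts $m\ge 2$ (used for strictness of the upper bound) and $mp\le 1/\lg n$ (used for the lower bound), together with noting — via the standing assumption $2\goodfrac k>2^\delta\lg^2 n$ — that this range of $j$ is nonempty, so that the statement is not vacuous.
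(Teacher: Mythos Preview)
Your proof is correct and follows essentially the same route as the paper: both set $m=2^j$, $p=1/(\goodfrac k)$ and sandwich $(1-p)^m$ between $1-mp$ and $1-mp+\binom{m}{2}p^2$, then invoke $mp\le 1/\lg n$ from the upper end of the $j$-range. Your write-up is in fact a bit more careful about where strictness comes from (via $(m-1)p<mp$), but the argument is identical in substance.
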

\begin{proof}
The binomial expansion of $(1 - \varepsilon)^m$ yields the bounds
$1 - m\varepsilon  <  (1-\varepsilon)^m  <  1 - m\varepsilon + \frac{m(m-1)}{2}\varepsilon^2$. Setting $\varepsilon=1/\gamma k$ and $m=2^j$, we get 
$1-2^j/\gamma k < (1-1/\gamma k)^{2^j} < 1-2^j/\gamma k + 2^j(2^j-1)/(2(\gamma k)^2)$, or
$$\frac{2^j}{\gamma k} ~>~ P_j ~>~ \frac{2^j}{\gamma k} - \frac{2^j(2^j-1)}{2(\gamma k)^2}
~>~ \frac{2^j}{\gamma k}\left(1 - \frac{1}{2} \frac{2^j}{\gamma k}\right)
~>~ \frac{2^j}{\gamma k}\left(1 - \frac{1}{2\lg n}\right)
~,$$
where the last inequality follows from the assumption that $2^j < \gamma k/\lg n$.
\end{proof}

To prove the correctness of the algorithm, we need to show that all \nonfaulty \peers $M$ compute $\invec^M = \invec$ correctly with high probability. This is shown by inductively proving that, for any \nonfaulty \peer $M$ and any bit $i$, $1 \le i \le n$, $b_i^M = b_i$ with high probability. The induction is on $i$ for a fixed \nonfaulty \peer $M$. We prove the following two invariant statements inductively.

\begin{description}
    \item[Blacklisting Statements ($BL_i$).]  All \peers blacklisted by some \nonfaulty \peer $M$ until the end of the $i$-th epoch (i.e., the \ith iteration of the outer {\tt for} loop, line~\ref{for:outer}) of Algorithm~\ref{alg:improved}) are indeed Byzantine, with high probability. For convenience, $BL_0$ refers to the empty blacklist before the execution enters the loop and is clearly true.
    
    To capture the inner {\tt for} loop as well (line~\ref{for:inner} of Algorithm~\ref{alg:improved}), we use $BL_{i,j}$, $1 \le i < n$ and  $j \in [f, \lceil{\lg (\gamma k)-\lg \lg n} \rceil ]$ to refer to the statement that all \peers blacklisted by some \nonfaulty \peer $M$ until the end of the \jth inner {\tt for} loop of the $(i+1)$-st epoch are indeed Byzantine, with high probability. For convenience, we sometimes use $BL_{i, f-1}$ to refer to $BL_i$. Also, note that $BL_{i,\lceil{\lg (\gamma k)-\lg \lg n} \rceil}$ implies $BL_{i+1}$.

    \item[Correctness Statement ($C_i$).] All \nonfaulty \peers have executed the first $i$ epochs correctly, with high probability, i.e., $b_{i'}^M = b_{i'}$ for all $1 \le i' \le i$ and all \nonfaulty \peers $M$. Again, $C_0$ refers to the execution being vacuously correct before any bit is processed by the \peers. Furthermore, $C_{i,j}$, $1\le i<n$ and $j \in [f, \lceil{\lg (\gamma k)-\lg \lg n} \rceil ]$,  refers to the following statement at the end of iteration $j$ in epoch $(i+1)$: for all non-faulty peers $M$, $b_{i'}^M = b_{i'}$ for all $1 \le i' \le i$ and $b_{i+1}^M = b_{i+1}$ for all \nonfaulty \peers that have set their respective $I_{voted}$ bit to 1, with high probability. For convenience, we sometimes use $C_{i, f-1}$ to refer to $C_i$. Also, note that $C_{i,\lceil{\lg (\gamma k)-\lg \lg n} \rceil}$ implies $C_{i+1}$.
\end{description}

Observe that \peer $M$ performs a query if (a) it is the first round of that particular epoch and it has obtained ``heads'', or (b) it is round $j > f$, it is the first time $M$ has drawn ``heads'', and $COUNT_i^b \ge \nu 2^j$ for \emph{both} $b=0$ and $b=1$.

The basis for the induction is provided by the statements $BL_0$ and $C_0$, which are both true. The inductive step is given by the following lemma.
\begin{lemma} \label{lem:correctness}
    Consider some epoch $i \in [1,n]$ and assume that conditions $BL_{i-1}$ and $C_{i-1}$ hold at the start of epoch $i$. For any fixed $c \ge 1$, there is a suitably small fixed choice for the constant $\delta$ such that the statements $BL_i$ and $C_i$ hold with probability at least $1 - \frac{1}{n^{c + 1}}$ at the end of the epoch $i$. 
\end{lemma}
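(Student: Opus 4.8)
The plan is to prove $BL_i$ and $C_i$ together by a nested induction on the inner round index $j$, establishing $BL_{i,j}$ and $C_{i,j}$ from $BL_{i,j-1}$ and $C_{i,j-1}$, with the base case $j=f-1$ being exactly the hypotheses $BL_{i-1}$ and $C_{i-1}$. I would condition throughout on the ``good event'' that all these statements have held so far; since the adversary at the start of round $j$ knows only the random bits generated before round $j$, the coin tosses performed in line~\ref{lno:toss} during round $j$ of epoch $i$ are independent of the adversary's choices up to that point, which is what makes the probabilistic arguments go through. The key structural fact I will exploit is the ``sifting''/threshold mechanism: a non-faulty $M$ only adopts a value $b$ by gossip learning when it has received at least $\nu 2^j$ votes for $b$ and fewer than $\nu 2^j$ votes for $1-b$ from non-blacklisted peers, and by the inductive blacklisting statement every blacklisted peer so far is genuinely Byzantine, so the non-blacklisted voters it counts are either honest (who by $C_{i,j-1}$ vote correctly) or Byzantine (of which there are at most $\byzfrac k$).

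The correctness part ($C_i$) splits into two things to check. First, \emph{no non-faulty peer adopts the wrong value}: if $M$ gossip-learns $b$ in round $j$, then it saw $\ge \nu 2^j$ votes for $b$; I need to rule out that all of these came from Byzantine peers while the honest peers (who would vote $1-b$, the true value) are under-represented. The honest voters who have already voted by round $j$ are precisely those who drew heads in some round $\le j$ with probability roughly $P_j \approx 2^j/(\gamma k)$ each (Lemma~\ref{lem: bound Pj - 2}), so in expectation $\approx \gamma k \cdot P_j \approx 2^j$ honest peers have voted, i.e. comfortably above the threshold $\nu 2^j = 2^j/4$; a Chernoff bound gives that at least, say, $\frac12 2^j$ honest peers have voted correctly except with probability $\exp(-\Omega(2^j)) \le \exp(-\Omega(2^f)) = \exp(-\Omega(\lg n))$, which can be driven below $1/n^{c+1}$ for each $(i,j)$ by choosing $\delta$ (hence $2^f = 2^{\delta}\lg n$) large enough — this is where the constant $\delta$ in line~\ref{lno:constant} gets pinned down. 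Conditioned on that, the true value $1-b$ has $\ge \frac12 2^j > \nu 2^j$ honest votes, so the condition $COUNT^{1-b}_i < \nu 2^j$ fails and $M$ could not have gossip-learned the wrong value; and query learning in line~\ref{l:query learning} is always correct since it reads the \Source\ directly. Second, at the terminal round $j = \lceil \lg(\gamma k) - \lg\lg n\rceil$ the coin is forced to heads, so every non-faulty $M$ that has not yet voted queries and learns $b_{i+1}=b_{i+1}$ correctly, giving $C_{i+1}$.

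For the blacklisting part ($BL_i$), a non-faulty $M$ adds $M'$ to $B$ in epoch $i$ only if $M'$ sent a vote for $i$ contradicting $M$'s own (correct, by the $C_i$ analysis) value $b_i^M = b_i$; hence $M'$ broadcast a value $\ne b_i$, which a non-faulty peer following the protocol would never do (it either queries and gets $b_i$, or gossip-learns $b_i$ by the argument above), so $M'$ is Byzantine. This gives $BL_{i,j}$ from $BL_{i,j-1}$ and the just-established correctness, and iterating to $j = \lceil \lg(\gamma k) - \lg\lg n\rceil$ yields $BL_{i+1}$. Finally I would take a union bound: there are $O(n \cdot \lg(\gamma k)) = \mathrm{poly}(n)$ pairs $(i,j)$, each failure probability is $\le 1/n^{c+2}$ after adjusting $\delta$, so the total failure probability over epoch $i$ is $\le 1/n^{c+1}$ as claimed. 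The main obstacle is the first correctness sub-step: getting a clean Chernoff bound on the number of honest peers that have voted by round $j$ when $2^j$ is as small as $2^f = \Theta(\lg n)$, and verifying that the $(1 - 1/(2\lg n))$ slack from Lemma~\ref{lem: bound Pj - 2} together with the gap between the expected count $\approx 2^j$ and the threshold $\nu 2^j = 2^j/4$ is enough room to both (a) guarantee the true value always clears the threshold and (b) absorb the up-to-$\byzfrac k$ Byzantine votes for the wrong value without the honest count for the wrong value being inflated — i.e., making the constants $\nu = 1/4$ and $\delta$ fit together correctly across the whole range of $j$.
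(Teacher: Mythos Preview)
Your approach is correct and essentially the same as the paper's: show via Chernoff that by the start of round $j$ enough honest peers have already broadcast the true value so that the threshold test blocks wrong gossip-learning, then derive $BL_i$ from $C_i$. One small correction: the votes $M$ counts when checking in round $j$ are those broadcast through round $j{-}1$, so the relevant probability is $P_{j-1}\approx 2^{j-1}/(\gamma k)$ and the expected honest-vote count is $\approx 2^{j-1}$, not $2^j$---you should therefore target $\ge 2^{j-2}=\nu 2^j$ (the actual threshold) rather than $\tfrac12 2^j$, exactly as the paper does; also your concern~(b) is a non-issue for \emph{correctness}, since Byzantine votes for the wrong value can only push both counts above the threshold, which forces $M$ to query rather than gossip-learn wrongly.
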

\begin{proof}
Fix $c \ge 1$. To prove the claim, we show that $BL_{i,j}$ and $C_{i,j}$ hold for all rounds $j \in [f, \lceil{\lg (\gamma k)-\lg\lg n} \rceil]$. By the inductive hypothesis, $BL_{i, f-1} = BL_{i-1}$ and $C_{i, f-1} = C_{i-1}$ hold. For the sake of contradiction, suppose statement $C_i$ does not hold, and let $j^*$ be the first round $j$ in epoch $i$ when statement $C_{i-1,j}$ is false, namely, some \peer\ $M$ tosses heads and goes on to set its $I_{voted}$ bit to 1, but incorrectly assigns $b^M_i = 1-b_i$. Since $j^*$ is the first such occurrence of incorrect behavior, $BL_{i-1,j^*-1}$ and $C_{i-1,j^*-1}$ are true. If $j^* = f$, then $M$ will explicitly query the bit, so $b_{i}^M = b_{i}$, a contradiction. So, we focus on the case where $j^* > f$.

Without loss of generality, let $b_i = 0$. We claim that the number of votes received by $M$ in favor of $b_i = 0$ will be at least  $\nu \cdot 2^{j^*} =2^{j^*-2}$ with  probability at least $1 - \frac{1}{n^c}$. Since $BL_{i-1,j^*-1}$ and $C_{i-1,j^*-1}$ are true, all \nonfaulty \peers that tossed heads in earlier rounds $j < j^*$ would have correctly voted for $b_i = 0$. To establish the required contradiction, we show that the number of such votes is at least $\nu \cdot 2^{j^*} = 2^{j^*-2}$. Let $G_{old}$ be the set of votes received by $M$ from \nonfaulty \peers in epoch $i$ during rounds $j$ for $j \le j^*-2$ (if such rounds exist). Let $G_{recent}$ be the set of votes received by $M$ from \nonfaulty \peers in round $j' = j^*-1$; we know this round exists as $j^*>f$. Our goal is to show that $X = G_{old} \cup G_{recent}$ has cardinality $|X| \ge 2^{j^*-2}$ with high probability. Let $G$ denote the set of all \nonfaulty \peers; $|G| \ge \goodfrac k$. Note that for every \peer in $G \setminus G_{old}$, the probability of joining $G_{recent}$ is $P_{j'} = P_{j^*-1}$ by Lemma {\ref{lem: bound Pj - 2}}, so the expected size of $X$ is
$$\Exp[|X|] ~\ge~ |G| \cdot P_{j^*-1} ~\ge~ \gamma k \cdot \frac{2^{j^*-1}}{k\gamma} \left(1 -\frac{1}{2\lg n}\right) ~=~ 2^{j^*-1} \left(1 -\frac{1}{2\lg n}\right) ~\ge~ 0.9\cdot 2^{j^*-1},$$
for sufficiently large $n$ (say, $n\geq 32$).
Thus, applying Chernoff bound, we get
\begin{eqnarray*}
\Pr[|X| &<& 2^{j^*-2}] ~=~ \Pr\left[|X| < 0.5\cdot 2^{j^*-1}\right] 
~\le~ \Pr\big[|X| < (1-4/9) \Exp[|X|]\big]
\\
&<& e^{-\frac{(4/9)^2}{2} \Exp[|X|]} 
~\le~ e^{-\frac{(4/9)^2}{2} \cdot 0.9\cdot 2^{j^*-1}}
~\le~ e^{-\frac{4}{45}\cdot 2^f}
~=~ e^{-\frac{4}{45}\cdot 2^{\delta+\lg\lg n}}
~=~ e^{-\frac{4}{45}\cdot 2^\delta \cdot \lg n}
\\
&< & n^{-\frac{4}{45}\cdot 2^\delta}
~< ~ \frac{1}{n^{c+2}},
\end{eqnarray*}
where the last inequality holds when $\delta \ge \lg (45(c+2)) - 2$).

Thus, the number of votes received by $M$ for the correct bit value (0, as per our assumption wlog) must be at least $\nu \cdot 2^{j^*}$ relying on the fact that by the inductive assumption $BL_{i-1,j^*-1}$, no \nonfaulty \peer was added to $B$. From the else part (see line \ref{line: else_part} of Algorithm~\ref{alg:improved}), if both bit values received $\nu \cdot 2^j$ votes (or more), then $M$ will query bit $i$ and this will ensure $C_{i-1, j^*}$. On the other hand, if the number of votes for bit value 1 is less than $\nu 2^{j^*}$, then, $M$ will vote for 0, which will be the correct bit. Thus, the contradiction is established, thereby implying $C_i$ with probability at least $1 - \frac{1}{n^{c + 1}}$ (applying the union bound over all peers $M$).

If $C_i$ holds, then all other \nonfaulty \peers $M' \ne M$ also vote correctly for $0$, so $M$ will not blacklist any of them. This implies that the invariant $BL_i$ also holds.
\end{proof}

Applying Lemma~\ref{lem:correctness} and taking the union bound over all $n$ epochs, we get the following corollary.
\begin{corollary}
    For $\delta$ fixed as in Lemma~\ref{lem:correctness}, Algorithm~\ref{alg:improved} ensures that all \peers $M$ correctly compute $\invec^M = \invec$ with probability at least $1 - \frac{1}{n^c}$.
\end{corollary}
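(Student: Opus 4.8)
The plan is to derive the corollary from Lemma~\ref{lem:correctness} by a short induction together with a union bound over the $n$ epochs; the only thing requiring any care is the bookkeeping of the conditioning that is implicit in the two invariants.

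First I would turn the invariants $BL_i$ and $C_i$, which are phrased ``with high probability,'' into genuine events. For $0 \le i \le n$ let $\mathcal{D}_i$ be the event that simultaneously (a) every \peer\ that some \nonfaulty \peer\ has placed on its blacklist by the end of epoch $i$ is in fact Byzantine, and (b) $b_{i'}^M = b_{i'}$ for every \nonfaulty \peer\ $M$ and every $i' \le i$. Then $\mathcal{D}_0$ is the sure event, since before the outer loop no \peer\ is blacklisted and no bit has been processed. Because a blacklist only grows and condition (b) for index $i$ subsumes condition (b) for all smaller indices, these events form a nested chain $\mathcal{D}_0 \supseteq \mathcal{D}_1 \supseteq \cdots \supseteq \mathcal{D}_n$. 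Fix the target exponent $c \ge 1$ and let $\delta$ be the constant that Lemma~\ref{lem:correctness} furnishes for this $c$; the same $\delta$ is used in every epoch, which is legitimate as it depends only on $c$ and not on $i$.

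Second, I would read Lemma~\ref{lem:correctness} as exactly the assertion that for each $i \in [1,n]$, conditioned on $\mathcal{D}_{i-1}$ the event $\mathcal{D}_i$ fails with probability at most $n^{-(c+1)}$, so that $\Pr[\mathcal{D}_{i-1}\setminus\mathcal{D}_i] \le n^{-(c+1)}$. (This is precisely why the induction must carry the joint event $\mathcal{D}_i$ and not just the correctness part $C_i$: the hypothesis of Lemma~\ref{lem:correctness} requires \emph{both} $BL_{i-1}$ and $C_{i-1}$, so the blacklist invariant has to travel along.) Telescoping over the nested chain then gives $\Pr[\lnot\mathcal{D}_n] = \Pr[\mathcal{D}_0 \setminus \mathcal{D}_n] = \sum_{i=1}^{n} \Pr[\mathcal{D}_{i-1}\setminus\mathcal{D}_i] \le n \cdot n^{-(c+1)} = n^{-c}$. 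Finally, $\mathcal{D}_n$ contains in particular condition (b) with $i = n$, i.e.\ $b_{i'}^M = b_{i'}$ for all \nonfaulty \peers\ $M$ and all $i' \in [1,n]$, which is exactly the statement that every \nonfaulty \peer\ $M$ outputs $\invec^M = \invec$; hence this holds with probability at least $1 - n^{-c}$.

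The step I expect to be the only genuine subtlety — there is essentially no computation here — is making the conditional estimates chain correctly: since Lemma~\ref{lem:correctness} is a conditional guarantee, one must phrase the argument through the nested ``everything correct up to epoch $i$'' events (equivalently, bound the probability of the first epoch in which an invariant breaks), and one should note that the per-epoch union bounds over the $k$ \peers\ and over the $O(\lg(\gamma k))$ inner rounds are already absorbed inside Lemma~\ref{lem:correctness}, so the outer argument pays only the factor $n$ for the number of epochs and nothing more.
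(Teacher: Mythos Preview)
Your proposal is correct and follows the same approach as the paper, which simply states ``Applying Lemma~\ref{lem:correctness} and taking the union bound over all $n$ epochs.'' You have merely made explicit the nested events $\mathcal{D}_i$ and the telescoping that underlie that one-line union bound; this added rigor is welcome but does not constitute a different route.
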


\paragraph{Query complexity analysis}  In the absence of Byzantine peers, the total number of (necessary) queries is $O(n)$, so the average cost per \peer is clearly $\Query = O\left(\frac{n}{\goodfrac k}\right)$. The reason for the additional wasteful queries is two-fold. First, the fact that the algorithm is randomized and must succeed in learning all bits with high probability requires some redundancy in querying. Second, Byzantine \peers spread fake information, forcing \nonfaulty \peers to perform queries to blacklist the culprits and clarify the true values of $\invec$.

Let $\randvec(M)$ denote the sequence of independent random coins flipped by \peer $M$ during execution (each with a bias of $\frac{1}{\goodfrac k}$ towards heads). For notational convenience, we will often write $\randvec$ \st{$(M)$} to mean $\randvec(M)$, when the underlying peer $M$ is clear from the context. Let $\randvec_{i,j}$ denote the subsequence of $\randvec$ that was drawn at the beginning of round $j$ of the epoch $i$, and let $\randvec_i$ denote the subsequence of $\randvec$ that was drawn during the entire epoch $i$. Respectively, let $R = \modof{\randvec}$, $R_i = \modof{\randvec_i}$ and $R_{i,j} = \modof{\randvec_{i,j}}$.

The variable $S_{i, j}$ used in the algorithm denotes the number of coins of $\randvec_{i,j}$ that turned heads. Similarly, let $S_i$ denote the number of coins of $\randvec_i$ that turned heads.
\begin{lemma} \label{lem:complexity}
    For any fixed $c' \ge 1$, there is a suitably small fixed choice for the constant $\delta$ such that Algorithm~\ref{alg:improved} has query complexity $\Query = \optqexact$ with probability $1 - \frac{1}{n^{c'}}$.
\end{lemma}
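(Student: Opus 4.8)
Fix a \nonfaulty \peer $M$; since $\Query$ is the maximum over \nonfaulty \peers, it suffices to bound $M$'s total number of queries and then take a union bound over the at most $k$ \nonfaulty \peers. I would begin by conditioning on the event $\mathcal E$ that every \nonfaulty \peer votes correctly in every epoch, i.e.\ $b_{i}^{M'}=b_{i}$ for all $i\in[1,n]$ and all \nonfaulty $M'$; by Lemma~\ref{lem:correctness} and its corollary (invoked with a correctness exponent slightly larger than $c'$), $\Pr[\mathcal E]\ge 1-1/n^{c'+2}$. The second, purely structural, point is that within a single epoch $i$ peer $M$ queries the \Source\ \emph{at most once}: exactly once when epoch $i$ is a query epoch (it reaches line~\ref{l:query learning}) and never when it is a gossip epoch. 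Hence $\Query(M)$ equals the number of query epochs of $M$, and it is this count that I split, according to the learning round $\ell(i)$, into \emph{first-round} queries ($\ell(i)=f$) and \emph{conflict} queries ($\ell(i)>f$).

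For the first-round queries, note that in epoch $i$ the event ``$\ell(i)=f$ and $M$ queries'' is precisely the event that at least one of the $2^f$ independent coins (each with head-bias $1/\goodfrac k$) tossed at the very start of epoch $i$ comes up heads, which has probability $P_f$; crucially, across the $n$ epochs these events are mutually independent, since they are determined by the disjoint coin batches $\randvec_{i,f}$. By Lemma~\ref{lem: bound Pj - 2}, $P_f=\Theta(2^f/(\goodfrac k))$, and as $f=\lceil\delta+\lg\lg n\rceil$ we have $2^f=\Theta(2^\delta\lg n)$, so the expected number of first-round query epochs is $\mu=\Theta(2^\delta n\lg n/(\goodfrac k))$, and $\mu\ge 2^{\delta-1}\lg n$ because $n\ge k$. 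A Chernoff bound then shows that this count exceeds $2\mu$ with probability at most $e^{-\mu/3}=n^{-\Theta(2^\delta)}\le 1/n^{c'+2}$ once $\delta$ is taken large enough; this, together with the (possibly stronger) requirement inherited from Lemma~\ref{lem:correctness}, fixes the constant $\delta$. Thus with the desired probability the first-round query epochs number $O(n\lg n/(\goodfrac k))=\optqexact$.

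For the conflict queries I would combine blacklisting with $\mathcal E$. If $\ell(i)=j>f$ then at round $j$ of epoch $i$ peer $M$ is in the ``else'' branch, so both $COUNT^0_i$ and $COUNT^1_i$ are at least $\nu 2^{j}=2^{j-2}$ at that moment; under $\mathcal E$ every \nonfaulty \peer voted $b_i$, so the at least $2^{j-2}$ \peers from which $M$ received the value $1-b_i$ are all Byzantine, and $M$ adds every one of them to $B$ at the end of epoch $i$ (the contradictory-vote update). Since blacklisting is permanent and blacklisted \peers are never counted again, from $M$'s vantage point each Byzantine \peer is charged this way in at most one epoch, and at most $\byzfrac k$ \peers are ever Byzantine; hence $\sum_{i:\,\ell(i)>f}2^{\ell(i)-2}\le \byzfrac k$. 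As $\ell(i)\ge f+1$ for every conflict epoch, the number of conflict query epochs is at most $\byzfrac k/2^{f-1}=O(\byzfrac k/\lg n)$.

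Combining, under $\mathcal E$ and the Chernoff event we get $\Query(M)=O\!\left(\frac{n\lg n}{\goodfrac k}+\frac{\byzfrac k}{\lg n}\right)$, and then one argues — using $n\ge k$ and the standing assumption $2\goodfrac k>2^\delta\lg^2 n$ (the complementary case being handled by the trivial algorithm) — that the second term is dominated by the first, giving $\Query(M)=\optqexact$. A final union bound over the failure of $\mathcal E$, the failure of the Chernoff bound, and the $\le k\le n$ \nonfaulty \peers keeps the overall failure probability below $1/n^{c'}$ after one last mild increase of $\delta$. I expect the conflict-query step to be the main obstacle: one must make the charging argument fully rigorous (each wrong-voting Byzantine \peer consumed exactly once; the counter remaining above $\nu 2^{j}$ being exactly what forces the conflict to survive to round $j$; the blacklist update firing at the right time) and then verify carefully that the resulting $O(\byzfrac k/\lg n)$ term is indeed absorbed by $\optqexact$ throughout the parameter range in which Algorithm~\ref{alg:improved} is invoked.
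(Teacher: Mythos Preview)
Your decomposition and the first-round analysis are fine, and your charging inequality $\sum_{i:\ell(i)>f}2^{\ell(i)-2}\le\byzfrac k$ is correct. The gap is exactly where you suspected: the absorption step. The bound ``number of conflict-query epochs $\le \byzfrac k/2^{f-1}=O(\byzfrac k/\lg n)$'' is a valid upper bound, but it is \emph{not} $O(n\lg n/(\goodfrac k))$ throughout the parameter range. Concretely, take $k=n$ and $\byzfrac=\goodfrac=1/2$: then $n\lg n/(\goodfrac k)=\Theta(\lg n)$ while $\byzfrac k/\lg n=\Theta(n/\lg n)$, and the latter dominates. Neither $n\ge k$ nor the standing assumption $2\goodfrac k>2^{\delta}\lg^{2} n$ rescues you --- the assumption is a \emph{lower} bound on $\goodfrac k$, whereas absorbing your conflict term would require an upper bound of the form $\byzfrac\goodfrac k^{2}=O(n\lg^{2} n)$, i.e.\ essentially $k=O(\sqrt{n}\lg n)$, which is not available.

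What your argument is missing is that a conflict-query epoch not only consumes $\ge 2^{\ell(i)-2}$ fresh Byzantine \peers, it also consumes a rare ``heads'' event (probability $\approx 2^{\ell(i)}/(\goodfrac k)$ at round $\ell(i)$); you used the first fact but discarded the second. The paper exploits both simultaneously by counting not conflict epochs but the total number of \emph{coin flips} $R=\sum_i R_i$. For every epoch one has $R_i\le 8B(i)+2^{f}$ (where $B(i)$ is the number of \peers $M$ blacklists in epoch $i$), which under $\mathcal E$ yields the deterministic bound $R\le 8\byzfrac k+n\cdot 2^{f}$. Since every query epoch has at least one head among its coins, the query count is at most $S$, the total number of heads in $\randvec$; a single Chernoff bound on $S$ (using $\Exp[S]=R/(\goodfrac k)$ together with the lower bound $R\ge n\cdot 2^{f}$ to make the tail small) then gives $\Query(M)\le 2R/(\goodfrac k)\le 16\byzfrac/\goodfrac + 2^{f+1}n/(\goodfrac k)$. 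Here the conflict contribution is $O(\byzfrac/\goodfrac)=O(1/\goodfrac)$, not $O(\byzfrac k/\lg n)$; the ratio is precisely the head-probability factor you dropped. You can salvage your split by separately Chernoff-bounding the heads among the ``extra'' coins (those beyond the first $2^{f}$ per epoch, whose total number is $\le 8\byzfrac k$), but at that point you are essentially reproducing the paper's argument.
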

\begin{proof}
    Throughout the proof, we consider a non-faulty peer $M$.

    For every $1\le i\le n$, let $B(i)$ denote the number of \peers that were blacklisted by $M$ in epoch $i$ and let $C(i)$ denote the number of queries performed by $M$ in epoch $i$. (Note that $C(i)$ is 0 if $i$ is a \emph{gossip epoch} and 1 if $i$ is a \emph{query epoch}.)

    
    As $C(i)=0$ for a gossip epoch $i$, the total cost of gossip epochs is
    \begin{equation} \label{eq:Cgossip}
        C^{gossip} ~=~ \sum_{\mbox{\footnotesize gossip}~i} C^{gossip}(i) ~=~ 0.
    \end{equation}
    Hence, we only need to analyze query epochs. We bound the number of queries in these epochs by first bounding the number of random coins flipped in these epochs.
    
    Consider such an epoch $i$, with learning round $\ell = \ell(i)$. There are two cases. The first is when $\ell = f$. In this case, perhaps no faulty \peers were blacklisted, so $B(i)\ge 0$ and the number of random Coins flipped in this epoch is $R_i = R_{i,f} = 2^f$.
    
    The second case is when $\ell > f$. The fact that $M$ had to query in round $\ell$ implies that gossip learning was not possible, so both $COUNT^0_i \ge \nu 2^j = 2^{\ell-2}$ and $COUNT^1_i\ge 2^{\ell-2}$. This, in turn, implies that the number of faulty \peers that $M$ gets to blacklist in this epoch is $B(i) \ge 2^{\ell-2}$. On the other hand, the number of Coins used during this epoch is $R_i = \sum_{j=f}^\ell R_{i,j} = 2^{\ell+1}-2^f$.
    
    Combining both cases, we get that for every epoch $i$, $B(i) \ge (R_i-2^f)/8$. Summing over all $i$, we get that
    \begin{equation*}
        \beta k ~\ge~ B ~=~ \sum_i B(i) ~\ge~ \sum_i \frac{R_i-2^f}{8} ~=~ \frac{R-2^fn}{8}~.
    \end{equation*}
    Rewriting, and recalling that $f=\lceil{\delta + \lg \lg n}\rceil$, we get
    \begin{equation} \label{eq:bound R}
        n\cdot 2^\delta\lg n ~\le~ R ~\le~ 8\beta k + 2^f n~\le~ 8\beta k + n \cdot 2^\delta \lg{n}\text{,}
    \end{equation}
    where the first inequality follows from the fact that in every epoch $i$, $R_i\ge R_{i,f} \ge 2^\delta\lg n$.
    
    Note that $C(i)\le S_i$ for every $i$, hence
    \begin{equation} \label{eq:Cquery}
        C^{query} ~=~ \sum_{\mbox{\footnotesize query}~i} C(i) \le \sum_{\mbox{\footnotesize query}~i} S_i ~=~ S.
    \end{equation}
    Recalling that $\randvec$ is a sequence of $R$ independent Bernoulli variables with probability $\frac{1}{\goodfrac k}$ for heads, we have that $\Exp[S]  =  \frac{R}{\goodfrac k}$, and applying Chernoff bound we get
    \begin{equation} \label{eq: S vs R}
        \Pr[S > 2R/\gamma k] ~=~ \Pr[S > 2 \Exp[S]] ~\le~ e^{-\Exp[S]/3} ~=~ e^{-R/3\gamma k} ~\le~ e^{-n 2^\delta\lg n/3\gamma k} ~\le~ \frac{1}{n^{c'}}~\text{,}
    \end{equation}
    where the penultimate inequality relies on the left side of Eq. \eqref{eq:bound R} and the last one holds when $\delta \ge \lg (3\gamma c')$, also relying on the assumption that $k \le n$. Combining Equations \eqref{eq:Cquery}, \eqref{eq: S vs R}, and the right side of \eqref{eq:bound R}, we get that with probability at least $1 - \frac{1}{n^{c'}}$,
    \begin{equation} \label{eq:Cquery-sum}
        C^{query} ~\le~ S ~\le~ \frac{2R}{\gamma k} ~\le~ \frac{2(8\beta k + cn\lg n)}{\gamma k}  ~=~  \optqexact.
    \end{equation}
    Finally, the lemma follows by Equations {\eqref{eq:Cgossip}} and {\eqref{eq:Cquery-sum}}.
\end{proof}

Each epoch lasts $O(\lg{k})$ phases, so the time complexity is $O(n\lg{k})$. Each \peer sends at most $n$ bits to each of the other machines, so the total number of messages sent is $O(nk^2)$. Thus, we have the following theorem.
\begin{theorem} \label{thm:opt-alg-p2p}
    In the synchronous point-to-point model with Byzantine failures, there is a randomized algorithm for \download\ such that with high probability, $\Query = \optqexact$, $\Time = O(n\log{k})$, and $\Message = O(nk^2)$. Moreover, it only requires messages of size $O(1)$.
\end{theorem}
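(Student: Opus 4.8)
The plan is to assemble the theorem from the two technical lemmas already established for Algorithm~\ref{alg:improved}, together with two routine counting arguments for time and message complexity. First I would fix a target failure exponent, apply the corollary of Lemma~\ref{lem:correctness} and apply Lemma~\ref{lem:complexity}: choose the constant $\delta$ large enough to satisfy simultaneously the (finitely many) lower bounds on $\delta$ demanded by both lemmas — e.g.\ $\delta \ge \max\{\lg(45(c+2))-2,\ \lg(3\gamma c')\}$ for suitable $c,c' \ge 1$ — while keeping $\delta$ small enough that the standing assumption $2\gamma k > 2^\delta \lg^2 n$ holds. Recall that when $k \le (2^\delta/2\gamma)\lg^2 n$ the claimed bound $\optqexact = \tilde{O}(n)$ is met trivially by the exhaustive algorithm (every \peer queries all of $\invec$), so we may indeed assume the standing assumption. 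With this choice of $\delta$, the corollary gives $\invec^M = \invec$ for every non-faulty \peer $M$ with probability at least $1 - 1/n^{c}$, and Lemma~\ref{lem:complexity} gives $\Query = \optqexact$ with probability at least $1 - 1/n^{c'}$; a union bound over these two events yields correctness and the query bound simultaneously, with high probability.

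Next I would bound the round complexity. Each epoch is one pass through the inner {\tt for} loop (line~\ref{for:inner}), whose index $j$ ranges from $f = \lceil \delta + \lg\lg n\rceil$ to $\lceil \lg(\gamma k) - \lg\lg n\rceil$; since $f = O(\lg\lg n) = O(\lg k)$ under the standing assumption $k = \Omega(\lg^2 n)$, this is $O(\lg k)$ rounds, each consisting of one send/receive exchange plus instantaneous local computation. There are $n$ epochs, so $\Time = O(n\lg k)$.

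For message complexity and size, observe that within a given epoch $i$ a non-faulty \peer $M$ broadcasts a vote $\langle M,i,b_i^M=b\rangle$ at most once — exactly in its learning round $\ell(i)$, after which the guard $I_{voted}=1$ suppresses any further votes — and the blacklist updates are purely local bookkeeping that generate no messages. Hence each non-faulty \peer sends at most $n$ messages along each of its $k-1$ outgoing links, so $\Message = O(nk^2)$. Finally, since the network is synchronous and each message travels over a dedicated sender-labelled link, the sender identity $M$ and the epoch index $i$ carried by a vote are already determined by the schedule and need not be transmitted; the only information content of a vote is the single bit $b$, giving $\Messagesize = O(1)$.

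This theorem is essentially an assembly step, so there is no genuine obstacle — the substance lives in Lemmas~\ref{lem:correctness} and~\ref{lem:complexity}. If anything requires care it is the bookkeeping around $\delta$: one must verify that a single constant $\delta$ can meet the finitely many lower-bound constraints of the two lemmas while still respecting $2\gamma k > 2^\delta \lg^2 n$, and that the two high-probability guarantees compose under a union bound (so that, say, setting $c = c' + 1$ in the above still yields an overall $1 - O(1/n^{c'})$ failure bound). Both points are immediate once spelled out.
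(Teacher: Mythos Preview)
Your proposal is correct and follows essentially the same approach as the paper: the paper's proof is just the two sentences preceding the theorem, noting that each epoch lasts $O(\lg k)$ rounds (hence $\Time = O(n\lg k)$) and that each \peer sends at most $n$ bits to each other \peer (hence $\Message = O(nk^2)$), with correctness and the query bound inherited from the corollary of Lemma~\ref{lem:correctness} and from Lemma~\ref{lem:complexity}. Your write-up simply spells out the $\delta$-bookkeeping, the union bound, and the $O(1)$ message-size justification more carefully than the paper does.
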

\section{A Lower Bound} \label{section-lower-bounds}

Having established the existence of a query-optimal algorithm for the \download\ problem with an arbitrary bound $\byzfrac<1$ on the fraction of faulty peers, we turn to the issue of time complexity and derive faster \download\ algorithms. In this section, we ask whether a single round algorithm exists (assuming arbitrarily large messages can be sent in a single round), other than the trivial algorithm where every \peer queries every bit. We show that this is impossible: if we insist on a single round algorithm, then every \peer must query all bits.

\paragraph{Notations and Terminology.}  Let $V$ be a set of $n$ \peers in the network. Recall that $\beta$ is the maximum allowed fraction of Byzantine \peers in any execution and $\gamma$ is the minimum fraction of \nonfaulty \peers ($\beta+\gamma=1$). In this section we assume, for simplicity, that $n$ is odd, and fix $\beta=(n-1)/2n$, i.e., we assume that at most $(n-1)/2$ \peers might fail in any execution.
Let the input vector be denoted by $\invec  =  \left\{b_1, b_2, \ldots, b_n\right\}$.

For positive integers $q$ and $t$, let $\mathcal{A}(q, t)$ be the class of all randomized Monte-Carlo algorithms with query complexity at most $q$ and time complexity at most $t$. (Note that this class encompasses the class of deterministic algorithms as well.) We fix an algorithm $A \in \mathcal{A}(n - 1, 1)$, and from this point on we make our arguments with respect to this algorithm $A$.

In every execution of algorithm $A$, the first (and only) round starts (at the beginning of the query sub-round) with each \peer $v\in V$ making some random coin flips and generating a random bit string $R_v$. The resulting \emph{random profile} of the execution is $R = \langle R_1, \ldots, R_k \rangle$, the collection of random strings selected by the $k$ peers. Based on its random bit string $R_v$, each \peer $v$ computes a list of indices that it queries (unless it fails).

\begin{defn} \label{def: weak adversary}
    A \emph{weak} adversary is one that fixes its adversarial strategy \emph{before} the start of the execution. This strategy is composed of selecting a set $\byz\subseteq V$ of \peers to fail in the execution, subject to the constraint that $|\byz| \leq \beta n$, and a set of rules dictating the actions of the \peers in $\byz$, including the messages that are to be sent by each Byzantine \peer. 
\end{defn}

\begin{defn} \label{def: execution}
    We define an \emph{execution} of the algorithm $A$ as the 3-tuple $\EXEC = (\invec, \byz, R)$, where $\invec$ is the input vector (which can be chosen by the adversary), $\byz$ is the set of Byzantine \peers selected by the adversary, and  $R$ is the random profile selected by the \peers. We denote $\good=V\setminus \byz$. We say that the execution $\EXEC(\invec,\byz,R)$ \emph{succeeds} if all \peers acquire all $n$ bits. 
\end{defn}

A key observation is that for every non-faulty peer $v$, the list of queried indices is determined by its random bit string $R_v$, and is independent of $\invec$ and $\byz$. Hence, the random profile $R$ fully determines the list of queried indices of every \peer.

We refer to a random profile $R$ for which some \peer fails to query the $i^{\text{th}}$ input bit as an \emph{$i$-defective} random profile. Let
\begin{eqnarray*}
\RR &=& \{ R \mid R ~\mbox{is a random profile of some execution of}~ A \},
\\
\RR_i &=& \{ R \in \RR \mid R ~\mbox{is}~ i\mbox{-defective} \}.
\end{eqnarray*}

Note that each random profile $R$ may occur with a different probability $p_R$, such that $\sum_{R\in\RR} p_R = 1$.
For any subset $\RR' \subseteq \RR$, let
$p(\RR)= \sum_{R\in\RR'} p_R$.

\begin{lemma} \label{lemma-BE-ell}
    For the algorithm $A$, there is an index $\ell=\ell(A)$, $1\le\ell\le n$, such that $p(\RR_\ell) ~\ge~ 1-e^{-1}$.
\end{lemma}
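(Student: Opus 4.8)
The plan is to use a counting/averaging argument over the $n$ coordinates. Since $A \in \mathcal{A}(n-1, 1)$, in every execution each non-faulty peer queries at most $n-1$ indices, hence misses at least one index. Fix any random profile $R \in \RR$. For each peer $v$, its query list is determined by $R_v$ alone; since $|R_v$-list$| \le n-1$, there is at least one index $i_v \in [1,n]$ not queried by $v$. In particular $R$ is $i_v$-defective, so $R \in \RR_{i_v}$ for at least one $i_v$; thus $\RR = \bigcup_{i=1}^n \RR_i$. This immediately gives $\sum_{i=1}^n p(\RR_i) \ge p(\RR) = 1$, so by averaging there exists an index $\ell$ with $p(\RR_\ell) \ge 1/n$. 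That is weaker than what is claimed, so the argument needs to be sharpened.

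To get the stronger bound $1 - e^{-1}$, I would argue that many peers must be missing \emph{the same} index. The key point: the adversary is weak (Definition~\ref{def: weak adversary}), so consider the adversary that corrupts the $(n-1)/2$ peers with the smallest IDs (or any fixed set of size $\beta n$) — but more usefully, note that to make the execution fail it suffices that some non-faulty peer misses bit $i$ \emph{and} no peer (faulty or not) can supply bit $i$ truthfully. Actually the cleaner route: for a fixed profile $R$, let $D_i(R)$ be the set of peers not querying index $i$. Each peer lies in $D_i(R)$ for at least one $i$, so $\sum_i |D_i(R)| \ge n$ (summing over the $n$ peers, each counted $\ge 1$ time), hence some index $i^*(R)$ has $|D_{i^*(R)}(R)| \ge \lceil n/n \rceil = 1$ — again too weak pointwise. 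The actual leverage must come from: if \emph{every} peer missed a \emph{distinct} index then no index is missed by two peers, but then the adversary can corrupt the single peer missing index $i$ for the "worst" $i$... This still only blacklists one peer per index.

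Let me reconsider — the right argument is probabilistic over a \emph{random} choice of $\ell$ coupled with the structure. I would instead fix the weak adversary's set $\byz$ to be a \emph{uniformly random} subset of size $\beta n = (n-1)/2$, chosen before execution, independent of everything. For a profile $R$ and index $i$, the execution is "$i$-bad" if some non-faulty peer misses $i$. Since each peer misses $\ge 1$ index under $R$, and there are $(n-1)/2$ faulty peers, for a random $\byz$ the probability that the particular peer missing a given index survives is roughly $1/2$; summing the indicator that index $i$ is missed by a surviving peer over all $i$ and taking expectations should yield expected number of bad indices $\ge$ constant fraction of $n$, and then the standard inequality $1 - (1-1/n)^n \ge 1 - e^{-1}$ enters. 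Concretely: condition on $R$; there are at least $n$ (peer, missed-index) incidences, so at least $n/((n-1)/2+1) \approx 2$ ... hmm, the averaging still needs care. The cleanest version: for each peer $v$ pick one missed index $m_v(R)$; the multiset $\{m_v(R) : v \in V\}$ has size $n$ over a universe of size $n$, so the number of \emph{distinct} values, call it $d(R)$, satisfies (by inclusion–exclusion / balls-in-bins) nothing better than $d(R)\ge 1$ in the worst case, BUT if $d(R)$ is small then some index is missed by many peers and survives corruption with high probability; if $d(R)$ is large then a random index is likely missed by someone who might survive. Balancing these two regimes against the weak adversary, picking $\ell$ to be the index maximizing $\Pr_R[\text{some peer in } \good \text{ misses } \ell]$ over the random-$\byz$ experiment, yields $p(\RR_\ell) \ge 1 - e^{-1}$ by the $(1-1/n)^n$ bound.

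\textbf{Main obstacle.} The crux — and what I expect to be the hard part — is making the two regimes ("few distinct missed indices" vs. "many distinct missed indices") interact correctly with the weak adversary's freedom to choose $\byz$: one must choose $\byz$ (or average over $\byz$) so that for the extremal index $\ell$, across a $1-e^{-1}$ fraction of the probability mass of random profiles, at least one \emph{non-faulty} peer misses $\ell$. The bound $1-e^{-1}$ strongly suggests the final step is $1 - \prod (1 - 1/n) \ge 1 - (1-1/n)^n \ge 1 - e^{-1}$ applied to $n$ independent-ish "this peer misses $\ell$" events, so the technical work is to set up the right product-of-complements and justify the independence (or negative correlation) needed for that inequality. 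I would organize the proof around identifying $\ell$ as an $\arg\max$, then lower-bounding $p(\RR_\ell)$ by this product argument, deferring the combinatorial balancing to a short sub-claim.
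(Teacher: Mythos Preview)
Your proposal rests on a misreading that sends most of the argument in the wrong direction. The set $\RR_i$ is the set of random profiles for which \emph{some} peer---any peer, faulty or not---fails to query bit $i$. The adversary's choice of $\byz$ plays no role in this definition, so the entire middle portion of your proposal (random $\byz$, surviving peers, the two-regime balancing) is irrelevant to this lemma. The adversary only enters in the \emph{next} lemma of the paper, where indistinguishability is argued.

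Once you drop the adversary, the proof is short, and you actually brush against it in your last paragraph. The points you are missing are:
(i) the query list of peer $v$ depends only on $R_v$, and the strings $R_v$ for distinct peers are independent; hence if $p_v(i)$ is the probability that $v$ does not query bit $i$, then $q(i) := \Pr[R \notin \RR_i] = \prod_v (1 - p_v(i))$ \emph{exactly}---there is no ``independent-ish'' to justify;
(ii) since each peer makes at most $n-1$ queries in every execution, $\sum_{i=1}^n p_v(i) \ge 1$ for every $v$;
(iii) choose $\ell = \arg\min_i q(i)$ and bound
\[
q(\ell)^n \;\le\; \prod_{i=1}^n q(i) \;=\; \prod_v \prod_i (1 - p_v(i)) \;\le\; \prod_v e^{-\sum_i p_v(i)} \;\le\; \prod_v e^{-1} \;=\; e^{-n},
\]
so $q(\ell) \le e^{-1}$ and $p(\RR_\ell) = 1 - q(\ell) \ge 1 - e^{-1}$. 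The paper phrases this as ``$q(i)$ is maximized when $p_v(i)=1/n$ for all $v,i$, giving $q(i)\le (1-1/n)^n<e^{-1}$''; the product/geometric-mean computation above is one clean way to make that rigorous.

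Your initial averaging $\sum_i p(\RR_i) \ge 1$ is correct but, as you noted, far too weak; the strengthening comes not from the adversary but from exploiting the product structure across the $n$ independent peers.
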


\begin{proof}
Let $\EXSET$ denote the set of executions of algorithm $A$. For each \peer $v$, let $\EXSET_v$ be the set of executions in which $v$ does not fail, and let $r=|\EXSET_v|$. 
For $i \in [1, n]$, let $r_v(i)$ denote the number of executions in $\EXSET_v$ in which $v$ does not query $i$, and let $p_v(i) = r_v(i)/r$ be the probability that $v$ does not query $i$.

Note that by definition of the class $\mathcal{A}(n-1,1)$, $\sum_i r(v,i) \ge r$, because for every execution $e\in \EXSET_v$ there is at least one bit $i$ that $v$ does not query in $e$, or in other words, every such execution is counted in one of the counters $r(v,i)$. It follows that
\begin{equation} \label{eq:P ge 1}
    \sum_{i = 1}^n p_v(i) ~=~ \frac{\sum_i r(v,i)}{r} ~\geq~  1 ~~~~~~ \mbox{for every \peer}~ v.
\end{equation}
The adversary can calculate the probabilities $p_v(i)$
for every \peer $v$. Subsequently, it can compute for every index $i$ the probability that \emph{no} \peer skips $i$, namely, 
\begin{equation} \label{eq:qi formula}
    q(i) ~=~ \prod_{v} (1 - p_v(i)),
\end{equation}
and pick $\ell$ to be the index minimizing $q(i)$. We claim that for such an index $\ell$, $p(\RR_\ell) ~\ge~ 1-e^{-1}$. To see this, note that it can be shown that under constraints {\eqref{eq:P ge 1}} and {\eqref{eq:qi formula}}, $q(i)$ is maximized when $p_v(i) = \frac{1}{n}$ for every $v$ and every $i$. Then
\begin{equation*}
    q(i) ~=~ \prod_{v} (1 - p_v(i))   ~\leq~   \left(1 - \frac{1}{n}\right)^{n}
    ~<~   e^{-1}~.
\end{equation*}
As $p[\RR_\ell] = 1-q(\ell)$, the lemma follows.
\end{proof}

For an $\ell$-defective random profile $R\in \RR_\ell$, denote by $v(R)$ the \emph{defective} \peer $v$ that does not query the $\ell^{\text{th}}$ input bit. (In case more than one such \peer exists, we pick one arbitrarily.)
\begin{lemma}
    There is a (non-adaptive) adversarial strategy that foils Algorithm $A$ with probability at least $(1-e^{-1})/4$.
\end{lemma}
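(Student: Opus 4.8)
The plan is to leverage Lemma~\ref{lemma-BE-ell} to obtain an index $\ell$ such that an $\ell$-defective random profile occurs with probability at least $1-e^{-1}$, and then design a fixed (weak, non-adaptive) adversary whose job is to ``fool'' the defective peer $v(R)$ regarding the value of bit $b_\ell$. The key point is that the adversary must commit to its strategy before seeing $R$, so it cannot know in advance which peer will be defective, nor will it know $\ell$ --- wait, actually $\ell = \ell(A)$ depends only on $A$, so the adversary \emph{can} compute $\ell$ in advance; what it cannot predict is the identity of $v(R)$. The adversary's strategy is therefore: guess a ``scapegoat'' structure that works regardless of which peer turns out to be defective, by exploiting the fact that $\beta = (n-1)/2n$ lets it corrupt almost half the peers.

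First I would fix the input bit $b_\ell$ by a fair coin flip (or, equivalently, argue over the two choices $b_\ell = 0$ and $b_\ell = 1$ and average). Next, for the corruption set: since the defective peer $v(R)$ learns $b_\ell$ only through messages, and it is non-faulty in the execution it participates in, I want an adversary that makes $v(R)$ receive a ``balanced'' or misleading picture. The cleanest approach: partition $V \setminus \{u\}$ for a fixed peer $u$... but $v(R)$ varies, so instead I would use the standard indistinguishability/partition argument. Pick the Byzantine set $\byz$ of size $(n-1)/2$. Consider two ``worlds'': one with input having $b_\ell = 0$ and one with $b_\ell = 1$. The adversary will, in world-0, have its $(n-1)/2$ Byzantine peers behave exactly as honest peers would in world-1 (claiming $b_\ell = 1$), and symmetrically. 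A peer that does \emph{not} query $b_\ell$ itself sees, from its local view, the honest majority-minus-one plus the Byzantine peers; if the adversary arranges the counts so that exactly half the purported votes say $0$ and half say $1$, the defective peer cannot distinguish the two worlds and must err in one of them with probability $\ge 1/2$.

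The quantitative bookkeeping is where the factor $1/4$ enters: conditioned on $R \in \RR_\ell$ (probability $\ge 1-e^{-1}$), the defective peer $v(R)$ fails with probability $\ge 1/2$ over the choice of which world (i.e.\ value of $b_\ell$) the adversary presents --- but the adversary must fix its strategy before $R$, so I would have it commit to a fixed choice of Byzantine set and a fixed target world, then argue that for \emph{at least one} of the two symmetric strategies, the failure probability conditioned on $\RR_\ell$ is $\ge 1/2$; combined with $p(\RR_\ell) \ge 1-e^{-1}$ this gives failure probability $\ge (1-e^{-1})/2$, and a further factor of $1/2$ is lost in making the Byzantine-set choice oblivious to $v(R)$ (e.g.\ the adversary picks a set that contains $v(R)$'s ``mirror'' with probability $\ge 1/2$, or picks one of two complementary halves). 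That yields the claimed $(1-e^{-1})/4$.

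The main obstacle I anticipate is handling the non-adaptivity cleanly: the adversary must fix both $\byz$ and the Byzantine message rules \emph{before} $R$ is drawn, hence before knowing $v(R)$. The trick is that the Byzantine peers' behavior (pretend the bit has the opposite value) does not depend on who the defective peer is --- it is a uniform rule --- so the only real dependence is that we want $v(R) \notin \byz$ (so that $v(R)$ is genuinely non-faulty and its failure counts). Since $|\byz| = (n-1)/2$ and $|\good| = (n+1)/2$, a random peer is honest with probability $> 1/2$; by an averaging/symmetrization argument over the choice of $\byz$ among complementary halves, there is a fixed $\byz$ for which $\Pr[v(R) \notin \byz \mid \RR_\ell] \ge 1/2$. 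Conditioned on that, the indistinguishability argument forces an error with probability $\ge 1/2$ over the two input worlds, and picking the better world removes that last factor, leaving $(1-e^{-1}) \cdot \tfrac12 \cdot \tfrac12 = (1-e^{-1})/4$. I would double-check the exact placement of the three $1/2$-type losses to make sure they multiply to $1/4$ and not $1/8$; it may be that non-adaptivity of the \emph{input} choice is free (the adversary simply fixes $b_\ell=0$ and the honest-in-the-other-world behavior is what creates ambiguity), in which case only two factors of $1/2$ are needed, matching the statement exactly.
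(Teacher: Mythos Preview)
Your overall strategy coincides with the paper's: use Lemma~\ref{lemma-BE-ell} to pin down $\ell$, set up two ``worlds'' with $b_\ell$ flipped, instruct the Byzantine peers to run the honest protocol on the opposite world, and argue that the defective peer $v(R)$ cannot distinguish the two executions. The two factors of $1/2$ you identify---one from $v(R)\notin\byz$ and one from choosing the better of the two inputs---are exactly the ones in the paper.

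Where your write-up wobbles is precisely the point you yourself flag as the main obstacle: the order of quantifiers in fixing $\byz$. Your plan is ``fix a single $\byz$ with $\Pr[v(R)\notin\byz]\ge 1/2$, then pick the better world.'' The trouble is that the indistinguishable partner of $\EXEC(\invec_0,\byz,R)$ is $\EXEC(\invec_1,\invB_R,R)$ where $\invB_R = V\setminus(\byz\cup\{v(R)\})$ depends on $R$. So ``picking the $\invec_1$ world'' does not yield a single non-adaptive strategy; it yields one whose Byzantine set varies with the random profile.

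The paper sidesteps this by having the adversary choose $\byz$ \emph{uniformly at random} over all $(n{-}1)/2$-subsets (still non-adaptive, since this choice is made before seeing $R$). For each fixed $R$ with $v(R)\notin\byz$, the map $\byz\mapsto\invB_R$ is an involution on the sets avoiding $v(R)$; hence when you average over $\byz$, the $\invec_1$-failure term $\sum_\byz p_{fail}^1(\invB_R,\RR)$ equals $\sum_\byz p_{fail}^1(\byz,\RR)$. This symmetrization is what lets you sum $p_{fail}^0+p_{fail}^1>(1-e^{-1})/2$ and then take the larger of the two, finally derandomizing $\byz$ by averaging if you want a fully deterministic adversary. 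Your sketch is close, but you should replace the ``fix $\byz$ first'' step with this randomize-then-symmetrize-then-average argument.
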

\begin{proof}
The proof is by an \emph{indistinguishability} argument. 
Let $\ell=\ell(A)$ be the index whose existence is asserted by Lemma \ref{lemma-BE-ell}.
For any input vector $\invec$, let $\invX{\invec}$ be another $n$-bit vector that's identical to
$\invec$ except for the $\ell^{\text{th}}$ bit, which is inverted in $\invX{\invec}$. The adversary fixes an arbitrary input vector $\invec_0$ in which $b_\ell=0$,
and sets the vector $\invec_1$ to be  $\invX{\invec_0}$.
The adversarial strategy is to select the set $\byz$ of $(n-1)/2$ Byzantine \peers uniformly at random from the collection $\BB$ of all $n\choose (n-1)/2$ possible choices. For any input vector $\invec$, the Byzantine peers are instructed to execute the same protocol as the \nonfaulty \peers, but on the vector $\invX{\invec}$. In other words, in executions on $\invec_0$, the Byzantine \peers will run algorithm $A$ on $\invec_1$, and vice versa.

We are interested in subsets of the set $\EXSET$ of executions of algorithm $A$. In particular, for any specific input $\invec'$ and subsets $\RR'\subseteq\RR$ and $\BB' \subseteq \BB$, let $\EXSET(\invec',\RR',\BB')$ denote the set of all executions $\EXEC(\invec,\byz,R)$ where $\invec=\invec'$, $\byz\in\BB'$ and $R\in \RR'$.

Hereafter, we only consider executions where the random profile $R$ is $\ell$-defective and $v(R)\not\in\byz$ for the set $\byz$ selected by the adversary, i.e.,
where $R$ belongs to the set
$$\hRR(\byz) ~=~ \{ R \mid R\in\RR_\ell,~~~v(R)\not\in\byz\}.$$
Note that since the set $\byz$ is selected by the adversary uniformly at random and independently of $R$, $\Pr[v(R)\not\in\byz] > 1/2$, hence, applying Lemma \ref{lemma-BE-ell}, we get that for any $\byz\in\BB$,
\begin{equation}
\label{eq:p-hRR-byz}
p(\hRR(\byz)) ~>~ (1-e^{-1})/2.
\end{equation}
For the rest of the proof, we concentrate on executions from
$\EXSET(\invec_0,\BB,\hRR(\byz))$ and $\EXSET(\invec_1,\BB,\hRR(\byz))$, and aim to show that in at least one of the two sets, the executions fail with constant probability. 

For $\byz\in \BB$ and random profile $R$, let
$\invB_R = V\setminus(\byz\cup\{v(R)\})$.  

\begin{claim}
\label{cl:one of two fails}
Consider some $\byz\in\BB$ and $R\in\hRR(\byz)$. Then the defective \peer $v=v(R)$ identifies the bit $b_\ell$ in the same way in the executions $\EXEC(\invec_0,\byz,R)$ and $\EXEC(\invec_1,\invB,R)$, namely, at least one of the two executions fails. Formally, noting that both
$\Pr[\EXEC(\invec_0,\byz,R) ~\text{fails}]$ and $\Pr[\EXEC(\invec_1,\invB,R) ~\text{fails}]$ are  0 or 1, the claim implies that 
$$\Pr[\EXEC(\invec_0,\byz,R) ~\text{fails}] + 
\Pr[\EXEC(\invec_1,\invB,R) ~\text{fails}] \ge 1.$$
\end{claim}
\begin{proof}
Let $\goodv$ (respectively, $\byzv$) be the set of messages sent from peers in the set $\good - \lbrace v \rbrace$ (resp., $\byz$) to $v$ in execution $\EXEC(\invec_0,\byz,R)$, and let
    $\allv  ~=~  \goodv \cup \byzv$.
Define $\goodpv$, $\byzpv$ and $\allpv$ analogously
w.r.t. the execution $\EXEC(\invec_1,\invB,R)$.
The key observation here is that
$$\goodv = \byzpv ~~~~~ \mbox{ and }~~~~~ \byzv = \goodpv.$$ 
This implies that
\begin{equation}
\label{eq:same messages}
    \allpv  ~=~  \goodpv  \cup  \byzpv  ~=~  \byzv \cup \goodv  ~=~  \allv\text{.}
\end{equation}
That is, the set of messages received by $v$ is identical in both executions.

Since $v$ sees the same inputs and the same incoming messages in both of these executions, it follows that it will identify the bit $b_\ell$ in the same way in both executions. But whichever identification it makes, in one of the two executions this identification is false, hence that execution fails.
\end{proof}

For a given $\byz$, let $p_{fail}^0(\byz,\RR')$ denote the probability that the execution $\EXEC(\invec_0,\byz,R)$ fails, conditioned on $R\in \RR'$, and let $p_{fail}^1(\invB,\RR')$ denote the probability that the execution $\EXEC(\invec_1,\invB,R)$ fails, conditioned on $R\in \RR'$. Then
\begin{eqnarray}
p_{fail}^0(\byz,\RR') 
&=& \Pr[\EXEC(\invec_0,\byz,R)~\mbox{fails} \mid R\in\RR']
~=~ \frac{\Pr[\EXEC(\invec_0,\byz,R)~\mbox{fails} ~\wedge~ R\in\RR']}{\Pr[R\in\RR']}
\nonumber
\\
&=& \frac{1}{p(\RR')} \cdot\sum_{R\in\RR'} p_R\cdot\Pr[\EXEC(\invec_0,\byz,R)~\mbox{fails}].
\label{eq:def fail}
\end{eqnarray}

Hence, for any $\byz\in\BB$, applying \eqref{eq:def fail} to $\RR$ and noting that $p(\RR)=1$,
\begin{eqnarray*}
p_{fail}^0(\byz,\RR) &=&
\sum_{R\in\RR} p_R\cdot\Pr[\EXEC(\invec_0,\byz,R)~\mbox{fails}]
~\ge~ \sum_{R\in\hRR(\byz)} p_R\cdot\Pr[\EXEC(\invec_0,\byz,R)~\mbox{fails}].
\end{eqnarray*}
Similarly, we get that
$$p_{fail}^1(\invB,\RR) ~\ge~
\sum_{R\in\hRR(\byz)} p_R\cdot\Pr[\EXEC(\invec_1,\invB,R)~\mbox{fails}].$$
Combining the last two inequalities with Claim \ref{cl:one of two fails} and Eq. \eqref{eq:p-hRR-byz}, we get that for any $\byz\in\BB$,
\begin{eqnarray}
p_{fail}^0(\byz,\RR) + p_{fail}^1(\invB,\RR) &\ge&
\sum_{R\in\hRR(\byz)} p_R\cdot(\Pr[\EXEC(\invec_0,\byz,R) + \Pr[\EXEC(\invec_1,\invB,R))
\nonumber
\\
&\ge& \sum_{R\in\hRR(\byz)} p_R ~=~ p(\hRR(\byz)) ~>~ (1-e^{-1})/2.
\label{eq:pfail 0+1}
\end{eqnarray}
For $i=0,1$, let $p_{fail}^i = \Pr[\EXEC(\invec_i,\byz,R)~\mbox{fails} \mid \byz\in\BB,~ R\in \RR]$. Then
\begin{eqnarray*}
p_{fail}^0 &=& \frac{1}{|\BB|} \sum_{\byz\in\BB} p_{fail}^0(\byz,\RR),
\\
p_{fail}^1 &=& \frac{1}{|\BB|} \sum_{\byz\in\BB} p_{fail}^1(\byz,\RR)
~=~ \frac{1}{|\BB|} \sum_{\byz\in\BB} p_{fail}^1(\invB,\RR),
\end{eqnarray*}
where the last equality follows from observing that, since $((\byz)^{\textsf{inv}})^{\textsf{inv}} = \byz$, and since the probability is taken over all $\byz\in\BB$, $p_{fail}^1$ also equals the probability that the execution $\EXEC(\invec_1,\byz,R)$ fails, conditioned on $\byz\in\BB$ and $R\in \RR$.
It follows from Eq. \eqref{eq:pfail 0+1} that
$$p_{fail}^0 + p_{fail}^1 
~=~ \frac{1}{|\BB|} \sum_{\byz\in\BB} (p_{fail}^0(\byz,\RR) + p_{fail}^1(\invB,\RR))
~>~ \frac{1}{|\BB|} \sum_{\byz\in\BB} (1-e^{-1})/2
~=~ (1-e^{-1})/2.$$
Without loss of generality, $p_{fail}^0 \ge (1-e^{-1})/4$. This implies that the algorithm fails on the input vector $\invec_0$ with this probability. The lemma follows.
\end{proof}

\begin{theorem}
    Consider an $n$-\peer system with external data, for odd $n$, and let the fraction of Byzantine \peers be bounded above by $\beta = (n-1)/2n$. For every algorithm $A \in \mathcal{A}(n - 1, 1)$, there is an adversarial strategy such that $A$ fails to learn all the input bits with probability at least $(1-e^{-1})/4$. The lower bound holds even in the broadcast communication model.
\end{theorem}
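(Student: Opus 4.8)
The statement is, up to phrasing, exactly the preceding lemma, strengthened only by the observation that the adversary constructed there already lives in the broadcast model. The plan is therefore to assemble the pipeline built in this section and then verify that no step of it relies on point-to-point communication.

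First I would invoke Lemma~\ref{lemma-BE-ell} to fix the index $\ell=\ell(A)$ for which an $\ell$-defective random profile occurs with probability $p(\RR_\ell)\ge 1-e^{-1}$; this pins down one coordinate that, with constant probability and \emph{irrespective of the input}, is never queried by some \peer. Next, following Definitions~\ref{def: weak adversary} and~\ref{def: execution}, I would specify the (non-adaptive, hence weak) adversary exactly as in the lemma: sample $\byz$ uniformly from the family $\BB$ of all $(n-1)/2$-subsets of $V$, instruct the \peers in $\byz$ to run $A$ on the complemented vector $\invX{\invec}$, and consider the two candidate inputs $\invec_0$ (with $b_\ell=0$) and $\invec_1=\invX{\invec_0}$. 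The heart is Claim~\ref{cl:one of two fails}: for any $R\in\hRR(\byz)$ the defective \peer $v(R)$ has an identical view — same local input and, since $\invec_0,\invec_1$ agree off coordinate $\ell$ while $v(R)$ never queries $\ell$, the same query answers; and the same incoming messages, by the swap identities $\goodv=\byzpv$ and $\byzv=\goodpv$ — in the executions $\EXEC(\invec_0,\byz,R)$ and $\EXEC(\invec_1,\invB_R,R)$, so $v(R)$ outputs the same value for $b_\ell$ in both, which is wrong in exactly one of them. Conditioning on $R\in\hRR(\byz)$, summing over $R$ via \eqref{eq:p-hRR-byz}, and averaging over the random choice of $\byz$ gives $p_{fail}^0+p_{fail}^1>(1-e^{-1})/2$, whence one of $\invec_0,\invec_1$ is foiled with probability at least $(1-e^{-1})/4$; and the size constraints hold on both sides since $|\byz|=|\invB_R|=(n-1)/2\le\beta n$.

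For the broadcast part I would observe that the adversary above is already a legal broadcast adversary: in the broadcast model an honest \peer running $A$ sends a single message per round to everyone, and the faulty \peers in $\byz$ do exactly that, only on the complemented input. Re-reading Claim~\ref{cl:one of two fails} in the broadcast setting, the multiset of messages reaching $v(R)$ is just the collection of all broadcasts of the other \peers, and the swap identity still holds: a \peer of $\byz$ behaving \emph{honestly} on $\invec_1$ emits the same broadcasts as the same \peer behaving \emph{Byzantine} (i.e.\ simulating $A$ on $\invX{\invec_0}=\invec_1$) on $\invec_0$, and symmetrically for the \peers of $\good\setminus\{v(R)\}$; the bit flipped between the two executions is again $\ell$, which $v(R)$ does not query. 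Hence $v(R)$ cannot distinguish the two executions in the broadcast model either, so the failure probability is unchanged.

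I do not expect a real obstacle, since all the machinery is already proved; the two places that deserve a line of care are (i) checking that the mirror execution $\EXEC(\invec_1,\invB_R,R)$ is well defined and meets the $\beta n$ bound, which holds because the map $\byz\mapsto\invB_R=V\setminus(\byz\cup\{v(R)\})$ preserves the cardinality $(n-1)/2$, and (ii) confirming that ``run $A$ on the complemented input'' is expressible as a broadcast strategy — which it is, precisely because honest behaviour in the broadcast model is itself one broadcast per round, so a faulty \peer can reproduce it verbatim on a different input.
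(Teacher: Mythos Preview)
Your proposal is correct and follows essentially the same route as the paper: the theorem is nothing more than the preceding lemma (built from Lemma~\ref{lemma-BE-ell} and Claim~\ref{cl:one of two fails}) together with the one-line observation that the constructed adversary is already a broadcast adversary, and you reconstruct exactly that pipeline. Your explicit verification of the broadcast claim---that faulty \peers simulating $A$ on $\invX{\invec}$ send a single broadcast per round, so the swap identity $\allv=\allpv$ survives intact---is a touch more detailed than the paper, which merely asserts it, but the argument is the same.
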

\section{Faster \download\ with Near Optimal Query Cost} \label{sec:alg_fast}

In this section, we show how to reduce the time complexity to $O(\log n)$. Unlike the previous protocol, the protocols here do not use blacklisting and do not require fixed IDs from round to round; only that there is a known lower bound on the number of \nonfaulty \peers present in each round.
 
We first describe a key definition and a subroutine in these protocols. 
In a typical step of these protocols, we fix a parameter $\varphi$ depending on the round, and the input vector $\invec$ is partitioned into $\Inum=\lceil n/\varphi \rceil$ contiguous subsets of size $\varphi$, $\invec[\indID,\varphi] = (x_{(\indID-1)\cdot \varphi +1 },x_{(\indID-1)\cdot \varphi +2},\ldots,x_{\indID\cdot \varphi})$, for $\indID \in [1, \Inum]$,
with the last interval, $(x_{(\Inum-1)\cdot \varphi +1 },\ldots,x_n)$), being possibly shorter.
Each nonfaulty peer queries all the bits of some interval $\invec[\indID,\varphi]$ and broadcasts its findings to all other \peers, in the form of a pair
$\langle \indID,s\rangle$, where $s$ is the bit string obtained from the queries.

Hence at the end of a round, every \peer receives a collection of strings for different intervals.
Let $s[j]$ denote the $j$-th bit of the string $s$.
Note that a message $\langle \indID,s\rangle$ can possibly be received in multiple copies, from different peers, and the algorithm keeps all copies.  

Let $S^M$ be a multiset of strings for a particular interval $\indID$ received by

\peer $M$ at the current round.
Define the set of \emph{$t$-frequent} strings at a \peer $M$ as
$$\FS(S,t)=
 \{s\in S ~\mid~ s \hbox{ appears } \geq t \hbox{ times in }S\}$$
Note that the $\FS$ function gets a \emph{multiset} and returns a \emph{set} of $t$-frequent strings. 
A {\it decision-tree} $T$ is a rooted ordered binary tree with the internal nodes labeled by an index of the input array 
and the leaves labeled by string.
Given a set of strings $\FS$ of which one is consistent with the input array (i.e., it appears exactly in the input array at the specified indices), we can build a decision tree to determine which one of its strings is consistent with the input array with a cost of $|\FS|-1$ queries, by querying the indices associated with the internal nodes of the decision tree and traversing the tree accordingly until reaching a leaf, which marks the correct value of the interval (see Algorithm {\ref{alg:decision_tree}}).

\begin{algorithm}
    \caption{\ConstDecTree$(S)$
    \\
    input: Set of strings $S$
    \\
    output: Node labeled tree $T$}
    \label{alg:decision_tree}
\begin{algorithmic}[1]
\State Create a root node $v$
\If{{$|S|>1$}}

    \State Set $i \leftarrow $ smallest index of bit on which at least two strings in $S$ differ
   
    \State $~label(v)\leftarrow i$
       
            \State Set $S^b \gets \{s \in S \mid s[i] = b\}$ for $b\in \{0,1\}$
            \State $T^0\gets$ \ConstDecTree $(S^0)$
            \State $T^1\gets$ \ConstDecTree$(S^1)$
            \State Let $T$ be tree rooted at $v$ with $\leftchild(v)\leftarrow T^0$ and $\rightchild(v)\leftarrow T^1$ 
 
\Else 
    \State $label(v)\gets s$, where $S=\{s\}$ \Comment{$|S|=1$, $v$ is a leaf}
    \State Let $T$ be a tree consisting of the singleton $v$.
\EndIf
\State Return $T$
\\

\Procedure{\Determine}{T}
\State $J =\{j~|~\exists u \in T \;\text{s.t.}\; j=label(u)\}$
\For {all  $j \in J$ } in parallel
\State $query(j)$
\EndFor
\State Let $v$ be the root of $T$
\While{$v$ is not a leaf}
    \State Let $j = label(v)$.
    \If{$b_j =0$}
        \State Set $v\gets \leftchild(v)$
    \Else
        \State Set $v\gets \rightchild(v)$
    \EndIf
    
\EndWhile
\State Return $label(v)$
\EndProcedure
\end{algorithmic}
\end{algorithm}

\subsection{A Simple 2-step Algorithm}
\label{sec: 2-step algorithm}

To illustrate the ideas of the following protocols in this section, we first give a 2-step protocol with $O(n/\sqrt{\gamma k})$ queries which succeeds with high probability. Note that if $\gamma k$ is very small, every nonfaulty peer queries every bit and the algorithm is always correct. Otherwise:
\begin{enumerate}
\item Split the $n$ bit string into $\Inum$ equal length intervals of 
Each \peer picks an interval uniformly at random, queries all its bits, and broadcasts the discovered string $s$ together with the identifier $\indID \in [1, \Inum]$ of its chosen interval. 

\item Let $\FS_\indID$ be the set of strings  each of which was received from at least $t=\gamma k/(2\Inum)$ non-faulty \peers\ for interval $\indID$, disregarding any string sent by a \peer\ which sends more than one string. 
Formally, $\FS_\indID=\FS(S_\indID,t)$ where 
$S_\indID = \{s ~\mid~ \langle \indID,s \rangle \hbox{ received from some peer}\}$.

In parallel, for each interval $\indID \in [1, \Inum]$, build a decision tree $T_\indID$ from $\FS_\indID$ and invoke Procedure \Determine($\cdot$), letting each \peer determine the correct string for interval $\indID$. 

\end{enumerate}

\begin{algorithm}
\caption{2-Round \download\ with $\byzfrac<1$; Code for \peer $M$}
\begin{algorithmic}[1]
\If{$\gamma k< 64 c \ln n $} query every bit and return
\EndIf
\If{$k \geq 12c \ln n \sqrt{2n/\gamma}$}
    $\varphi \gets 16\sqrt{2n/\gamma}$
\Else~ {$\varphi \gets 32 c \ln n (n/(\gamma k))$}
\EndIf
\State $t\leftarrow \gamma k/(2\Inum)$

\State Randomly select  
$\indID^M \in [1, \Inum]$
\State Set string 
$s^M \gets query(\invec[\indID^M, \varphi])$
\State Broadcast $\langle \indID^M,s^M \rangle$
\For{{$\indID=1$} to $\Inum$} in parallel
    \State Construct the multiset $S_\indID \gets \{s ~\mid~ \langle \indID,s \rangle \hbox{ received}\}$
  
\State $T_\indID \gets \ConstDecTree(\FS(S_\indID,t))$

\State $s^\indID \gets \Determine(T_\indID)$
\EndFor
\State 

Output $s^1 s^2 \cdots s^{\Inum}$

\end{algorithmic}
\end{algorithm}


\paragraph*{Correctness.}
Consider an execution of the algorithm, and denote the number of nonfaulty peers that pick the interval $\indID$ by $k_\indID$.
The algorithm succeeds if $k_\indID \ge t$ for every interval $\indID$, 
since every decision tree $T_\indID$ will contain a leaf with the correct string for the interval $\indID$ returned by Procedure \Determine($\cdot$).

\begin{claim}
\label{cl:enough readers}
For constant $c\ge 1$, if $t \geq 8(c+1)\ln n$, then $k_\indID \ge t$ for every interval $\indID \in [1, \Inum]$, with probability at least $1-1/n^c$.

\end{claim}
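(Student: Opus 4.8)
The plan is a routine concentration argument: a lower‑tail multiplicative Chernoff bound applied to each interval, followed by a union bound over all $\Inum$ intervals.

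First I would fix an interval $\indID \in [1,\Inum]$. In the first step each nonfaulty peer selects its interval independently and uniformly at random from $[1,\Inum]$; moreover, since the adversary must commit to the (at most $\byzfrac k$) faulty peers at the start of the round, before any round‑$1$ randomness is revealed, there are at least $\goodfrac k$ nonfaulty peers, each making such an independent choice. Hence $k_\indID$ is a sum of at least $\goodfrac k$ independent Bernoulli$(1/\Inum)$ indicators, so $\mu_\indID := \Exp[k_\indID] \ge \goodfrac k/\Inum = 2t$, using $t = \goodfrac k/(2\Inum)$.

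Next I would invoke the lower‑tail Chernoff bound $\Pr\!\left[k_\indID \le (1-\tfrac12)\mu_\indID\right] \le e^{-\mu_\indID/8}$. Since $\mu_\indID \ge 2t$, the event $\{k_\indID < t\}$ is contained in $\{k_\indID \le \mu_\indID/2\}$, so $\Pr[k_\indID < t] \le e^{-\mu_\indID/8} \le e^{-t/4}$. Plugging in the hypothesis $t \ge 8(c+1)\ln n$ yields $\Pr[k_\indID < t] \le e^{-2(c+1)\ln n} = n^{-2(c+1)} \le n^{-(c+1)}$. Finally, since the intervals are nonempty and partition $[n]$ we have $\Inum \le n$, so a union bound gives $\Pr[\exists\, \indID : k_\indID < t] \le \Inum \cdot n^{-(c+1)} \le n^{-c}$, which is the claim.

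There is no genuinely hard step here; the only points requiring care are (i) justifying the independence of the peers' interval choices and the $\goodfrac k$ lower bound on the number of nonfaulty peers active in the first round — both of which follow from the adversary model, as the round‑$1$ random bits are unknown to the adversary when it fixes the faulty set — and (ii) keeping track of constants so that the per‑interval failure probability $e^{-t/4}$ comfortably beats $n^{-(c+1)}$, leaving room to absorb the $\Inum \le n$ factor in the union bound.
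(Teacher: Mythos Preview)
Your proposal is correct and follows essentially the same argument as the paper: compute $\Exp[k_\indID]=\goodfrac k/\Inum=2t$, apply a lower-tail Chernoff bound with $\delta=1/2$ to get a per-interval failure probability of at most $e^{-t/4}$ (the paper more loosely writes $e^{-t/8}$), and then union-bound over the $\Inum\le n$ intervals. Your explicit justification of independence and of the $\goodfrac k$ lower bound on nonfaulty peers in round~1 is a welcome addition that the paper leaves implicit.
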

\begin{proof}
Fix $\indID \in [1,\Inum]$.
The expected number of \nonfaulty \peers\ that pick the interval $\indID$ is 
$\Exp[k_\indID] =\gamma k/\Inum=2t$. 
Therefore, applying Chernoff bounds, 
$\Pr[k_\indID < t] \leq e^{-t/8}$
$\leq 1/n^{c+1}$ for $t \geq 8(c+1)\ln n$. Taking a union bound over all intervals, the probability that 
$k_\indID < t$ for \emph{any} interval $\indID$ is less than $1/n^c$.
\end{proof}


\paragraph*{Query complexity.}
The cost of querying in Step 2 for interval $\indID$ is the number of internal nodes of the decision tree, which is $|\FS_\indID| -1$.

Let $x_\indID$ be the number of strings received for interval $\indID$ in Step 1 (including copies). Then  $|\FS_\indID|\leq x_\indID/t$. Since each peer sends no more than one string overall, 
$\sum x_\indID=k$.
Hence, the cost of determining all intervals is $\sum_\indID |\FS_\indID| \leq  \sum_\indID x_\indID/t \leq k/t$. 

The query cost per \peer, $\Query$, is the cost of determining every interval using decision trees plus the initial query cost; hence $\Query\leq k/t + \varphi$.  Since $t=\gamma k/(2\Inum)$, $ \Query\leq 2 \lceil{n/\varphi}\rceil (1/\gamma) + \varphi $.
To satisfy the premise of Claim {\ref{cl:enough readers}}, it is required that
$t = \gamma k/(2\Inum) \geq  8(c+1) \ln n$.\\

For $k \geq  12(c+1)  \ln n \sqrt{2n/\gamma}$, we set $\varphi=16\sqrt{2n/\gamma}$.
Since $t=\gamma k/(2 \lceil{n/\varphi}\rceil)$, then $t \geq \gamma k/(2n/\varphi +2) \geq  8(c+1) \ln n $ which satisfies the premise of Claim {\ref{cl:enough readers}}. Also $\Query= k/t+ \varphi=O( \sqrt{n/\gamma})$.
Similarly, for $64 c \ln n \leq k < (c+1) \ln n \sqrt{2n/\gamma}$, we set
$\varphi = (32 c \ln n )(n/(\gamma k))$ so that $t \geq 8 (c+1)\ln n$ and $\Query = k/t+\varphi =
O(\sqrt{n/\gamma }) + \varphi = O( \sqrt{n/\gamma} + n \ln n/(\gamma k))$.
Thus,  $\Query=O(\min\{n\ln n/(\gamma k), n\}+\sqrt{n/\gamma})$

\inline Message size:
The message size in this algorithm is the number of bits required to describe the first index of the interval chosen and its bits in Step 1, i.e., $O(\varphi + \log n)=O(\varphi)= O(\sqrt{n/\gamma}+ n\ln n/(\gamma k))$. In other words, the message size is not more than the query complexity in the first round which is not more than $\Query$, i.e., $O(\Query)$.

\begin{theorem}\label{thm:2-round}
There is a 2-round randomized algorithm for \download\ in the point-to-point model with $\Query=O(n \log n/(\gamma k) + \sqrt{n/\gamma})$ and message size $O(n/(\gamma k) \log n+ \sqrt{n/\gamma})$.
\end{theorem}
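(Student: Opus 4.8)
The plan is to verify that the two-round algorithm displayed above meets the three claimed guarantees — correctness w.h.p., query cost, and message size — and to note that it in fact tolerates a dynamic adversary. Since the algorithm and its main ingredients have already been set up, the proof is largely an assembly, which I would organize as follows.

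\textbf{Correctness.} I would show that the algorithm outputs $\invec$ exactly whenever the "good event" $\mathcal{G}$ holds, where $\mathcal{G}$ is the event that $k_\indID \ge t$ for every interval $\indID\in[1,\Inum]$, with $k_\indID$ the number of peers nonfaulty in round~1 that picked interval $\indID$. Conditioned on $\mathcal{G}$: for each $\indID$, at least $t$ nonfaulty peers broadcast $\langle \indID, s^\star_\indID\rangle$ with $s^\star_\indID=\invec[\indID,\varphi]$ the true content, so $s^\star_\indID$ appears at least $t$ times in $S_\indID$, hence $s^\star_\indID\in\FS(S_\indID,t)$; Byzantine peers cannot undo this (they can only inject additional distinct strings, or send several strings and be discarded, neither of which removes $s^\star_\indID$ from the $t$-frequent set). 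Thus $\FS(S_\indID,t)$ is a set of strings exactly one of which — $s^\star_\indID$ — is consistent with the input bits at the indices of interval $\indID$, so by the stated correctness of \ConstDecTree/\Determine, Procedure \Determine$(T_\indID)$ returns $s^\star_\indID$, and concatenation gives $\invec$. It then remains to lower-bound $\Pr[\mathcal{G}]$, which is exactly Claim~\ref{cl:enough readers}: it gives $\Pr[\mathcal{G}]\ge 1-1/n^c$ provided $t\ge 8(c+1)\ln n$. So I would check that in each of the three parameter regimes either every peer queries everything (trivially correct), or $\varphi$ is chosen so that $t=\gamma k/(2\lceil n/\varphi\rceil)\ge 8(c+1)\ln n$.

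\textbf{Query cost, message size, rounds, dynamic adversary.} Each nonfaulty peer pays $\varphi$ queries in Step~1 and, in Step~2, $\sum_\indID(|\FS(S_\indID,t)|-1)$ queries for the decision trees; writing $x_\indID$ for the number of received string-copies for interval $\indID$, we have $|\FS(S_\indID,t)|\le x_\indID/t$ and $\sum_\indID x_\indID\le k$ (one broadcast per peer), so Step~2 costs at most $k/t=2\lceil n/\varphi\rceil/\gamma$. Hence $\Query\le \varphi+2\lceil n/\varphi\rceil/\gamma$, and substituting $\varphi=16\sqrt{2n/\gamma}$ gives $O(\sqrt{n/\gamma})$ while $\varphi=32c\ln n\cdot n/(\gamma k)$ gives $O(n\log n/(\gamma k)+\sqrt{n/\gamma})$; the degenerate regime contributes only $O(n)$, which is dominated, yielding the stated $\Query$. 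The only messages are the Step~1 broadcasts $\langle\indID^M,s^M\rangle$ of size $O(\log n+\varphi)$, which under the same choices of $\varphi$ is $O(n\log n/(\gamma k)+\sqrt{n/\gamma})$ (Step~2 sends no messages). The protocol uses exactly two rounds — query-and-broadcast, then the simultaneous decision-tree queries — and nothing above uses persistent peer identities or blacklisting, only that at least $\gamma k$ peers are nonfaulty in round~1; hence the same bounds hold against a dynamic adversary.

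\textbf{Main obstacle.} The conceptual core — that $t$-frequency retains the true string while keeping $|\FS(S_\indID,t)|$ small, and that $|\FS(S_\indID,t)|-1$ parallel queries then pin it down — is packaged in the pre-stated subroutine correctness, so the real work here is the parameter bookkeeping across the three regimes of $k$: choosing $\varphi$ so that simultaneously $t=\gamma k/(2\lceil n/\varphi\rceil)$ is large enough for the Chernoff bound of Claim~\ref{cl:enough readers}, the Step~1 cost $\varphi$ and Step~2 cost $k/t$ are balanced to $O(n\log n/(\gamma k)+\sqrt{n/\gamma})$, and the ceiling rounding and $\ln n$-versus-$\log n$ factors do not inflate the bound. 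I would dispatch the rounding by $\lceil n/\varphi\rceil\le n/\varphi+1$ and checking the stray $+1$ is absorbed in each case.
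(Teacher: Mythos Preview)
Your proposal is correct and follows essentially the same approach as the paper: correctness via the event that every interval is hit by at least $t$ nonfaulty peers (Claim~\ref{cl:enough readers}), the query bound $\varphi + k/t$ via $|\FS_\indID|\le x_\indID/t$ and $\sum_\indID x_\indID\le k$, and the message-size bound $O(\varphi+\log n)$, with the same case split on $k$ to set $\varphi$. The paper's own argument is slightly terser but structurally identical, and your added remarks on the dynamic adversary and the degenerate regime $\gamma k < 64c\ln n$ are sound.
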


\begin{remark}\label{remark: optimal_time_query}
     For $k< O(\sqrt{n/\gamma}\log n)$, time complexity is optimal whereas query complexity is optimal up to a $\ln n$ factor. In this, $k$ may vary till $n$ based on the value of $\gamma$.
\end{remark}

\subsection{\download\ with \texorpdfstring{$O(n\log n/(\gamma k))$}{p} Expected Queries and \texorpdfstring{$O(\log n)$}{log n} time}
\label{ss:expected}

The 2-step protocol presented in the previous subsection, while fast and simple, is not query-optimal, with its main source of overhead being that every \peer, after its initial query of the randomly selected interval, builds a decision tree for every other interval and determines the correct leaf (and hence the correct string) by performing additional queries. In this subsection, we extend the previous protocol, achieving optimal query complexity at the cost of going from $O(1)$ to $O(\log n)$ rounds.

The algorithm is based on the following idea. Assume for simplicity that $n$ is a power of 2 and let 
$\varphi$ $= (n/(\gamma k)) (8 (c+1) \ln n )$. 
Recall that $\Inum=\lceil n/\varphi \rceil$.
In Step $i$, for $i \in [0,\dots,\log \Inum]$, the $n$ bits are partitioned into 
$\Inum_i = \lceil n/\varphi_i \rceil$
intervals of size $\varphi_i=2^i \varphi$. Each interval in step $i$ consists of the concatenation of two consecutive intervals from step $i-1$, beginning with the first two intervals. In each step, each \nonfaulty \peer\ determines a random interval of increasing size by splitting it into its two interval parts and using the technique described in the 2-step protocol above on both parts, until the entire input vector is determined. This significantly reduces the number of decision trees constructed by each \peer, from $\Inum$ to $O(\log n)$, which allows the improved query complexity.
Schematically, the protocol works as follows.

See Algorithm {\ref{alg:time-query}} for a formal description.

\commentstart
\begin{algorithm} 
\caption{\download\ Protocol with $\tilde{O}(n/\gamma{k})$  Expected Queries and ${O(\log n)}$ Time $\byzfrac<1$}
\label{alg:time-query}
\begin{algorithmic}[1]
\State Randomly pick an interval $\indID \in [1,\Inum]$ of size $\varphi$
\State Set string 

$s=query(\invec[\indID, \varphi])$
\Statex broadcast$\langle \indID,s,0\rangle$
\For{$i=$ 1 to $\lg\Inum$}
\State Randomly pick an interval 
$\indID \in [1, \Inum_i]$ of size 
$\varphi_i$

\State $\indID_L \gets 2\indID-1$;$\indID_R \gets 2\indID$
\For{$u \in \{\indID_L, \indID_R\}$} in parallel 
\State Construct the multiset $S(u) \leftarrow \{s~\mid~  \text{received a message of the form }\langle u,s,i-1\rangle\}$ 

\State $t_{i-1}=2^{i-2} \varphi (\gamma k/n)$

\State $T_u \gets \ConstDecTree(\FS(S(u),t_{i-1}))$ \label{line:construct}
\State $s_{u} \gets \Determine(T_u)$

\EndFor
\State Set $s_{\indID}$ to $s_{\indID_L}s_{\indID_R}$ and broadcast $\langle \indID,s_{\indID},i\rangle$
 \EndFor  
\State return the determined string for interval $[1,\dots, n]$
\end{algorithmic}
\end{algorithm}
\commentend

\begin{algorithm} 
\caption{\download\ Protocol with $\tilde{O}(n/\gamma{k})$  Expected Queries and ${O(\log n)}$ Time $\byzfrac<1$}
\label{alg:time-query}
\begin{algorithmic}[1]

\For{$i=$ 0 to $\lg\Inum$}
    \State Randomly pick an interval $\indID \in [1, \Inum_i]$ of size $\varphi_i$
    \If{$i=0$}
        \State Set string $s=query(\invec[\indID, \varphi])$ and broadcast $\langle \indID,s,0\rangle$
    \Else
        \State $\indID_L \gets 2\indID-1$;$\indID_R \gets 2\indID$
        \For{$u \in \{\indID_L, \indID_R\}$} in parallel 
            \State Construct the multiset $S(u) \leftarrow \{s~\mid~  \text{received a message of the form }\langle u,s,i-1\rangle\}$ 
            \State $t_{i-1}=2^{i-2} \varphi (\gamma k/n)$
            \State $T_u \gets \ConstDecTree(\FS(S(u),t_{i-1}))$ \label{line:construct}
            \State $s_{u} \gets \Determine(T_u)$
        \EndFor
        \State Set $s_{\indID}$ to $s_{\indID_L}s_{\indID_R}$ and broadcast $\langle \indID,s_{\indID},i\rangle$
        
    \EndIf
\EndFor
\State return the determined string for interval $[1,\dots, n]$
\end{algorithmic}
\end{algorithm}

\paragraph*{Correctness.}

We show correctness by proving the following two key lemmas.

\begin{lemma}\label{lem:high_prob_t_picked}
For every step $i \in [0, \log n]$, every interval in step $i$ (those of size $\varphi_i$) is picked by at least $t_i$ $\text{\nonfaulty}$ {\peers} w.h.p
\end{lemma}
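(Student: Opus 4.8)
\textbf{Proof plan for Lemma~\ref{lem:high_prob_t_picked}.}
The plan is to handle each step $i$ separately and then take a union bound over the $O(\log n)$ steps and the $\Inum_i$ intervals within each step. Fix a step $i\in[0,\log\Inum]$ and an interval $\indID\in[1,\Inum_i]$ of size $\varphi_i = 2^i\varphi$. Let $k_{\indID}$ denote the number of nonfaulty peers that pick interval $\indID$ in step $i$. Since each nonfaulty peer picks its interval uniformly at random among the $\Inum_i$ intervals, and there are at least $\gamma k$ nonfaulty peers, we have $\Exp[k_{\indID}] \ge \gamma k / \Inum_i$. First I would compute this expectation explicitly: since $\Inum_i = \lceil n/\varphi_i\rceil$ and $\varphi_i = 2^i\varphi = 2^i (n/(\gamma k))(8(c+1)\ln n)$, we get $\gamma k/\Inum_i \ge \gamma k \varphi_i / n \cdot (1 - o(1))= 2^i \cdot 8(c+1)\ln n \cdot (1-o(1))$, which is $\ge 2 t_i$ up to the rounding (recall $t_i = 2^{i-1}\varphi(\gamma k/n) = 2^{i-1}\cdot 8(c+1)\ln n$; note $t_{i-1}$ in the algorithm equals this with $i$ replaced by $i-1$, so indexing must be checked carefully). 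The key point is that $\Exp[k_{\indID}]$ is at least twice the threshold $t_i$, and $t_i \ge 4(c+1)\ln n$ for all $i\ge 0$.

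Next I would apply a multiplicative Chernoff bound to the lower tail of $k_{\indID}$, which is a sum of independent indicator variables (one per nonfaulty peer). With $\Exp[k_{\indID}]\ge 2t_i$, we get $\Pr[k_{\indID} < t_i] \le \Pr[k_{\indID} < \tfrac12\Exp[k_{\indID}]] \le e^{-\Exp[k_{\indID}]/8} \le e^{-t_i/4} \le e^{-(c+1)\ln n} = n^{-(c+1)}$. This mirrors exactly the computation in Claim~\ref{cl:enough readers} for the 2-step algorithm. Then I would union-bound: within step $i$ there are $\Inum_i \le n$ intervals, and there are at most $\log\Inum + 1 = O(\log n)$ steps, so the total failure probability is at most $\sum_i \Inum_i \cdot n^{-(c+1)} \le (\log n + 1)\cdot n \cdot n^{-(c+1)} = O(n^{-c}\log n)$, which is still $n^{-c+o(1)}$ and hence the claim holds with high probability (absorbing the $\log n$ factor by a slightly larger constant $c$, or by restating the high-probability guarantee with exponent $c-1$).

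The only genuinely delicate point — and the part I'd be most careful about — is bookkeeping the constants and the ceiling operations: one must verify that $\gamma k/\Inum_i \ge 2t_i$ really does hold for \emph{every} $i$ from $0$ to $\log\Inum$ despite the $\lceil\cdot\rceil$ in $\Inum_i = \lceil n/\varphi_i\rceil$, and in particular at the largest value $i = \log\Inum$ where $\Inum_i$ may be as small as $1$ or $2$ and $\varphi_i$ may exceed $n$ (so the last interval is the whole array, or a pair of intervals one of which is short). For that boundary case the expectation is simply $\gamma k/\Inum_i \ge \gamma k/2$, which dominates $t_i = 2^{\log\Inum - 1}\cdot 8(c+1)\ln n = \Theta(\gamma k)$ only up to constants, so the constant $8(c+1)$ in the definition of $\varphi$ must be chosen (or the Chernoff slack tuned) so that the inequality $\Exp[k_\indID]\ge 2t_i$ survives at the top level; alternatively one observes that the threshold $t_{i-1}$ actually used in the decision-tree construction at step $i$ is the relevant one and is a factor $2$ smaller, giving the needed room. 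I would state the lemma's guarantee as holding with probability $1 - n^{-c}$ after fixing $c$ appropriately, consistent with the high-probability conventions used earlier in the paper. Everything else is a routine Chernoff-plus-union-bound argument essentially identical to Claim~\ref{cl:enough readers}.
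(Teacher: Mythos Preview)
Your proposal is correct and takes essentially the same Chernoff-plus-union-bound approach as the paper. The only minor difference is that the paper tightens the union bound by observing $\sum_{i=0}^{\lg\Inum}\Inum_i < \sum_{i\ge 0}\lceil n/\varphi\rceil 2^{-i} < n$ (a geometric series), which eliminates the extra $\log n$ factor you mention having to absorb.
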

\begin{proof}
Fix Step $i$. The number of intervals in step $i$ is

$\Inum_i$. The expected number of \nonfaulty \peers\ which pick a given interval at Step $i$ is $E_i=\gamma k /\Inum_i=(\gamma k/(n/ (2^i \varphi)) =  2^i (8(c+1) \ln n)$.
Setting $t_i=E_i/2$, the probability of failure for any one interval, as given by Chernoff bounds (see previous subsection) is no more than $e^{-E_i/8}\leq n^{-c+1}$.

Taking a union bound over the $\sum_{i=0}^{\lg\Inum} \Inum_i < \sum_{i=0}^{\infty} \lceil\frac{n}{\varphi}\rceil \frac{1}{2^i} < n$ intervals over all steps $i$, 
the probability of any failure in any step is less than $n \cdot n^{-(c+1)}\leq n^{-c}$ for $c \geq 1$. 
\end{proof}
Denote by $\indID^i_M$ the interval ID picked by \peer $M$ at step $i$.
\begin{lemma} \label{lem:fast_correct_value}
In every step $i\in [0,\lg\Inum]$, every \nonfaulty \peer $M$ learns the correct value of $\invec[\indID^i_M,\varphi_i]$ w.h.p.
\end{lemma}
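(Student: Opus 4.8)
The plan is to prove Lemma~\ref{lem:fast_correct_value} by induction on the step index $i$, piggybacking on the high-probability event guaranteed by Lemma~\ref{lem:high_prob_t_picked} that every interval of every step is queried (in step $0$) or freshly re-determined (in later steps) by at least $t_i$ nonfaulty peers. The base case $i=0$ is immediate: each nonfaulty peer $M$ directly queries all $\varphi_0 = \varphi$ bits of its chosen interval $\invec[\indID^0_M,\varphi]$, so it trivially learns the correct value. For the inductive step, fix $i\ge 1$ and assume that in every step $i' < i$, every nonfaulty peer learned (and hence broadcast) the correct string for its chosen interval of size $\varphi_{i'}$.

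For the inductive step I would argue as follows. Peer $M$ picks interval $\indID = \indID^i_M$ of size $\varphi_i$, whose two halves are $\indID_L, \indID_R$, each an interval of step $i-1$ of size $\varphi_{i-1} = 2^{i-1}\varphi$. Consider one half $u \in \{\indID_L,\indID_R\}$. By Lemma~\ref{lem:high_prob_t_picked}, at least $t_{i-1} = 2^{i-2}\varphi(\gamma k/n) = E_{i-1}/2$ nonfaulty peers picked interval $u$ at step $i-1$; by the induction hypothesis, every one of these peers learned the correct string $s^\star_u$ for $u$ and broadcast the (truthful) message $\langle u, s^\star_u, i-1\rangle$. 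Hence in the multiset $S(u)$ assembled by $M$, the correct string $s^\star_u$ appears at least $t_{i-1}$ times, so $s^\star_u \in \FS(S(u), t_{i-1})$. This is exactly the sifting guarantee: the set $\FS(S(u),t_{i-1})$ fed to \ConstDecTree\ contains (at least) one string consistent with the input array, namely $s^\star_u$. By the correctness of Algorithm~\ref{alg:decision_tree} (the decision-tree property stated in Section~\ref{sec:methods} and before Algorithm~\ref{alg:decision_tree}), Procedure \Determine$(T_u)$ then returns precisely the string in $\FS(S(u),t_{i-1})$ that is consistent with the queried indices, i.e. $s^\star_u$. Doing this for both $u=\indID_L$ and $u=\indID_R$, peer $M$ sets $s_\indID = s^\star_{\indID_L} s^\star_{\indID_R}$, which is exactly the correct value of $\invec[\indID,\varphi_i]$, completing the induction.

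Finally I would note that all of the above is conditioned on the single high-probability event of Lemma~\ref{lem:high_prob_t_picked} (every interval in every step picked by at least $t_i$ nonfaulty peers), which holds with probability at least $1 - n^{-c}$; on that event the argument above is deterministic, so the conclusion of Lemma~\ref{lem:fast_correct_value} holds w.h.p. with the same failure probability. One subtlety worth spelling out is that the decision tree is built from a \emph{set} (the output of $\FS$) rather than the raw multiset, and that \Determine\ is guaranteed to output a consistent string only under the promise that the set contains at least one such string — this promise is supplied precisely by the induction hypothesis together with the counting bound $t_{i-1}$, so the two lemmas must be threaded together carefully. I expect the main obstacle to be purely bookkeeping: making sure the threshold $t_{i-1}$ used when $M$ sifts at step $i$ matches the lower bound on the number of honest re-determiners of a size-$\varphi_{i-1}$ interval guaranteed at step $i-1$, and that Byzantine peers — who may inject arbitrarily many copies of wrong strings or send malformed/multiple messages — cannot dislodge the correct string from $\FS$ (they can only add other strings, never remove $s^\star_u$, and adding strings is harmless because \Determine\ resolves ties by actual queries). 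There is no deep difficulty here beyond careful indexing, since the heavy probabilistic lifting is already done in Lemma~\ref{lem:high_prob_t_picked}.
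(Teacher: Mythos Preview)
Your proposal is correct and follows essentially the same approach as the paper's own proof: induction on the step index $i$, with the base case $i=0$ handled by direct querying, and the inductive step handled by combining Lemma~\ref{lem:high_prob_t_picked} (to guarantee at least $t_{i-1}$ honest broadcasters of each half-interval) with the inductive hypothesis (to guarantee those broadcasts were correct), so that the correct string survives sifting into $\FS$ and is recovered by \Determine. Your additional remarks about Byzantine peers only being able to add strings to $\FS$ and about conditioning on the single event of Lemma~\ref{lem:high_prob_t_picked} are accurate elaborations that the paper leaves implicit.
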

\begin{proof}
By induction on the steps. The base case, step $0$, is trivial, since every nonfaulty peer that picks an interval queries it completely. 

For step $i>0$, consider the interval $\indID = \indID^i_M$ picked by \peer $M$ in step $i$. During step $i$, $M$ splits $\indID$ into two subintervals $\indID_L, \indID_R$ of size $\varphi_{i-1}$.
By Lemma {\ref{lem:high_prob_t_picked}}, the intervals $\indID_L, \indID_R$ were each picked by a least $t_{i-1}$ \nonfaulty \peers\ 
w.h.p. during step $i-1$, and by the inductive hypothesis, those peers know (and broadcast) the correct strings $s_L, s_R$ respectively.
For $u\in\{L,R\}$, let $\FS_u =\FS(S(\indID_u),t_{i-1})$ 
denote the $t_{i-1}$ frequent sets constructed in step $i$ for $\indID_u$.
Then, $s_L \in \FS_L$ and $s_R \in \FS_R$. Hence, the decision trees built for $\indID_L$ and $\indID_R$ will contain leaves with labels $s_L, s_R$ respectively, which will then be returned correctly by  Procedure \Determine $(\cdot)$, implying that $M$ learns the correct bit string for the interval $\indID$, $s_{\indID} = s_Ls_R$. 
\end{proof}

The correctness of the protocol follows from observing that at the last round $\indID=\lg \Inum$, the intervals are of size $\varphi_\indID = 2^\indID\varphi = n$. Hence, by Lemma {\ref{lem:fast_correct_value}}, every peer learns the entire input.

\paragraph*{Query complexity.}
%
Given an interval $\indID$ in Step $i+1$, let $x$ be the total number of strings received for the subintervals in Step $i$ that compose interval, i.e., $x=x_L +x_R$, where $x_u$ for $u \in \{L,R\}$ denotes the number of strings received for subinterval $\indID_u$.
Let $m_x$ be the number of intervals in Step $i+1$ such that the total number of strings received for that interval equals $x$.
Formally, 
\begin{equation*}
    m_x = |\{\indID \mid \text{received exactly $x$ messages of the form }\langle\indID_L, s, i\rangle \text{ or } \langle\indID_R,s,i\rangle \}|
\end{equation*}
The probability of picking an interval in Step $i+1$ with $x$ strings is $m_x$ divided by the total number of intervals, or 
$m_x/\Inum_{i+1}$. The expected cost of querying in Step $i+1$ is therefore
\begin{equation*}
\sum_x\frac{m_x}{\Inum_{i+1}} \cdot \frac{x}{t_i}
~=~ \sum_x \frac{m_x}{\Inum_{i+1}} \cdot \frac{2x}{\gamma k/\Inum_i}
~=~ \frac{2\Inum_i}{\gamma k \cdot \Inum_{i+1}} \sum_x m_x\cdot x
~\leq~ \frac{4}{\gamma}~,
\end{equation*}
where the last inequality follows since each peer broadcasts at most one string, so $\sum_x m_x \cdot x\leq k$.

Step 0 requires $\varphi=O(n\lg n/(\gamma k))$ queries.
The expected cost of querying is $O(1/\gamma)$ per step $i>0$ per peer, and there are fewer than $\log n$ steps, so the total expected query cost is at most $O(n\log n/(\gamma k))$. 

\paragraph*{Message size.}
The maximum message size is that of the last round or $n/2+1$.

\begin{theorem}\label{thm:fast_expected_optimal}
There is a $O(\log n)$-round algorithm which w.h.p. computes \download\ in the point-to-point model with expected query complexity $\Query=O([n/(\gamma k)]\log n)$ and message size $O(n)$.
\end{theorem}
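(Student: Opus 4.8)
The plan is to take Algorithm~\ref{alg:time-query} as the protocol and then simply assemble the correctness guarantee and the three complexity bounds, most of which are already isolated in the analysis of Subsection~\ref{ss:expected}. First I would fix the parameters exactly as in that subsection: $\varphi=(n/(\gamma k))\cdot 8(c+1)\ln n$, $\Inum=\lceil n/\varphi\rceil$, steps $i=0,\dots,\lg\Inum$ with interval sizes $\varphi_i=2^i\varphi$ and thresholds $t_i=2^{i-1}\cdot 8(c+1)\ln n$, so that at the last step the (single) interval is all of $[1,n]$.

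For \textbf{correctness}, I would invoke Lemma~\ref{lem:high_prob_t_picked} to obtain that, with probability $\ge 1-n^{-c}$, \emph{every} interval at \emph{every} step is picked by at least $t_i$ nonfaulty peers; note this already absorbs the union bound over the fewer than $n$ intervals appearing across all $O(\log n)$ steps. Conditioning on this event, Lemma~\ref{lem:fast_correct_value} yields by induction on $i$ that every nonfaulty peer learns the correct value of its step-$i$ interval: the base case is a direct query, and in the inductive step the two half-intervals were each read correctly by $\ge t_{i-1}$ nonfaulty peers, so each $t_{i-1}$-frequent set $\FS(S(\indID_u),t_{i-1})$ contains the correct half-string, whence the decision tree built by \ConstDecTree\ has a leaf labeled with it and \Determine\ returns it. At step $\lg\Inum$ the interval is $[1,n]$, so every nonfaulty peer outputs $\invec$, solving \download\ w.h.p.

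For the \textbf{complexity bounds}: the \emph{round} count is $O(\log n)$ since there are $\lg\Inum+1$ steps and each step is $O(1)$ rounds (one broadcast round; the two decision trees are built locally and \Determine\ issues all of its queries in a single parallel query sub-round). For the \emph{query} cost, Step~$0$ costs $\varphi=O((n/(\gamma k))\log n)$ queries per peer, and for a step $i+1>0$ the per-interval cost is the number of internal nodes of the two decision trees, at most $(x_L+x_R)/t_i$ where $x_u$ counts strings received for half-interval $u$; since the peer's interval is uniform over $\Inum_{i+1}$ choices, the expected cost is $\sum_x \frac{m_x}{\Inum_{i+1}}\cdot\frac{x}{t_i}$, which using $t_i=\gamma k/(2\Inum_i)$, $\Inum_i\le 2\Inum_{i+1}$ and $\sum_x m_x x\le k$ (each peer broadcasts one string per step) is $O(1/\gamma)$. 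Summing $O(1/\gamma)$ over the $O(\log n)$ steps and adding Step~$0$ gives $\Exp[\Query]=O((n/(\gamma k))\log n)$; crucially this holds unconditionally, as the per-interval bound and $\sum_x m_x x\le k$ do not depend on the good event. Finally, the only messages are the triples $\langle\indID,s,i\rangle$, whose length is $\varphi_i+O(\log n)\le n+O(\log n)$, so $\Messagesize=O(n)$.

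The genuinely delicate part of the whole argument is the inductive correctness claim (Lemma~\ref{lem:fast_correct_value}) and making sure its failure probability, together with that of Lemma~\ref{lem:high_prob_t_picked}, stays at $n^{-\Omega(1)}$ after the union bound over all $O(\log n)$ steps — but that work is already done. What remains for the theorem is careful bookkeeping: collecting the four quantities, and being explicit that the query guarantee is in expectation (and unconditional) while correctness is only w.h.p.
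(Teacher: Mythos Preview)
Your proposal is correct and follows essentially the same approach as the paper: you take Algorithm~\ref{alg:time-query}, invoke Lemmas~\ref{lem:high_prob_t_picked} and~\ref{lem:fast_correct_value} for correctness, reproduce the expected per-step query computation $\sum_x \frac{m_x}{\Inum_{i+1}}\cdot\frac{x}{t_i}=O(1/\gamma)$, and read off the round and message-size bounds. Your explicit remark that the expected-query bound holds unconditionally (since $\sum_x m_x x\le k$ does not depend on the good event) is a small clarification the paper leaves implicit, but otherwise the arguments coincide.
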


\subsection{Broadcast Model: Worst-case
\texorpdfstring{$O((1/\gamma)\log^2 n)$}{og} Queries and Time, and Message Size \texorpdfstring{$O(\log n/\gamma)$}{mg}}
\label{ss:opt_in_broadcast_model}

Here we start by showing that in the Broadcast model, 
where every peer (including those controlled by the adversary) must send the same message to all peers in every round, one can drastically reduce the message size in Algorithm \ref{alg:time-query}. Next, we show that the bound on expected query complexity can be improved to a bound on the worst case query complexity because of the ``common knowledge" property guaranteed to the peers by the broadcast medium.

\begin{observation}
\label{obs:simulation}
Any algorithm in the point-to-point model, where in each round, each peer performs at most $Q$ queries and generates at most $r$ random bits, can be simulated in the same number of rounds in the broadcast model, so that in each round of the simulation, each peer also performs at most $Q$ queries, generates at most $r$ random bits, and broadcasts a message of size at most $r + Q$ bits.
\end{observation}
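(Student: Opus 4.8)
The plan is to show that the only obstacle to running a point-to-point algorithm in the broadcast model is that a peer must commit to a single outgoing message, whereas in the point-to-point model it may send tailored messages to different recipients. The key insight is that in each round, a peer's behavior (its queries, its state transition, and the messages it would send to each neighbor) is a deterministic function of its current local state, its $r$ fresh random bits, and the $Q$ query answers it receives that round. So instead of sending recipient-specific messages, each peer can broadcast the \emph{raw ingredients} from which every recipient can itself recompute whatever message it was supposed to receive. Concretely, I would have each peer broadcast the tuple consisting of its $r$ random bits generated this round and the $Q$ bit-answers to its queries this round (size $r + Q$ bits), together with whatever message content it already intended to broadcast in the simulated round (absorbed into the same bound since the simulated algorithm sends at most $r+Q$-ish bits anyway; more carefully, one notes the simulated message is itself a function of these ingredients plus prior state). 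Each recipient, which has been tracking the full transcript of every peer's broadcasts, thus knows every peer's entire local state at all times, and can locally reconstruct the point-to-point message that peer $v$ would have sent to peer $u$.

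The steps, in order: (1) Fix the point-to-point algorithm $\mathcal{A}$ and argue that a peer $v$'s round-$t$ action is a function $g_v$ of $(\text{state}_v^{t-1}, \rho_v^t, a_v^t)$, where $\rho_v^t$ are the $r$ random bits and $a_v^t$ are the $Q$ query answers, and that this action includes the (up to $k-1$) messages $v$ would send. (2) Define the simulation: in round $t$, each peer $v$ makes exactly the same $Q$ queries $\mathcal{A}$ prescribes (it can compute them, since it knows its own state), generates the same $r$ random bits, and broadcasts $\langle \rho_v^t, a_v^t \rangle$, a message of $\le r + Q$ bits. (3) Prove by induction on $t$ the invariant that after round $t$ of the simulation, every peer knows $\text{state}_w^{t}$ for every peer $w$: the base case is the common initial state; for the inductive step, peer $u$ knows $\text{state}_w^{t-1}$ by hypothesis, receives $\langle \rho_w^t, a_w^t \rangle$ by broadcast, and applies $g_w$ to obtain $\text{state}_w^t$ and in particular the message $m_{w\to u}^t$ that $w$ sent $u$ in $\mathcal{A}$. (4) Conclude that peer $u$ can feed itself exactly the messages it would have received in $\mathcal{A}$, so its state evolution—and hence output—matches $\mathcal{A}$ exactly; round count, per-round query count $Q$, and per-round randomness $r$ are preserved by construction.

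One detail to handle cleanly rather than an obstacle: the adversary/Byzantine peers in the broadcast model are \emph{forced} to send one message to everyone, which is strictly more constraining on the adversary than in the point-to-point model, so correctness guarantees of $\mathcal{A}$ (which hold against the more powerful point-to-point adversary) carry over \emph{a fortiori}—I should state this so the reader sees the simulation doesn't accidentally weaken the fault model in the adversary's favor. The genuine subtlety, and the one place to be careful, is ensuring the broadcast message size is honestly $r + Q$ and not $r + Q + (\text{simulated message size})$: this is fine because a recipient reconstructs the simulated outgoing messages itself from $\langle \rho_w, a_w \rangle$ plus the state it already tracks, so nothing beyond the random bits and query answers ever needs to be transmitted. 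I expect the writeup to be short; the main thing to get right is phrasing the induction invariant ("everyone knows everyone's state") precisely enough that the reconstruction step is obviously valid.
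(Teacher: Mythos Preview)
Your proposal is correct and follows essentially the same approach as the paper's proof: broadcast the random bits and query answers each round, and use the invariant that every peer can reconstruct every other peer's full local state to recover the point-to-point messages locally. Your write-up is more explicit (spelling out the induction and the adversary-strength remark), but the underlying argument is identical.
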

\begin{proof}
In every execution, the state of each node depends on (i) its ID, (ii) its queries, and (iii) the messages it receives. In both the point-to-point model and the broadcast model, all \peers\ know the IDs of the senders. In the broadcast model, all \peers\ receive the same messages. Hence, to communicate its state, each peer needs to send only the random bits it generates and the value of the bits it reads, ordered by their index. That is, any peer receiving this information can generate any longer message the sender would have sent, and thus compute locally the state of every other peer.
\end{proof}

Applying Observation \ref{obs:simulation} 
to the point-to-point algorithm of Section \ref{sec: optimal_time_P2P},
we get the following.

\begin{corollary} 
\label{cor:broadcast-time-query}
There is a $O(\log n)$-round algorithm that w.h.p. performs \download\ in the Broadcast model with expected query complexity and message size $O(n/(\gamma k))$. 
\end{corollary}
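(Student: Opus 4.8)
The plan is to invoke Observation~\ref{obs:simulation} with the $O(\log n)$-round point-to-point protocol of Algorithm~\ref{alg:time-query} (whose guarantees are stated in Theorem~\ref{thm:fast_expected_optimal}) as the algorithm being simulated. The first task is to read off the two parameters Observation~\ref{obs:simulation} needs for that protocol: the number $r$ of random bits a peer generates per round and the number $Q$ of queries it issues per round. In Algorithm~\ref{alg:time-query} the only randomness used in a round is the choice of one interval $\indID$ among at most $\Inum_i \le \Inum \le n$ candidates, so $r = O(\log n)$ in every round; and the queries in round $i$ are the $\varphi = O((n/(\gamma k))\log n)$ indices of the initial interval when $i=0$, and otherwise the internal-node indices of the two decision trees $T_{\indID_L}, T_{\indID_R}$ built from the $t_{i-1}$-frequent sets --- a quantity whose expectation the query analysis behind Theorem~\ref{thm:fast_expected_optimal} already bounds by $O(1/\gamma)$ per round.

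With these in hand I would apply Observation~\ref{obs:simulation} essentially verbatim; the only thing needing a sentence of justification is why the simulation is faithful for this particular protocol. In Algorithm~\ref{alg:time-query} a nonfaulty peer's action in round $i$ depends only on its round-$i$ random draw and on the round-$(i-1)$ messages it received; in the broadcast model those messages are common knowledge, and by the inductive argument inside Observation~\ref{obs:simulation} every peer already knows the full round-$(i-1)$ state of every peer (hence the multisets $S(u)$, hence the decision trees $T_u$, and hence how each peer traverses them). So it suffices for a peer to broadcast only its random bits and the values of the bits it queried; every other peer can then recompute that peer's full state, including the long string $s_\indID$ it would have sent in the point-to-point protocol. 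Consequently the simulated protocol runs in the same $O(\log n)$ rounds, each peer makes exactly the same queries, so the expected query complexity $O((n/(\gamma k))\log n) = \tilde O(n/(\gamma k))$ and the w.h.p.\ correctness of Theorem~\ref{thm:fast_expected_optimal} carry over, and in round $i$ each peer broadcasts a message of size at most $r + Q_i = O(\log n) + Q_i$.

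It then remains to bound the message size. The round-$0$ message has size $O(\log n) + \varphi = O((n/(\gamma k))\log n)$, and for $i>0$ the round-$i$ message has size $O(\log n) + Q_i$, whose expectation is $O(\log n + 1/\gamma)$; over the $O(\log n)$ rounds the round-$0$ term dominates, so the message size is $O((n/(\gamma k))\log n) = \tilde O(n/(\gamma k))$, which is the $O(n/(\gamma k))$ of the statement read with the paper's $\tilde O(\cdot)$ convention. Collecting the three bounds --- $O(\log n)$ rounds, expected query complexity $\tilde O(n/(\gamma k))$, and message size $\tilde O(n/(\gamma k))$ --- yields the corollary.

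The point I expect to be the actual crux, rather than the routine bookkeeping above, is that the message-size bound for rounds $i>0$ holds only in expectation: if some interval is picked by disproportionately many peers, its frequent set --- and hence the decision tree, and hence the number of bits a peer must broadcast --- can be much larger than $n/(\gamma k)$ in the worst case. That is precisely why the corollary (and the $^\ast$ annotation in Table~\ref{tbl: comparative_analysis}) is phrased in expectation, and why obtaining the analogous worst-case guarantee requires the additional common-knowledge argument of Theorem~\ref{thm:broadcast_whp}; for the corollary itself no such extra work is needed.
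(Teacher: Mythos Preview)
Your proposal is correct and matches the paper's intended approach: apply Observation~\ref{obs:simulation} to the $O(\log n)$-round point-to-point protocol of Algorithm~\ref{alg:time-query} (Theorem~\ref{thm:fast_expected_optimal}). The paper's one-line proof actually references ``the point-to-point algorithm of Section~\ref{sec: optimal_time_P2P},'' which appears to be a typo---that algorithm runs in $O(n\log k)$ rounds with a w.h.p.\ (not expected) query bound, so it cannot yield an $O(\log n)$-round statement with expected guarantees; your identification of Algorithm~\ref{alg:time-query} as the intended target is the right reading, and your bookkeeping on $r$, $Q$, and the expectation-only nature of the round-$i>0$ message size is accurate.
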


\paragraph*{From expected to worst case bounds.} 

Algorithm \ref{alg:time-query} gives only expected bounds on the query complexity, rather than worst case bounds, because the faulty peers may concentrate their negative influence on certain intervals rather than spread themselves out. In the broadcast model, these ``overloaded" intervals are the same for all \peers; therefore, the algorithm can negate this adversarial influence by concentrating the \nonfaulty \peers on these intervals to boost the number of copies of the same string generated by \nonfaulty \peers. 

Consider an iteration $i$ of the for-loop in Algorithm {\ref{alg:time-query}}.
We introduce an additional ``boosting" loop of $j=0$ to $\log n $ iterations to be inserted into each iteration $i$ of Algorithm \ref{alg:time-query}, at its end (after line 11) but before the next iteration. The boosting loop is described by Algorithm \ref{alg:broadcast-time-query}. At the start of the boosting loop, $U$ contains all the intervals of size $2^i \varphi$ and by Lemma \ref{lem:high_prob_t_picked} each interval has been picked by at least $t_i$ nonfaulty peers.  An interval $\indID$ at boosting step $l$ is considered ``overloaded" if the number of strings received for it exceeds $(4/\gamma)2^lt_i$. 
Once a $j$ is reached at which an interval $\indID$ is no longer overloaded, it receives the label $j(\indID)$ and is removed from $U$. At least half of $U$ receives no more than twice the average load and is thus eliminated in each boosting iteration.

At the end of each boosting step $j$, each nonfaulty peer picks a random interval from $U$ to determine and broadcasts it. As the size of $U$ shrinks by a factor of at least 2, the probability of picking any particular interval increases by a factor of $2$ with each increment to $l$, so that after $j$ boosting steps, the number of nonfaulty peers which pick an interval in the remaining $U$ is at least $2^j t_i$.

In order to use the boosting, we need to introduce also a change to line \ref{line:construct} of Algorithm {\ref{alg:time-query}}. The point of boosting the number of copies from \nonfaulty \peers in Iteration $i-1$ is to reduce the cost of determining an interval in the next Iteration $i$ from its two subintervals determined in Iteration $i-1$. 
We modify line \ref{line:construct} by changing the parameter $t_{i-1}$ of \ConstDecTree ~ to $2^{j(\indID)}t_{i-1}$. Note that since we determine intervals from Iteration $i-1$, their labels are already defined in Iteration $i$.

\begin{algorithm} 
\caption{Broadcast algorithm boosting loop subroutine for Iteration $i$}
\label{alg:broadcast-time-query}
\begin{algorithmic}[1]
%
\State $U \leftarrow \{\hbox{intervals of size }2^i\varphi \}$
\For{ $j =0$ to $\log \Inum_i$}  
\For{all intervals $\indID \in U$} let $S(u) \leftarrow \{s~|~ \text{received a message of the form}\langle \indID,s,i\rangle \}$ 
\If{$|\FS(S(\indID), 2^{j} t_i)| \leq 4/\gamma$} 

\State
Remove $\indID$ from $U$
\State $j(\indID) \leftarrow j$ 

\EndIf 

\EndFor
\State Choose a random interval $\indID$ from $U$, {\bf determine} the interval and broadcast it. 
\EndFor
\end{algorithmic}
\end{algorithm}

That is, to {\bf determine} an interval $\indID$ in Iteration $i$ we need to take into account $i$:
\begin{enumerate}
\item Case: {$i=0$:} query the bits in the interval. 
\item Case {$i >0$:} 
Let $\indID_L$ and $\indID_R$ be the subintervals of size $2^{i-1}\varphi$ which comprise $\indID$. \\
For $u \in \{L,R\}$,
construct a decision tree from  $\FS(\indID,2^{j(\indID_u)}t_{i-1})$,
determine each subinterval $\indID_u$ and return their concatenation.
\end{enumerate}

\paragraph*{Correctness.}
We begin with the following lemma:

\begin{lemma}\label{lem:overloaded_halved}
   
    For every Iteration $i$ of Algorithm {\ref{alg:time-query}}, after each iteration $j$ of the boosting loop, $|U| \leq \Inum_i/ 2^{j+1}$.
\end{lemma}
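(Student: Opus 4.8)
The plan is to argue by a straightforward averaging (counting) argument within each boosting iteration $j$, together with a simple induction on $j$. Fix an Iteration $i$ of Algorithm \ref{alg:time-query}, and let $U$ be the working set of intervals of size $2^i\varphi$ at the start of boosting iteration $j$. By the inductive hypothesis (with the base case $j=0$ being $|U| \le \Inum_i = \Inum_i/2^0$, which holds since $U$ initially contains all $\Inum_i$ intervals — here I should double-check the off-by-one against the claimed bound $\Inum_i/2^{j+1}$ and, if needed, state the base case as ``after iteration $j=0$, $|U|\le \Inum_i/2$''), we have $|U| \le \Inum_i/2^j$ before iteration $j$ runs. After boosting step $j$, each nonfaulty peer picks a uniformly random interval from the current $U$ and broadcasts the determined string, so the number of nonfaulty peers that land on the remaining intervals is at least $\gamma k$ distributed over $|U|$ intervals; the key quantitative claim (to be carried over from the surrounding text / Lemma \ref{lem:high_prob_t_picked}) is that each surviving interval is picked by at least $2^j t_i$ nonfaulty peers, so the total number of strings broadcast for intervals in $U$ at step $j$ is at most $k$ (one string per peer).

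The heart of the argument is then: an interval $\indID \in U$ is declared ``overloaded'' at step $j$ exactly when $|\FS(S(\indID), 2^j t_i)| > 4/\gamma$, i.e., when it received more than $(4/\gamma)\cdot 2^j t_i$ strings. Since the total number of strings received across all intervals in $U$ is at most $k$, the number of overloaded intervals is at most
\[
\frac{k}{(4/\gamma)\, 2^j t_i} \;=\; \frac{\gamma k}{4\cdot 2^j t_i}.
\]
Plugging in $t_i = E_i/2 = 2^i\cdot 4(c+1)\ln n$ (or, more robustly, $t_i = \gamma k/(2\Inum_i)$ as used earlier, which gives $\gamma k/t_i = 2\Inum_i$), this is at most $\Inum_i/(2\cdot 2^j) = \Inum_i/2^{j+1}$. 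Every interval that is \emph{not} overloaded is removed from $U$ and assigned label $j(\indID) = j$; hence after iteration $j$, $|U|$ is at most the number of overloaded intervals, which is $\le \Inum_i/2^{j+1}$, completing the induction. (One must be slightly careful that the bound $\sum_\indID x_\indID \le k$ — ``each peer broadcasts at most one string'' — is applied to the strings broadcast in boosting step $j$ specifically, not cumulatively; this is exactly the role of the per-step re-picking in Algorithm \ref{alg:broadcast-time-query}.)

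The main obstacle I anticipate is bookkeeping the constants and the exact threshold so the ``at least half of $U$ is eliminated each step'' claim is tight: one needs $\gamma k / t_i$ to be at least $2\Inum_i$, and the definition of ``overloaded'' uses the $\FS$-count with multiplicity parameter $2^j t_i$ rather than the raw number of received strings, so I would want to verify that $|\FS(S(\indID), 2^j t_i)| > 4/\gamma$ indeed implies $\ge (4/\gamma)\,2^j t_i$ strings were received for $\indID$ (this is immediate from the definition of $\FS$: each of the $> 4/\gamma$ frequent strings contributes $\ge 2^j t_i$ copies). A secondary subtlety is that the induction should really be phrased over the \emph{high-probability event} that each interval in $U$ at step $j$ is picked by at least $2^j t_i$ nonfaulty peers; conditioned on that event (which Lemma \ref{lem:high_prob_t_picked} and the doubling-probability argument supply), the counting bound above is deterministic, so the lemma statement — which does not mention probability — should be read as holding on that event, and I would either make that explicit or fold it into a ``w.h.p.'' qualifier consistent with the rest of the section.
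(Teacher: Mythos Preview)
Your core averaging argument --- an interval survives step $j$ only if it received more than $(4/\gamma)2^j t_i$ strings, total strings across all intervals is at most $k$, and plugging $t_i=\gamma k/(2\Inum_i)$ gives at most $\Inum_i/2^{j+1}$ survivors --- is exactly the paper's proof. The inductive framing and the high-probability qualifier you add are both unnecessary: the bound at step $j$ never uses the size of $U$ at step $j-1$ (so no induction is needed), and this lemma is a purely deterministic counting statement about how many intervals can exceed the threshold; the w.h.p.\ event that each surviving interval is picked by at least $2^j t_i$ nonfaulty peers is the content of the \emph{subsequent} lemma, not this one.
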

\begin{proof}
    For an interval $\indID$ to remain in $U$ after an iteration $j$ of the boosting loop it is required that $|\FS(S(\indID), 2^{j} t_i)| > 4/\gamma$, which means that more then $(4/\gamma)\cdot 2^{j} t_i$ \peers (faulty and \nonfaulty) picked $\indID$.

    An averaging argument shows that after Iteration $j$, no more than $\Inum_i/(2\cdot 2^j)$ intervals in $U$ can be picked by more than
    $(4/\gamma)2^{j}t_i$ \peers. Hence, the size of $|U|$ is reduced to at most $\Inum_i/2^{j+1}$.
    
    
\end{proof}
By the fact that the boosting loop runs $\log \Inum_i +1$ iterations and as a direct result of Lemma {\ref{lem:overloaded_halved}} we get the following corollary.
\begin{corollary}
    Every interval is assigned a label at some iteration of the boosting loop
\end{corollary}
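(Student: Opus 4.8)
The plan is to derive the corollary directly from Lemma~\ref{lem:overloaded_halved} by a simple counting argument on the number of boosting iterations. The boosting loop for Iteration~$i$ runs for $j = 0, 1, \ldots, \log \Inum_i$, i.e., for $\log \Inum_i + 1$ iterations in total. By Lemma~\ref{lem:overloaded_halved}, after iteration $j$ of the boosting loop the set $U$ of still-unlabeled intervals satisfies $|U| \le \Inum_i / 2^{j+1}$. First I would plug in the final value $j = \log \Inum_i$: this gives $|U| \le \Inum_i / 2^{\log \Inum_i + 1} = 1/2 < 1$, and since $|U|$ is a non-negative integer, this forces $|U| = 0$. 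Hence after the last boosting iteration no interval remains in $U$.

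It then remains to observe that an interval leaves $U$ \emph{only} by being assigned a label: inspecting Algorithm~\ref{alg:broadcast-time-query}, the sole line that removes an interval $\indID$ from $U$ is the \textbf{if} block, which simultaneously sets $j(\indID) \leftarrow j$. So every interval that is removed from $U$ is labeled at the iteration in which it is removed, and since $U$ starts as the full collection of $\Inum_i$ intervals of size $2^i\varphi$ and ends empty, every interval is removed — and therefore labeled — at some boosting iteration $j \in [0, \log \Inum_i]$. This is exactly the statement of the corollary.

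I would also make sure to note the one edge case worth a sentence: the averaging argument in Lemma~\ref{lem:overloaded_halved} guarantees the \emph{relative} reduction of $U$, but we should confirm that the threshold $4/\gamma$ in the \textbf{if} test is consistent with the total of $k$ peers, namely that at the final step the cumulative bound $(4/\gamma)2^j t_i$ does not exceed $k$ in a way that would vacuously keep intervals in $U$; since $t_i = 2^i(8(c+1)\ln n)$ and $\Inum_i 2^i \varphi \approx n$ with $\varphi = \Theta((n/\gamma k)\log n)$, a short calculation confirms $(4/\gamma)2^{\log \Inum_i} t_i = \Theta(k)$, so no inconsistency arises. I do not expect any real obstacle here — the corollary is an immediate consequence of the lemma plus the fixed iteration count. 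The only thing to be careful about is bookkeeping the off-by-one in the number of boosting iterations ($\log \Inum_i + 1$ of them, indices $0$ through $\log \Inum_i$), which is precisely what makes the final bound $|U| \le 1/2$ rather than $|U| \le 1$, and hence forces $U$ to be empty rather than merely a singleton.
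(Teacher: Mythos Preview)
Your core argument is correct and matches the paper's own proof exactly: the paper simply states that the corollary follows from Lemma~\ref{lem:overloaded_halved} together with the fact that the boosting loop runs $\log \Inum_i + 1$ iterations, and you have spelled out precisely this computation ($|U| \le \Inum_i/2^{\log \Inum_i + 1} < 1$, hence $|U|=0$).

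One minor remark: your edge-case paragraph is unnecessary and its worry is inverted. If the threshold $(4/\gamma)2^j t_i$ ever exceeded $k$, then \emph{no} interval could receive that many strings, so the \textbf{if} condition $|\FS(S(\indID),2^j t_i)| \le 4/\gamma$ would be satisfied trivially and every interval would be \emph{removed} from $U$, not kept. The averaging argument of Lemma~\ref{lem:overloaded_halved} already handles all cases uniformly, so you can simply drop that paragraph.
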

Next, we show that 
\begin{lemma}
    For every iteration $i$ of Algorithm \ref{alg:time-query},
    every interval that remains in $U$ after the $j$-th iteration of the boosting loop is picked by at least $2^{j+1}t_i$ \nonfaulty \peers w.h.p
\end{lemma}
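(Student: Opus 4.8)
The plan is to derive this directly from Lemma~\ref{lem:overloaded_halved} by feeding its size bound on $U$ into a Chernoff bound on the \emph{fresh} random interval choices made during boosting step $j$; no induction is needed, since the conclusion concerns only the picks performed in step $j$ itself. Fix a main iteration $i$ of Algorithm~\ref{alg:time-query} and a boosting step $j\in[0,\log\Inum_i]$, and condition on the entire execution history up to (but not including) the ``choose a random interval $\indID$ from $U$'' instruction of boosting iteration $j$. At that moment $U$ has already been pruned of its overloaded intervals, so it is completely determined; invoking Lemma~\ref{lem:overloaded_halved} (whose low-probability failure event we fold into the final union bound), $|U|\le \Inum_i/2^{j+1}$. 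Each of the $\ge \gamma k$ \nonfaulty \peers present in step $j$ now picks an interval of $U$ independently and uniformly at random --- a purely local coin toss, hence independent of the conditioned-on history and of anything the dynamic adversary does --- so for each surviving interval $\indID\in U$, the number $N_\indID$ of \nonfaulty \peers that pick it is a sum of at least $\gamma k$ i.i.d.\ $\mathrm{Bernoulli}(1/|U|)$ indicators.

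Next comes the parameter bookkeeping. Recall from the proof of Lemma~\ref{lem:high_prob_t_picked} that $\gamma k/\Inum_i = E_i = 2t_i$; combined with the bound on $|U|$ this gives
\[
\Exp[N_\indID] ~\ge~ \frac{\gamma k}{|U|} ~\ge~ \frac{\gamma k\cdot 2^{j+1}}{\Inum_i} ~=~ 2^{j+2}t_i.
\]
Applying the same loose Chernoff estimate $\Pr[X<\mu/2]\le e^{-\mu/8}$ used in Claim~\ref{cl:enough readers} and Lemma~\ref{lem:high_prob_t_picked},
\[
\Pr[N_\indID < 2^{j+1}t_i] ~\le~ e^{-\Exp[N_\indID]/8} ~\le~ e^{-2^{j-1}t_i} ~\le~ e^{-t_i/2},
\]
and since $t_i = E_i/2 \ge 4(c+1)\ln n$, this is at most $n^{-2(c+1)}$.

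Finally I would union-bound this failure probability, together with that of Lemma~\ref{lem:overloaded_halved}, over the $\le \log n+1$ main iterations $i$, the $\le \log n+1$ boosting steps $j$, and the $\le \Inum_i\le n$ surviving intervals $\indID$: since each of the $O(n\log^2 n)$ bad events has probability at most $n^{-2(c+1)}$, the total failure probability is at most $n^{-c}$ for $n$ large enough (shrinking the constant hidden in $t_i$ if necessary). The one delicate point I anticipate is precisely the conditioning step in the first paragraph: the composition of $U$ in step $j$ genuinely depends on earlier coin flips and on the adversary, so one must freeze $U$ \emph{before} invoking the independence and uniformity of the step-$j$ picks, and only then apply the Chernoff bound conditionally on $U$; everything else is routine accounting with $t_i$, $E_i$, and $\Inum_i$.
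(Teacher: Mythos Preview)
Your proof is correct and follows essentially the same route as the paper's: bound $|U|$ via Lemma~\ref{lem:overloaded_halved}, compute $\Exp[N_\indID]\ge 2\cdot 2^{j+1}t_i$, apply the $e^{-\mu/8}$ Chernoff tail, and union-bound over all $(i,j,\indID)$. Two small remarks: Lemma~\ref{lem:overloaded_halved} is actually a deterministic averaging statement (no failure event to fold in), and the paper absorbs your ``shrinking the constant hidden in $t_i$'' by explicitly bumping $\varphi$ from $8(c+1)\ln n$ to $8(c+2)\ln n$; otherwise the arguments coincide.
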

\begin{proof}
    Setting  $\varphi = (n/\gamma k)(8(c+2)\ln n)$ (changing the value from $\varphi = (n/\gamma k)8(c+1)\ln n$ used in the previous subsection),
    the probability of any interval $\indID \in U$ to be picked after the $j$-th iteration of the boosting loop is $1/|U| \geq 2^{j+1}/\Inum_i$ by Lemma \ref{lem:overloaded_halved}. Hence, the expected number of \nonfaulty \peers that will pick $\indID$ is $E_{i,j} \geq \frac{\gamma k 2^{j+1}}{\Inum_i} = 2\cdot 2^{j+1} t_i$. By Chernoff, the probability that $\indID$ is picked by less then $2^{j+1} t_i$ \nonfaulty \peers is at most $e^{-E_{i,j}/8} \leq n^{-(c+2)}$.

    Taking a union bound over the $\sum^{\log \Inum}_{i=0} \sum^{\log \Inum_i}_{j=0} \Inum_i/2^{j} = \sum^{\log \Inum}_{i=0} \Inum_i \sum^{\log \Inum_i}_{j=0} 1/2^{j} \leq  2\cdot \sum^{\log \Inum}_{i=0} \Inum_i \leq 2n$ intervals over all iterations $i$ and $j$, we get a probability of failure of at most $2n\cdot n^{-(c+2)} \leq n^{-c}$ for $c\geq 1$.
\end{proof}

As a direct result of this lemma we get the following.
\begin{corollary} \label{cor:enough_picked}
If an interval $\indID$ is removed from $U$ during Interval $j$ of the boosting loop then $j(\indID) = j$ and $\indID$ is picked by at least $2^{j(\indID)} t_i$ nonfaulty peers w.h.p.
\end{corollary}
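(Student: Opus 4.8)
The plan is to read this corollary off directly from the preceding lemma together with the bookkeeping performed by the boosting loop (Algorithm~\ref{alg:broadcast-time-query}); essentially no new probabilistic estimate is required, since all the relevant Chernoff bounds and union bounds have already been absorbed into that lemma and into Lemma~\ref{lem:high_prob_t_picked}. I would first unpack the definition of the label $j(\indID)$: by inspection of Algorithm~\ref{alg:broadcast-time-query}, an interval $\indID$ leaves $U$ precisely in the iteration $j$ in which the test $|\FS(S(\indID),2^{j}t_i)|\le 4/\gamma$ first succeeds, and in that same iteration the algorithm executes $j(\indID)\leftarrow j$. Hence ``$\indID$ is removed during iteration $j$'' and ``$j(\indID)=j$'' are, by construction, the same event, which settles the first half of the claim.

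For the second half I would observe that if $\indID$ is removed in iteration $j\ge 1$ then it survived every earlier iteration $j'<j$ of the boosting loop, so in particular $\indID$ is still in $U$ at the end of iteration $j-1$. Applying the preceding lemma with parameter $j-1$ then yields that $\indID$ was picked by at least $2^{(j-1)+1}t_i = 2^{j}t_i = 2^{j(\indID)}t_i$ nonfaulty peers w.h.p., which is exactly the desired bound; moreover those peers then ``determine and broadcast'' the interval, so the copies are actually available for use in the modified line~\ref{line:construct} of Algorithm~\ref{alg:time-query}. The only case not covered verbatim by the lemma is $j=0$ (there is no iteration $-1$); for it I would instead invoke Lemma~\ref{lem:high_prob_t_picked}, which guarantees that at the initialization of $U$ every interval of size $2^{i}\varphi$ has already been picked by at least $t_i = 2^{0}t_i$ nonfaulty peers. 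The ``w.h.p.'' qualifier is inherited, since the union bounds establishing the two lemmas already range over all intervals and all iterations $i,j$, so no extra failure probability is incurred.

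I do not expect a genuine obstacle here: the statement is just a repackaging of the lemma in terms of the removal time and the assigned label. The only points demanding care are the $j=0$ boundary case noted above, and matching the phrase ``survived through the end of iteration $j-1$'' against the lemma's phrasing ``remains in $U$ after the $j$-th iteration'' — i.e.\ checking that removals in iteration $j$ happen strictly after the picking step of iteration $j-1$, which is immediate from the order of operations inside the boosting loop.
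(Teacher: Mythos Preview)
Your proposal is correct and follows the same route as the paper, which simply states the corollary ``as a direct result of this lemma'' without further argument. Your treatment is in fact more careful than the paper's: you explicitly separate the $j=0$ case (handled via Lemma~\ref{lem:high_prob_t_picked}) from the $j\ge 1$ case (handled by applying the preceding lemma with parameter $j-1$), and you verify the order-of-operations point that removal in iteration $j$ implies survival through the picking step of iteration $j-1$.
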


Correctness follows from the observation that by Corollary \ref{cor:enough_picked}, every interval $\indID$ is picked by at least $2^{j(\indID)}t_{i-1}$ during Interval $i-1$ w.h.p. Hence, both subintervals of every picked interval in Iteration $i$ are correctly determined w.h.p, which means the interval is determined correctly w.h.p.


\paragraph*{Query complexity.} 
Determining an interval in Iteration 0, i.e., the first iteration of the for-loop in Algorithm \ref{alg:time-query}, requires $\varphi=O(n\ln n/(\gamma k))$ queries  per peer. Each step of Iteration $0$ has a boosting loop of $\log \Inum$ steps, each of which requires two subintervals of size $\varphi$ to be determined by querying every bit in it, for a total of  $2\varphi$ queries. Each iteration $i>0$ also has a boosting loop and each boosting step requires  that two intervals from the preceding phase be determined, but in this case, the number of queries to determine an interval $\indID$ is equal to $|\FS(S(\indID), 2^{j(\indID)} t_i)|=O(1/\gamma)$. 
Thus the total number of queries per \peer\ is

$O((n/(\gamma k)\log^2 n + (1/\gamma) \log^2 n)=O((n \gamma k) \log^2 n)$.

The {\bf message size} is no greater than the number of random bits needed to select an interval which is $O(\log n)$ plus the number of queries in a step which is no more than $O(n\log n/(\gamma k))$ (in a step of Iteration 0), so that the worst case message size is $O(n\log n/(\gamma k))$.  If we spread each Iteration 0 step over $\log n$ steps, this does not affect the asymptotic running time but it does decrease the message size to 
 $O(\log n +n/(\gamma k))$.

 There are $O(\log n)$ iterations,  each iteration with $i\geq0$ has $O(\log n)$
 rounds for the boosting loop, so the running time is $O(\log^2 n)$.

\begin{theorem} 
\label{thm:broadcast_whp} 
In the Broadcast model, there is a protocol with worst case $O(\log^2 n )$ time, $O((n/\gamma k) \log^2 n)$
 queries, and  $O((n/\gamma k) \log n)$ message size. 
\end{theorem}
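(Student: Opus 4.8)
The plan is to assemble the machinery already built in this subsection into a single protocol and verify the three claimed bounds. The protocol is Algorithm~\ref{alg:time-query} with the following modifications: (a) after line~11 of each iteration $i$ we splice in the boosting loop of Algorithm~\ref{alg:broadcast-time-query}; (b) we replace the frequency threshold $t_{i-1}$ in line~\ref{line:construct} by $2^{j(\indID)}t_{i-1}$, where $j(\indID)$ is the label assigned to the relevant subinterval during iteration $i-1$; (c) we run everything in the Broadcast model by applying the simulation of Observation~\ref{obs:simulation}, so that a peer broadcasts only its random bits together with the values of the bits it queried, and every peer can locally reconstruct which interval each peer picked (hence which intervals are overloaded and what the labels $j(\indID)$ are); and (d) we spread each Iteration-$0$ step over $\log n$ consecutive sub-steps so that no broadcast carries more than $O(\varphi/\log n+\log n)$ bits. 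We also enlarge $\varphi$ to $(n/\gamma k)(8(c+2)\ln n)$ as already indicated.

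For correctness I would run the induction on the iteration index $i$. The base case $i=0$ is immediate since a nonfaulty peer that picks a base-size interval queries all its bits. For the inductive step, Lemma~\ref{lem:overloaded_halved} and the corollary following it ensure that within the $\log\Inum_i+1$ boosting steps every interval of size $2^i\varphi$ is removed from $U$ and receives a label $j(\indID)$, and Corollary~\ref{cor:enough_picked} ensures that w.h.p.\ such an interval was picked — hence, by the inductive hypothesis, correctly determined and broadcast — by at least $2^{j(\indID)}t_i$ nonfaulty peers. Consequently, in the next iteration the correct string of each subinterval $\indID_u$ of a picked interval appears at least $2^{j(\indID_u)}t_{i-1}$ times in $S(\indID_u)$, so it lies in $\FS(S(\indID_u),2^{j(\indID_u)}t_{i-1})$; the decision tree \ConstDecTree\ builds from that set therefore contains a leaf labelled by it, and Procedure~\Determine\ returns it. A union bound over the $O(n)$ (interval, boosting-step) pairs across all iterations yields overall success probability $1-n^{-c}$, and at iteration $i=\log\Inum$ the unique interval of size $\varphi_i\ge n$ is the whole array, so every nonfaulty peer outputs $\invec$.

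For the complexity bounds, observe first that the query count is \emph{deterministic}: the randomness only chooses which intervals a peer is assigned, not how many bits it spends on each. Iteration~$0$ together with its $O(\log n)$ boosting steps costs $O(\varphi\log n)=O((n/\gamma k)\log^2 n)$ queries; each iteration $i>0$ has $O(\log n)$ boosting steps and in each the cost of determining an interval is the total size of the two decision trees for its subintervals, which is at most $|\FS(S(\indID_u),2^{j(\indID_u)}t_{i-1})|-1\le 4/\gamma-1$ by the removal condition of the boosting loop — hence $O((1/\gamma)\log^2 n)$ in total, absorbed into $O((n/\gamma k)\log^2 n)$ since $n\ge k$. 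The running time is $O(\log n)$ iterations times $O(\log n)$ boosting steps, i.e.\ $O(\log^2 n)$; spreading Iteration~$0$ over $\log n$ sub-steps adds only another $O(\log^2 n)$ rounds. Finally, by Observation~\ref{obs:simulation} each broadcast has size $O(r+Q)$ where $r=O(\log n)$ and $Q$ is the per-round query count, which after spreading is $O(n/\gamma k)$; thus the message size is $O(\log n+n/\gamma k)=O((n/\gamma k)\log n)$.

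The step I expect to be the main obstacle is justifying the \emph{worst-case} (rather than merely expected) query bound, and this is precisely where the Broadcast assumption is essential: it makes the set of overloaded intervals and the labels $j(\indID)$ common knowledge, so all peers apply the identical deterministic threshold $2^{j(\indID)}t_{i-1}$ and the bound $|\FS|\le 4/\gamma$ at the moment of removal holds simultaneously for everyone. One must also check that $\log\Inum_i$ boosting steps truly suffice for $U$ to empty (the halving of $|U|$ in Lemma~\ref{lem:overloaded_halved}) while the high-probability guarantee that a still-surviving interval attracts at least $2^{j+1}t_i$ nonfaulty pickers survives the $O(\log^2 n)$ union-bound terms — which is exactly why $\varphi$ is enlarged by the $(c+2)$ factor.
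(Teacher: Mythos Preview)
Your proposal is correct and follows essentially the same approach as the paper: you assemble Algorithm~\ref{alg:time-query} with the boosting loop of Algorithm~\ref{alg:broadcast-time-query}, use the common-knowledge property of the Broadcast model (via Observation~\ref{obs:simulation}) to make the overloaded-interval labels $j(\indID)$ identical for all peers, and then read off the three complexity bounds exactly as the paper does. Your identification of the key obstacle---that Broadcast is what converts the expected bound into a worst-case one by ensuring $|\FS|\le 4/\gamma$ holds deterministically at removal time---is precisely the paper's point.
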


\section{Deterministic Download with Crash Faults} \label{sec:crash_faults}

In this section, we present deterministic protocols that solve the \download\ problem under the assumption of a synchronous communication network. First, in Section \ref{sec:static}, we show a warm-up deterministic protocol that achieves optimal query complexity, but its time complexity is $O(nf)$, which is inefficient. Then in Section \ref{sec:rapid}, we show how to improve the time complexity by carefully removing some aspects of synchronizations between \peers.

\subsection{Static Download} 
\label{sec:static}
 
In contrast to the previous section, we consider crash faults instead of byzantine ones. This allows us to circumvent the $\byzfrac n$ lower bound of 
\cite{ABMPRT24}
on the query complexity of \download\, established in \cite{ABMPRT24}. 
We begin by showing a query-optimal deterministic protocol. Later, we present a more complex but also time-optimal protocol.

The static \download\ protocol (Algorithm \ref{alg:static_down}) works as follows.
The execution is partitioned into time segments of $f+1$ rounds each, referred to hereafter as \emph{views}. (In later sections, views may assume other, possibly variable lengths.) Each view $v$ is associated with a publicly known leader denoted $\lead(v)$. We assume throughout that the leader of view $v\geq 0$ is the \peer 
$$
\lead(v) ~=~ v \bmod k.
$$

Each \peer $M$ stores the following local variables.
\begin{itemize}
    \item $\BYZ_M$: The set of crashed \peers that $M$ knows of.
    \item $\view(M)$: $M$'s current view. 
    \item $\indx(M)$: The index of the bit that $M$ currently needs to learn.
   
    \item $\res_M$: $M$'s output. We write $\res_M[i] = \bot$ to indicate that $M$ did not yet assign 
    a value to the $i$'th cell. 
   
\end{itemize}

\paragraph{View $v$ structure.}
First, the leader of the view, $\lead(v)$, queries $b_i$ where $i=\indx(\lead(v))$ (we show in Lemma \ref{lem:in_sync} that at any given time, $\indx(M)$ is the same for every \peer $M$). 
Then, for $f+1$ rounds, every \peer $M$ sends $\res_M[\indx(M)]$ (if it knows 

it, i.e., $\res_M[\indx(M)]\neq \bot$) to every other \peer.
After $f+1$ rounds, every \peer $M$ checks its local value $\res_M[\indx(M)]$. If $\res_M[\indx(M)] = \bot$, it concludes that $\lead(v)$ has crashed and adds it to $\BYZ_M$ (and the bit $\indx(M)$ will have to be queried again in the next view, by another leader). Otherwise, it increases $\indx(M)$ by 1. (We show in Lemma \ref{lem:local_sync} that either every \peer $M$ has $\res_M[\indx(M)] = \bot$ or every \peer $M$ has $\res_M[\indx(M)]\in \{0,1\}$.)
Finally, every \peer $M$ enters the next view $v'$ such that $\lead(v') \notin \BYZ_M$.
We remark that once a new bit is learned, and the {\peers} enter the next view, the leader is replaced even if it did not crash. This is done in order to balance the workload over the {\peers}.

See Algorithm \ref{alg:static_down} for the formal code.

\begin{algorithm}
    \caption{Static \download\ (Code for \peer $M$)} 
    \label{alg:static_down}
    \begin{algorithmic}[1]
    \State \textbf{Local variables} 
        \State $\res \gets \emptyset$
        \State $\BYZ \gets \emptyset$
        \State $\indx(M) \gets 1$
        \State $\view(M) \gets 1$
        \State
        
        \While{$\indx(M) \leq n$}
            \If{$\lead(v) = M$} \Comment{Query Step}
                    \State $\res[\indx(M)] \gets \DataQuery(\indx(M))$
                \EndIf
                \For{$t=1, \dots, f +1$} \Comment{Message Step}
                     \If{$\res[\indx(M)] \ne \bot$}
                    Send a \view $v$ message
                    $\langle \view\ v, \indx(M), \res[\indx(M)] \rangle$
                    to all \peers.
                    \EndIf
                    \State Receive \view $v$ messages from all other \peers.
                    \If{received a \view $v$ message containing 
                    $\langle i, b\rangle$ from at least one \peer}
                        \State 
                        $\res[i] \gets b$
                    \EndIf
                \EndFor
                \If{$\res[\indx(M)] = \bot$} \Comment{View change Step}
                    \State $\BYZ \gets \BYZ \cup \{\lead(v)\}$
                \Else
                    \State $\indx(M) \gets \indx(M) + 1$
                \EndIf
                \State Let $v' >v$ be the least view such that $\lead(v') \notin \BYZ$ and set $\view(M) \gets v'$
                \label{line:next_view}
        \EndWhile
    \end{algorithmic}
\end{algorithm}

Since every view takes exactly $f+1$ rounds, we get the following.
\begin{observation}\label{obs:same_pace}
    Every \peer $M$ is at the same step/round within its current view, $\view(M)$.
\end{observation}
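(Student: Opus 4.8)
The plan is to prove Observation~\ref{obs:same_pace} by a straightforward induction on the number of global rounds elapsed since the start of the execution. The inductive invariant I would maintain is: at the beginning of every global round $r$, there is a single value $t(r)\in\{1,\dots,f+1\}$ such that every non-crashed \peer $M$ is poised to execute the $t(r)$-th round of the message step of its current view $\view(M)$. Since the statement only concerns the position of a \peer \emph{within} its current view (and not whether all \peers share the same $\view(M)$, which is handled separately via Lemma~\ref{lem:local_sync}), this invariant is exactly what needs to be shown.

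First I would establish the base case: all \peers initialize $\indx(M)\gets 1$, $\view(M)\gets 1$ and enter the \texttt{while}-loop simultaneously, so in the first global round each of them is about to execute round $1$ of the message step of view~$1$; thus $t(1)=1$. Here I would be explicit about the three-sub-round structure of a round to justify that a view occupies \emph{exactly} $f+1$ rounds: the leader's query (line for $\lead(v)=M$) is performed in the query sub-rounds of the first round of the view and hence consumes no round of its own, and the view-change step (updating $\BYZ_M$, $\indx(M)$, and $\view(M)$) is local computation performed at the beginning of the next sub-round, so it too consumes no round.

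For the inductive step, assume the invariant holds at the start of global round $r$ with value $t(r)$. During round $r$ every non-crashed \peer executes the same message-step round --- sending $\res_M[\indx(M)]$ when it is known, receiving, and updating $\res_M$ --- because which message-step round is executed depends only on $t(r)$, not on the identity of the leader or on the contents of $\BYZ_M$. If $t(r)<f+1$, then after round $r$ every \peer is poised to execute message-step round $t(r)+1$ of the same view, so $t(r+1)=t(r)+1$. If $t(r)=f+1$, then every \peer completes the message step, performs the (instantaneous) view-change step, and enters the first round of the message step of its new current view, so $t(r+1)=1$. Either way the invariant is preserved, and the observation follows.

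The proof is essentially bookkeeping; the one point that needs care --- and the closest thing to an obstacle --- is being precise about where the leader's query and the view-change computation sit in the round/sub-round timeline, so that a view provably occupies exactly $f+1$ rounds and the view-change step is genuinely instantaneous rather than blocking. For the latter I would also note in passing that the search for the least $v'>v$ with $\lead(v')\notin\BYZ_M$ always terminates, since leaders cycle modulo $k$ and at least one \peer is never crashed, so at most $k$ view indices are examined.
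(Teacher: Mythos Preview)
Your proposal is correct and follows essentially the same approach as the paper. The paper's own justification is a single sentence---``Since every view takes exactly $f+1$ rounds, we get the following''---and your induction on global rounds is simply a careful unpacking of that sentence, making explicit that the query and view-change steps consume no additional rounds so that each view occupies exactly $f+1$ global rounds.
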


For every view $v$, let $t^s_v$ be the first round a \peer $M$ set $\view(M)=v$ and $t^e_v$ be the first round where a \peer $M$ set $\view(M)=v'$ for $v'>v$. We say that the \peers are \emph{in sync in view $v$} 
if at round $t^s_v$, $\view(M)=v$ and $\indx(M)$ and $\BYZ_M$ are the same for every \peer $M$, and at round $t^e_v$, every \peer enters some view $v'>v$ (it follows from the ensuing analysis that it is the same $v'$ for every \peer). 

When the \peers are in sync in view $v$, we denote by $\indx(v)$ the common value of $\indx(M)$ and by $\BYZ_v$ the common value of $\BYZ_M$ for every \peer $M$.

\begin{lemma} \label{lem:local_sync}
    If the \peers are in sync in view $v$, then after the message step in view $v$, the bit $\indx(v)$ is either downloaded (and known to all {\peers}) or missing (i.e., not known to any of the {\peers}).
\end{lemma}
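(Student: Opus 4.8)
The plan is to track how the target bit $i:=\indx(v)$ spreads during the $f+1$ rounds of the message step and to play this against the fact that at most $f$ \peers crash in total. Since the \peers are in sync in view $v$, every \peer that is still running agrees that the index to be learned in this view is $i$, so the leader $\lead(v)$ is the unique \peer that issues $\DataQuery(i)$ at the query step (note $\lead(v)\notin\BYZ_v$, as otherwise, by the rule in line~\ref{line:next_view}, no \peer would have entered view $v$ at all), and by Observation~\ref{obs:same_pace} all running \peers execute the $f+1$ message rounds of view $v$ in lockstep. Throughout, values are never corrupted (crash faults only), so it suffices to reason about which \peers hold $\res[i]\neq\bot$.

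For $t=0,1,\dots,f+1$, let $S_t$ be the set of \peers holding $\res[i]\neq\bot$ at the end of round $t$ of the message step, where ``round $0$'' denotes the state right after the query step; and let $A_t$ be the set of \peers still running at the start of round $t$. Then $S_0=\{\lead(v)\}$ if the leader obtained $b_i$ from the \Source\ (which it may fail to do if it crashes during the query step) and $S_0=\emptyset$ otherwise, and $S_{t-1}\subseteq S_t$ for every $t$ since a \peer never discards a learned bit. The key is the following propagation claim: if for some round $t\in[1,f+1]$ there is a \peer $u\in S_{t-1}\cap A_t$ that does \emph{not} crash during round $t$, then $u$ broadcasts $\res[i]$ (inside a view-$v$ message) to every \peer, and since $u$ does not crash this message reaches every \peer that is still running at the end of round $t$; hence from the end of round $t$ onward every running \peer holds $\res[i]\neq\bot$, and in particular after round $f+1$ the bit $i$ is downloaded and known to all running \peers --- the first alternative of the lemma.

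Suppose the propagation claim is never triggered. Then for every round $t\in[1,f+1]$ with $S_{t-1}\cap A_t\neq\emptyset$, \emph{every} \peer in $S_{t-1}\cap A_t$ crashes during round $t$; since each \peer crashes in at most one round and there are at most $f$ crashes, there are at most $f$ such rounds, so there is a round $t^*\in[1,f+1]$ with $S_{t^*-1}\cap A_{t^*}=\emptyset$ (this subsumes the case $S_0=\emptyset$, taking $t^*=1$). A \peer not already in $S_{t-1}$ can receive $i$ in round $t$ only from some \peer in $S_{t-1}\cap A_t$, so an easy induction using $A_{t+1}\subseteq A_t$ gives $S_{t-1}\cap A_t=\emptyset$ and $S_t=S_{t^*-1}$ for all $t\ge t^*$. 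Therefore after round $f+1$ no running \peer holds $\res[i]$, i.e., the bit $i$ is missing --- the second alternative. This exhausts the cases and proves the lemma.

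I expect the only delicate point to be the bookkeeping around crashes happening \emph{in the middle of a sub-round}: a \peer that crashes during round $t$ may still have relayed $\res[i]$ to some \peers before dying, so $S_t$ can strictly contain $S_{t-1}$ even when no surviving propagator is created. This is exactly why the argument is phrased in terms of the surviving holders $S_{t-1}\cap A_t$ rather than $S_{t-1}$, and why the pigeonhole over the $f+1$ message rounds is applied to the pairwise-disjoint sets of \peers that crash in each round. Everything else is routine.
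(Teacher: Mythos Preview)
Your proof is correct. It takes a forward, set-tracking approach: you follow the evolving set $S_t$ of holders and the set $A_t$ of live peers, and use pigeonhole on the $f+1$ rounds versus at most $f$ crashes to force either a round with a surviving broadcaster or a round after which no live peer holds the bit. The paper instead argues backward: assuming some nonfaulty peer $M$ ends the view with $\res[\indx(v)]\ne\bot$, it either received the value before round $f{+}1$ (and then rebroadcasts successfully) or it received it exactly at round $f{+}1$, in which case one traces a chain $\hat M_1,\dots,\hat M_{f+1}$ of distinct relayers; by pigeonhole one of them is nonfaulty and falls into the first case. Both arguments hinge on the same $f{+}1$-vs-$f$ counting; the paper's chain argument is shorter and avoids the auxiliary sets, while your formulation makes the global dynamics and the handling of mid-round crashes more explicit, which is arguably cleaner for the ``missing'' alternative.
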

\begin{proof}
    If every \nonfaulty \peer has $\res[\indx(v)] = \bot$ at the end of view $v$, then $\indx(v)$ is missing. Otherwise, let $M$ be a \nonfaulty \peer with $\res[\indx(v)] = b$ at the end of view $v$. Consider the following cases.

    \inline Case 1 - $M$ first received $b$ at iteration $t<f+1$:
        At the next iteration $t+1$, $M$ sends $\langle \indx(v), b \rangle$ 
        to every other \peer, and since it is not faulty, $\res[\indx(v)] = b$ for every \nonfaulty \peer, so $\indx(v)$ is downloaded after the for loop.
    
    \inline Case 2 - $M$ first received $b$ at iteration $t=f+1$:
        In this case, there exists a chain of \peers $\hat{M}_1, \dots, \hat{M}_{f+1}$ such that $\hat{M}_j$ sends $b$ to $\hat{M}_{j+1}$ on rounds $1 \leq j\leq f$ of the for loop in the message step and $\hat{M}_{f+1}$ sends $b$ to $M$. Since there are $f+1$ \peers in the chain (no duplicates are possible since it would indicate that a crashed \peer sent a message after it crashed), at least one of them is \nonfaulty. Let $\hat{M}$ be the \nonfaulty \peer in the chain. We observe that $\hat{M}$ fulfills the condition of Case 1, so $\indx(v)$ is downloaded after the for loop.
\end{proof}

\begin{corollary}\label{cor:same_local_values}
    If \peers are in sync in view $v$, then they share a common value of their local variables throughout the view.
\end{corollary}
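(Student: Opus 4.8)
The plan is to follow an arbitrary \peer $M$ through the three phases of view $v$ --- the query step, the $(f+1)$-round message step, and the view-change step --- and to show that each local variable $\view(M),\indx(M),\BYZ_M,\res_M$ agrees with that of every other \peer at every round of the view. Since every view lasts exactly $f+1$ rounds and, by hypothesis, all \peers hold $\view(M)=v$ at round $t^s_v$, Observation~\ref{obs:same_pace} ensures that all \peers execute the same phase of view $v$ in the same round, so it suffices to argue phase by phase; the base case is the in-sync hypothesis, which gives the common values $\view(M)=v$, $\indx(M)=\indx(v)$, $\BYZ_M=\BYZ_v$ at round $t^s_v$. I would also record at the outset that $\res_M[\indx(v)]=\bot$ for every \peer at the start of view $v$: indices are processed in increasing order and $\indx(M)$ is only ever incremented, so cell $\indx(v)$ was either never touched before, or was last handled in an earlier view that ended with its bit missing from everyone.

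During the query step and the message step, the code of Algorithm~\ref{alg:static_down} modifies none of $\view(M),\indx(M),\BYZ_M$, so these stay equal to the common values $v,\indx(v),\BYZ_v$. The only array cell that can be written during view $v$ is $\res_M[\indx(v)]$, and any value written there is either $\bot$ or the correct value of bit $\indx(v)$, since a crashed \peer may omit a message but, unlike a Byzantine one, cannot forge a false bit. By Lemma~\ref{lem:local_sync}, after the message step the bit $\indx(v)$ is either downloaded by every \peer or missing from every \peer, so $\res_M[\indx(v)]$ is again common. Agreement on the remaining prefix $\res_M[1,\ldots,\indx(v)-1]$ is part of the outer induction over views: each earlier view either left $\indx$ unchanged or advanced it by one cell whose value became common by Lemma~\ref{lem:local_sync}.

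Finally, in the view-change step the behaviour splits according to the common value of $\res_M[\indx(v)]$. If it is $\bot$, every \peer adds $\lead(v)$ to its set of known crashed \peers, so $\BYZ_M$ becomes the common set $\BYZ_v\cup\{\lead(v)\}$ and $\indx(M)$ stays at $\indx(v)$; otherwise every \peer increments $\indx(M)$ to $\indx(v)+1$ and leaves $\BYZ_M=\BYZ_v$ unchanged. In either case the updated $\BYZ_M$ is identical across \peers, hence the least view $v'>v$ with $\lead(v')\notin\BYZ_M$ computed on line~\ref{line:next_view} is the same for all of them, and every \peer sets $\view(M)\gets v'$ at round $t^e_v$. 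This establishes the corollary, and along the way re-derives the clause in the definition of being in sync stating that all \peers enter the same next view $v'$. (For a \peer that crashes in the middle of view $v$, the statement is read over the rounds preceding its crash, after which its frozen state still coincides with the common value it last held.) I expect the only mildly delicate points to be the two facts set up at the start --- that $\res_M[\indx(v)]$ is $\bot$ for everyone at the beginning of the view, and that any relayed bit value is necessarily correct --- everything else being routine bookkeeping on top of Observation~\ref{obs:same_pace} and Lemma~\ref{lem:local_sync}.
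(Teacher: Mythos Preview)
Your proposal is correct and follows essentially the same approach as the paper: the in-sync hypothesis gives the common starting values, the control variables $\view(M),\indx(M),\BYZ_M$ are untouched until the view-change step, and then Lemma~\ref{lem:local_sync} forces all peers to take the same branch and hence update identically. The paper's proof is much terser---it simply asserts that ``the only time the local variables change after the start of the view is during the view change step'' and then invokes Lemma~\ref{lem:local_sync}---whereas you additionally track $\res_M[\indx(v)]$ through the message step and handle the crashed-peer caveat, both of which the paper silently elides.
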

\begin{proof}
    The statement is true by definition at the beginning of the view. The only time the local variables change after the start of the view is during the view change step. At that time, by Lemma \ref{lem:local_sync},
    $\indx(v)$ is either downloaded, in which case every \peer $M$ sets $\indx(M)= \indx(v)+1$, or it is missing, in which case every \peer $M$ adds $\lead(v)$ to $\BYZ_M$. Overall, $\BYZ_M = \BYZ_{M'}$ and $\indx(M)= \indx(M')$ for every pair of \peers $M, M'$.
\end{proof}

\begin{lemma} \label{lem:in_sync}
At every round, the \peers are in sync in some view $v\geq 1$.
\end{lemma}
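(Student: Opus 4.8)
The plan is to prove this by induction on the sequence of views that the peers actually enter. Let $v_1 = 1 < v_2 < v_3 < \cdots$ enumerate these views in order, and for each $j$ recall the round markers $t^s_{v_j}$ and $t^e_{v_j}$ from the definition of being in sync. I will establish simultaneously that (a) the peers are in sync in every view $v_j$, and (b) the intervals $[t^s_{v_j}, t^e_{v_j})$ are consecutive, with $t^e_{v_j} = t^s_{v_{j+1}}$ and $t^s_{v_1} = 1$, so that their union is the entire execution and every round lies inside exactly one $v_j$ in which the peers are in sync. Together (a) and (b) give the lemma.

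For the base case I would observe that at round $1 = t^s_{v_1}$ every \peer has just executed the \textbf{Local variables} block of Algorithm~\ref{alg:static_down}, so $\view(M) = 1$, $\indx(M) = 1$ and $\BYZ_M = \emptyset$ hold for every \peer $M$; this is precisely the ``start'' half of being in sync in view~$1$.

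The inductive step is the crux. Assume the peers are in sync in view $v_j$; in particular, at $t^s_{v_j}$ they share the common view $v_j$, a common index $\indx(v_j)$, and a common set $\BYZ_{v_j}$. Since every view spans a fixed number of rounds ($f+1$ message rounds, with the leader's query fitting into the query sub-round of the first of them), all non-crashed peers proceed through view $v_j$ in lockstep by Observation~\ref{obs:same_pace}, hence reach the view-change step together at what will be round $t^e_{v_j}$. By Corollary~\ref{cor:same_local_values} — whose hypothesis is exactly that the peers are in sync in $v_j$ — they hold common local variables throughout the view, and by Lemma~\ref{lem:local_sync} the bit $\indx(v_j)$ is, after the message step, either downloaded by all peers or by none. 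In the first case every \peer sets $\indx \gets \indx(v_j)+1$; in the second every \peer sets $\BYZ \gets \BYZ_{v_j} \cup \{\lead(v_j)\}$. Either way the updated $\indx$ and $\BYZ$ remain identical across peers, so when each \peer computes the least view $v' > v_j$ with $\lead(v') \notin \BYZ$ (line~\ref{line:next_view}) it obtains the same $v'$, which is by definition $v_{j+1}$. Hence at round $t^e_{v_j} = t^s_{v_{j+1}}$ all peers enter $v_{j+1}$ with common $\indx$ and common $\BYZ$, which closes the ``end'' half of ``in sync in $v_j$'' and simultaneously supplies the ``start'' half of ``in sync in $v_{j+1}$''. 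If instead $\indx(v_j) = n$ and it is downloaded, then $\indx$ becomes $n+1$, every \peer exits the \texttt{while} loop, and the execution terminates with no further rounds to account for.

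The main obstacle I expect is keeping the argument non-circular: Observation~\ref{obs:same_pace} and Corollary~\ref{cor:same_local_values} both implicitly presuppose that the peers entered the current view synchronously with identical state — which is exactly the ``in sync'' property — so I would set up the induction so that the in-sync property of $v_j$ is the only input used to derive the in-sync property of $v_{j+1}$, and state those two facts as conditional on it rather than as free-standing. Minor care is also needed for the degenerate cases: a leader that does not crash is nonetheless replaced (line~\ref{line:next_view} forces $v' > v$), so views keep turning over; termination occurs once $\indx$ exceeds $n$; and ``every \peer'' should be read as ``every \nonfaulty \peer'', since a \peer that has already crashed cannot be asked to keep pace and trivially does not violate the synchrony of the others.
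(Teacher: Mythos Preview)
Your proposal is correct and follows essentially the same approach as the paper: both argue by induction on views, using the fixed view length (Observation~\ref{obs:same_pace}) to keep peers in lockstep and Lemma~\ref{lem:local_sync}/Corollary~\ref{cor:same_local_values} to propagate the common state through the view-change step. The only cosmetic difference is that the paper phrases this as a minimal-counterexample contradiction (``let $v$ be the first view where the peers are not in sync'') rather than a direct forward induction, but the logical content is identical; your version is in fact more explicit about the interval decomposition and the termination and crashed-peer corner cases.
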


\begin{proof}
Towards contradiction, let $v$ be the first view where the \peers are not in sync in $v$. Hence, on round $t^s_v$ either there exists a \peer $M$ that didn't enter view $v$ or not all \peers have the same values in their local variables (note that for $v=1$, this cannot happen by the initialization values of Algorithm {\ref{alg:static_down}} so, $v\geq 2$).
    
The former cannot happen because by Observation \ref{obs:same_pace}, every \peer enters a new view at the same time, so $M$ must have started some view $\hat{v} \neq v$ at round $t^s_v$. Yet, the \peers were in sync in every view $v'<v$ which means that by Corollary \ref{cor:same_local_values}, in the view prior to $v$, say $v'$, every \peer $M$ had the same value of $\BYZ_M $ meaning that every \peer enters the same view at the end of view $v'$, so $\hat{v} = v$, contradiction. 
    
The latter cannot happen because the local values of \peer $M$ at round $t^s_v$ are the same as the local values at round $t^e_{v'}$ of \peer $M$, where $v'<v$ is the view prior to $v$, for every \peer $M$, and by Corollary \ref{cor:same_local_values} those are the same for every \peer.
\end{proof}

\begin{lemma}
    If the \peers are in sync in view $v$, then when a \peer sets $\res[i] \gets b$, it follows that $b=b_i$.
\end{lemma}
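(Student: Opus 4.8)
The plan is to show that every value a \peer ever writes into $\res$ is the genuine bit of the \Source, by tracking where written values come from. The two facts I would establish first are: (i) during view $v$ the only cell of $\res$ written is the common index $\indx(v)$, and that cell equals $\bot$ at every \peer at the start of view $v$; and (ii) the only ``fresh'' value introduced in view $v$ is the one the leader obtains by a direct query, which is correct by definition of the \Source, while every other write merely copies a value that is already present.

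For (i), I would invoke Lemma~\ref{lem:in_sync} to know the \peers are in sync at every round, so throughout view $v$ all \peers share the same index $\indx(v)$ (Corollary~\ref{cor:same_local_values}); since crash-faulty \peers can only halt and never send corrupted data, every view-$v$ message carries the index $\indx(v)$, so the only cell written in view $v$ — in the query step by the leader or in the message step by any \peer — is $\res[\indx(v)]$. To see that $\res_M[\indx(v)] = \bot$ at every \peer $M$ at the start of view $v$, I would note that a \peer writes $\res[j]$ only while its current index equals $j$, that the current index is non-decreasing, and that it advances past $j$ exactly when bit $j$ becomes downloaded (known to all \peers, by Lemma~\ref{lem:local_sync}); hence in every view preceding $v$ whose common index equals $\indx(v)$ the bit was \emph{not} downloaded, so by Lemma~\ref{lem:local_sync} it was ``missing'' and $\res[\indx(v)]$ stayed $\bot$, while in still earlier views the common index is strictly below $\indx(v)$ and the cell is untouched from initialization.

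With (i) and (ii) in hand, correctness of the writes in view $v$ follows by a short induction on the round $t \in \{0,1,\dots,f+1\}$ inside the view, taking $t=0$ to be the moment just after the leader's query step. In the base case, the only \peer possibly having $\res[\indx(v)] \neq \bot$ is the leader $\lead(v)$, whose value equals $\DataQuery(\indx(v)) = b_{\indx(v)}$ because the \Source answers correctly (and if $\lead(v)$ crashed before its query step, no value is introduced and the claim is vacuous for this view). In the inductive step, a \peer sets $\res[\indx(v)] \gets b$ in round $t+1$ only upon receiving a view-$v$ message $\langle \indx(v), b \rangle$ from some \peer $M'$ that had $\res_{M'}[\indx(v)] = b \neq \bot$ after round $t$, so $b = b_{\indx(v)}$ by the induction hypothesis. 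Thus every write performed in view $v$ assigns $b_{\indx(v)}$ to index $\indx(v)$; since by Lemma~\ref{lem:in_sync} the \peers are in sync in every view, this yields $b = b_i$ whenever a \peer sets $\res[i] \gets b$ anywhere in the execution.

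The step I expect to be the main obstacle is the second half of (i): proving that $\res[\indx(v)] = \bot$ at all \peers at the start of view $v$. This requires carefully combining the facts that indices are processed in increasing order, that a write always targets the current index, and that the index advances precisely when Lemma~\ref{lem:local_sync} certifies the bit as downloaded, in order to rule out the possibility that a stale (and possibly incorrect) value for $\indx(v)$ survives from an earlier view whose leader had crashed.
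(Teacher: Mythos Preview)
Your approach is correct and is essentially the paper's argument spelled out with more scaffolding: the paper simply observes that any value $b$ written into $\res[i]$ traces back along a chain $M_1,\dots,M_r=M$ of message-passings to a \peer $M_1$ that obtained $b$ by a query, and since being in sync forces $\indx(M_1)=i$, that query returned $b_i$. Your extra effort on the ``main obstacle''---showing $\res[\indx(v)]=\bot$ at the start of the view---is unnecessary, because in the crash-fault model a stale value could never be \emph{incorrect} (crashed \peers halt but do not lie), so even if the chain reached back into an earlier view the same reasoning would give $b=b_i$ there; the $\bot$ claim only serves to confine the induction to a single view, which the paper does not bother to do.
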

\begin{proof}
    For \peer $M$ to set $\res[i]\gets b$, there must exists a sequence of \peers \- $M_1, M_2, \dots, M_r$ of length $r$ such that $M_r=M$, $M_j$ sent $b$ to $M_{j+1}$ for $j=1,\dots, r-1$ and $M_1$ made a query and received $b$. Since the \peers are in sync, the local $\indx(M)$ of every \peer $M$ is the same and equals $i$, so $M_1$ queried $b_i$, hence $b=b_i$.
    \end{proof}

It follows that if $\res[i] \neq \bot$, then $\res[i] = b_i$.

Note that $\indx(v)\leq \indx(v+1) \leq \indx(v)+1$ and $\indx(1)=1$. 
We get the following corollary by combining Lemmas \ref{lem:local_sync} and \ref{lem:in_sync}.
\begin{corollary} \label{cor:downloaded_or_missing}
    After every view $v$, $\indx(v)$ is either downloaded or missing.
\end{corollary}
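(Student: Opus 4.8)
The plan is to obtain this as an essentially immediate consequence of the two lemmas just proved, so the argument is a short bookkeeping step. First I would clarify the quantification: the statement is to be read over the views $v$ that are actually entered by some \peer during the execution (for a skipped view $v$ the symbol $\indx(v)$ is not defined in the first place). By Lemma~\ref{lem:in_sync}, at every round the \peers are in sync in some view, and in particular every entered view $v$ is one in which the \peers are in sync; hence the common value $\indx(v)$ of $\indx(M)$ at round $t^s_v$ is well-defined, and by Corollary~\ref{cor:same_local_values} it stays the common value of all \peers throughout view $v$.

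Next I would apply Lemma~\ref{lem:local_sync} to this in-sync view $v$: after the $f+1$-round message step of view $v$, the bit at position $\indx(v)$ is either downloaded, meaning $\res_M[\indx(v)] \ne \bot$ for every \peer $M$, or missing, meaning $\res_M[\indx(v)] = \bot$ for every \peer $M$. Finally I would observe that between the end of the message step and the end of view $v$ the only thing that happens is the view-change step, which performs neither a query nor any message exchange concerning the bit at index $\indx(v)$; therefore the downloaded/missing status of $\indx(v)$ is the same at the end of view $v$ as it was right after the message step, and the corollary follows.

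The only point that needs any care — and is really the sole ``obstacle'', though a very mild one — is the well-definedness of $\indx(v)$: one must know that the local indices $\indx(M)$ of the \peers genuinely coincide at the start of each entered view before it is meaningful to speak of ``$\indx(v)$'' or of that bit being downloaded/missing ``for all \peers''. This is precisely what the in-sync property of Lemma~\ref{lem:in_sync}, together with Corollary~\ref{cor:same_local_values}, guarantees, so once those are invoked nothing further is required.
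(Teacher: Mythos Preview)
Your proposal is correct and matches the paper's approach exactly: the paper states the corollary as an immediate combination of Lemmas~\ref{lem:local_sync} and~\ref{lem:in_sync}, which is precisely what you do. Your added care about the well-definedness of $\indx(v)$ and the irrelevance of the view-change step to the downloaded/missing status is sound elaboration of the same argument.
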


\begin{lemma} \label{lem:nonfaulty_leader}
    For view $v>0$, if $\lead(v)$ is \nonfaulty, then $\indx(v)$ is downloaded at the end of $v$.
\end{lemma}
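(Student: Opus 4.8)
The plan is to leverage Lemma~\ref{lem:in_sync} together with the internal structure of a single view. First I would invoke Lemma~\ref{lem:in_sync} to conclude that the \peers are in sync in view $v$; in particular, every \peer (including $\lead(v)$) has the same local index, equal to $\indx(v)$, so the term $\indx(v)$ in the statement is well defined and $\indx(\lead(v)) = \indx(v)$. This is the only point that needs a small amount of care: it guarantees that the leader's Query Step is about the ``right'' bit, namely the common missing bit $\indx(v)$, rather than some desynchronized index.

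Next I would trace through the Query Step of view $v$. Since $\lead(v)$ is \nonfaulty, it executes $\res[\indx(v)] \gets \DataQuery(\indx(v))$, obtaining the true value $b_{\indx(v)} \in \{0,1\}$; hence $\res_{\lead(v)}[\indx(v)] \ne \bot$ already before the message step begins. Then I would look at round $t=1$ of the message step: because $\res_{\lead(v)}[\indx(v)] \ne \bot$ and $\lead(v)$ is \nonfaulty (so it neither crashes nor deviates), it sends the \view $v$ message $\langle \view\ v,\ \indx(v),\ b_{\indx(v)}\rangle$ to \emph{every} \peer in that round. Per the code, every \peer $M$ that receives such a message sets $\res_M[\indx(v)] \gets b_{\indx(v)}$, and this value is never reset during the view. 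Thus $\res_M[\indx(v)] \ne \bot$ for every \peer $M$ at the end of the message step of view $v$, which is precisely the statement that $\indx(v)$ is downloaded; by the preceding lemma establishing correctness of writes, this value also equals $b_{\indx(v)}$.

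I do not expect a genuine obstacle here: once the in-sync invariant of Lemma~\ref{lem:in_sync} is in hand, the claim follows immediately from synchrony and from the fact that a \nonfaulty leader queries at the very start of its view and broadcasts the result in the first round of the message step. The proof is therefore short, and I would present it essentially as the three observations above (in sync $\Rightarrow$ leader queries $\indx(v)$; nonfaulty leader broadcasts $b_{\indx(v)}$ in round $1$; all \peers record it).
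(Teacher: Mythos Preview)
Your proposal is correct and follows essentially the same approach as the paper's own proof. The paper compresses the argument into two sentences (leader queries $\indx(v)$ and broadcasts it; every \nonfaulty \peer then records $b_{\indx(v)}$), whereas you make explicit the appeal to Lemma~\ref{lem:in_sync} so that $\indx(v)$ is well defined and equals $\indx(\lead(v))$, and you spell out the first round of the message step; these are the same steps, just with more detail.
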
 

\begin{proof}
    At the beginning of view $v$, $\lead(v)$ queries the bit $\indx(v)$ and sends it to every other \peer. Subsequently, every \nonfaulty \peer sets $\res[\indx(v)] \gets b_{\indx(v)}$ so $\indx(v)$ is downloaded.
\end{proof}

By Lemma \ref{lem:nonfaulty_leader} and Corollary \ref{cor:downloaded_or_missing}, we get the following corollary.

\begin{corollary} \label{cor:justified_blacklist}
    For view $v>0$, if $\indx(v)$ is missing at the end of view $v$, then $\lead(v)$ crashed, and it is added to $\BYZ$.
\end{corollary}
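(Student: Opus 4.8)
The plan is to obtain the corollary as an immediate consequence of Lemma~\ref{lem:nonfaulty_leader} (used in contrapositive form), Corollary~\ref{cor:downloaded_or_missing}, and Lemma~\ref{lem:in_sync}. First I would invoke Lemma~\ref{lem:in_sync} to ensure that, at the round the execution enters view $v$, the \peers are in sync in $v$; in particular the common index $\indx(v)$ is well defined, so the hypothesis ``$\indx(v)$ is missing at the end of view $v$'' is meaningful, and ``missing'' here means $\res_M[\indx(v)]=\bot$ for every \peer $M$ (the same notion used in Lemma~\ref{lem:local_sync}).

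Next I would argue by contraposition on Lemma~\ref{lem:nonfaulty_leader}: that lemma says that if $\lead(v)$ is \nonfaulty then $\indx(v)$ is downloaded at the end of $v$. By Corollary~\ref{cor:downloaded_or_missing}, at the end of view $v$ the bit $\indx(v)$ is either downloaded or missing, and these two states are mutually exclusive. Hence if $\indx(v)$ is missing at the end of $v$ it is certainly not downloaded, and the contrapositive of Lemma~\ref{lem:nonfaulty_leader} forces $\lead(v)$ to be faulty --- which in this section means that $\lead(v)$ has crashed (possibly in the middle of a sub-round), since crash is the only failure mode considered here.

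Finally I would verify that the crash of $\lead(v)$ actually causes $\lead(v)$ to be placed in $\BYZ$. Because the \peers are in sync in view $v$, every \peer $M$ holds $\indx(M)=\indx(v)$, and since $\indx(v)$ is missing we have $\res_M[\indx(M)]=\bot$ for every $M$. Therefore the view-change test $\res[\indx(M)]=\bot$ in Algorithm~\ref{alg:static_down} succeeds at every \peer, so every \peer executes $\BYZ \gets \BYZ \cup \{\lead(v)\}$, adding $\lead(v)$ to $\BYZ$ consistently across all \peers by Corollary~\ref{cor:same_local_values}. There is no genuine obstacle here, as all ingredients are already in hand; the only point needing mild care is to translate the ``missing'' hypothesis precisely into ``$\res[\indx(v)]=\bot$ at all \peers'' so that it lines up with the blacklisting branch of the algorithm, and to note that ``faulty'' in this section is synonymous with ``crashed'', so that adding $\lead(v)$ to the set $\BYZ$ of known crashed \peers is justified rather than a false accusation.
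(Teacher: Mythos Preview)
Your proposal is correct and follows essentially the same approach as the paper, which simply states that the corollary follows from Lemma~\ref{lem:nonfaulty_leader} and Corollary~\ref{cor:downloaded_or_missing}. Your version is more thorough: you explicitly spell out the contrapositive, invoke Lemma~\ref{lem:in_sync} to justify that $\indx(v)$ is well defined, and verify the ``added to $\BYZ$'' clause by tracing the view-change branch of Algorithm~\ref{alg:static_down} --- details the paper leaves implicit.
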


\inline Correctness: 
By Corollary \ref{cor:downloaded_or_missing}, after every view $v$, $\indx(v)$ is either downloaded or missing. By Corollary \ref{cor:justified_blacklist}, if $\indx(v)$ is missing, $\lead(v)$ will be added to $\BYZ$ and will never be the leader again. Hence, there are at most $f$ views $v$ at the end of which $\indx(v)$ is missing. at the end. So, after $n+f$ views, all $n$ bits are downloaded.

\inline Complexity:
Call a view \emph{good} (respectively, \emph{bad}) if $\indx(v)$ is downloaded (resp., missing) by the end of it. 
As explained above, the protocol is finished after at most $n+f$ views, $n$ of which are good, and at most $f$ are bad. Each view $v$ incurs one query and takes $O(f)$ rounds. 

By Lemma \ref{lem:nonfaulty_leader}, the good views are led by \nonfaulty \peers in a round-robin fashion. Hence, every \nonfaulty \peer leads at most $\frac{n}{\goodfrac k}$ good views. By Corollary \ref{cor:justified_blacklist},  the bad views are led by crashed \peers, and each such \peer may be a leader of a bad view at most once (since it is subsequently added to $\BYZ$). Generally, a \peer may be a leader at most $\frac{n}{\goodfrac k}$ times.
Hence, the query complexity is $\Query=O(\frac{n}{\goodfrac k})$, and the time complexity is $\Time= O(nf)$. Moreover, since at every iteration of the for loop every {\peer} sends a message, we get a message complexity of $\Message = O(k^2\cdot f\cdot n)$.

\begin{theorem} \label{thm:sync_static_download}
In the synchronous model with $f\le \byzfrac k$ crash faults where $\byzfrac<1$, there is a deterministic protocol that solves \download\ with $\Query = O(\frac{n}{\goodfrac k})$, $\Time = O(nf)$ and $\Message = O(k^2\cdot f n)$.
\end{theorem}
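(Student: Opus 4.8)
The statement collects, as a single summary, properties of Algorithm~\ref{alg:static_down} that have effectively been assembled in the preceding development, so the plan is to stitch the correctness and complexity arguments into one proof. The linchpin is the synchronization invariant of Lemma~\ref{lem:in_sync}: at every round all \peers are \emph{in sync}, sharing a common view $v$, a common index $\indx(v)$, and a common blacklist $\BYZ_v$. Granting this, the execution can be analyzed one view at a time against a single global state, and everything else is bookkeeping.

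\textbf{Correctness.} I would call a view $v$ \emph{good} if its target bit $\indx(v)$ ends the view downloaded (known to all \peers) and \emph{bad} otherwise; by Corollary~\ref{cor:downloaded_or_missing} these are the only possibilities. By Corollary~\ref{cor:justified_blacklist} a bad view forces its leader $\lead(v)$ into $\BYZ$ permanently (and line~\ref{line:next_view} then never reselects it), so there are at most $f$ bad views, while each good view advances the common index by one. Hence after at most $n+f$ views all $n$ bits are downloaded; and since every value written to $\res$ equals the corresponding $b_i$ (shown above, using that the in-sync index equals the queried one), every \nonfaulty \peer halts with $\res_M = \invec$.

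\textbf{Complexity.} Each view makes exactly one query, by its leader, and lasts $f+1$ rounds, in each of which every \peer sends one message to every other \peer. For the query bound, Lemma~\ref{lem:nonfaulty_leader} gives that every good view is led by a \nonfaulty \peer; since leadership rotates round-robin while skipping $\BYZ$ (which always leaves $\ge \goodfrac k$ live \peers), each \nonfaulty \peer leads $O(n/(\goodfrac k))$ good views, and a crashed \peer leads at most one view before being blacklisted, so $\Query = O(n/(\goodfrac k))$. With at most $n+f$ views and $f \le \byzfrac k \le k \le n$, we get $\Time = O((n+f)(f+1)) = O(nf)$ and $\Message = O((n+f)(f+1)k^2) = O(nfk^2)$.

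\textbf{Main obstacle.} The one genuinely delicate point — already handled by Lemma~\ref{lem:in_sync} — is ruling out that \peers ever drift into different views or different local states. If proving it from scratch I would induct on views: Observation~\ref{obs:same_pace} (views have fixed length $f+1$) makes all \peers change view simultaneously, and Corollary~\ref{cor:same_local_values} makes in-sync \peers apply identical updates to $\indx$ and $\BYZ$ in the view-change step, so they all select the same next view via line~\ref{line:next_view}; the induction then closes. Everything downstream is routine arithmetic once this invariant is in hand.
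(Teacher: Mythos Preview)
Your proposal is correct and follows essentially the same route as the paper: you invoke Lemma~\ref{lem:in_sync} as the structural backbone, use Corollaries~\ref{cor:downloaded_or_missing} and~\ref{cor:justified_blacklist} to bound the number of bad views by $f$, and derive $\Query$, $\Time$, $\Message$ from the $n+f$ view count and the round-robin leader rotation exactly as the paper does. One tiny imprecision worth noting (which the paper shares): Lemma~\ref{lem:nonfaulty_leader} gives ``\nonfaulty\ leader $\Rightarrow$ good view,'' not the converse you state, but the query bound only needs the round-robin-over-$\ge\goodfrac k$-peers argument, which you also supply.
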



\subsection{Rapid Download } 
\label{sec:rapid}

Algorithm \ref{alg:static_down} has optimal query complexity and is simple but is not optimal in time complexity.
It uses a standard technique of sending a value for $f+1$ rounds, ensuring that by the end, either every {\peer} commits the value or none of them do.
This creates a bottleneck that prolongs the execution time of the protocol.
It is worth noting that it was shown in {\cite{FLP}} that in the presence of $f$ crash faults, agreement must take at least $f+1$ rounds. Hence, shortening the length of a view inevitably breaks the property described above.
We present the following improved algorithm to overcome this bottleneck.
%
%

Like Algorithm \ref{alg:static_down}, the algorithm is partitioned into views 
and uses the notations $\view$, $\lead$, $\BYZ$, and $\indx$ as defined earlier. 
Unlike that algorithm, in which views were changed automatically on fixed rounds, here the {\peers} need to initiate a \say{view change} process, and send \say{view change} messages in order to change the view. 
This allows us to shorten the span of every view from $f+1$ rounds to just two rounds, but it also brings about some difficulties in coordination. Specifically, Algorithm {\ref{alg:static_down}} ensures that at any given time, every {\peer} has the same values in its local variables as every other {\peer}, while in Algorithm {\ref{alg:rapid_down}}, different {\peers} can be in different views at the same time, with a different set of known bits and a different set of known crashed {\peers}. 
Because of that, it is necessary to carefully construct the transition process between views, referred to as \emph{view change}, so that the most and least advanced {\peers} are at most one bit index apart (in terms of their local variable {$\indx(M)$}) and that progress is always guaranteed.

A view $v$ starts after the leader of that view, $\lead(v)$, first receives messages of type \say{view change} at some round $t$.

We proceed with a more detailed description of the algorithm.

\paragraph{Leader instructions in view $v$.}
The \peer $M$ becomes a leader when it receives a \viewchange $v$ message for the first time. It then starts executing the leader instructions described above.

When view $v$ starts at the beginning of some round $t$ (the first round when $\lead(v)$ received \viewchange $v$ messages), $\lead(v)$ picks the \viewchange $v$ message with the highest index $i$ amongst those received at the end of round $t-1$ and queries the bit $b_i$. Then, $\lead(v)$ sends a \view $v$ message$\langle \view, v, i, b_i \rangle$ to every other \peer, \emph{twice}, in two consecutive rounds.

\paragraph{Instructions when receiving a \view $v$ message.}
When a non-leader \peer $M$ in view $v$ receives a \view $v$ message $\langle \view, v, i, b \rangle$, it first updates its local $\view(M)$ variable to $v$ if $v\geq \view(M)$. Then, it updates its local $\indx(M)$ variable to $i$ if $i\geq \indx(M)$. Finally, it stores $b$ in $\res_M[i]$. If the message is the second one received from $\lead(v)$, then $M$ increases $\indx(M)$ by 1 and moves to the next `available' view 
$v'$ (i.e., such that $\lead(v')$ is not in the list $\BYZ_M$ of crashed \peers),
by a view change process detailed below.

If a non-leader \peer $M$ has $\view(M)=v$ (and is not during a view change) and didn't receive a \view $v$ message from $\lead(v)$, it adds $\lead(v)$ to $\BYZ_M$ and initiates a view change process as before.

\paragraph{View change instructions.}
When a \peer $M$ initiates a \viewchange $v$ process at time $t$, it invokes Procedure \viewchange in which it sends the next leader, $\lead(v)$, a \viewchange $v$ message 
$\langle \viewchange\ v, \indx(M) \rangle$.
Subsequently, at time $t+1$, $M$ enters an Idle state for one time step (until time $t+2$) while it waits for $\lead(v)$ to receive the previously sent \viewchange $v$ message, in which it does not expect to receive {\view} messages from any leader (during that time step, $\lead(v)$ collects \viewchange $v$ messages and starts sending {\view} messages that will be received at time $t+2$). Note that if a {\view} $v'\geq v$ message is received during the Idle state (at time $t+1$), then $M$ leaves the Idle state early and proceeds as described above
(this may happen if some other {\peer} $M'$ has initiated a view change one round earlier than $M$). Hence, at the end of a \viewchange $v$, $M$ enters view $v$ only if $v > \view(M)$.

\paragraph{Remarks.}
We make two remarks about the possible behavior of the algorithm. First, note that it is possible for $M=\lead(v)$ to get {\viewchange} messages in more than one round. In particular, suppose that in some view $v'$, $\lead(v')$ is faulty, and in its first round (time $t$) it informs the {\peer} $M'$ but does not inform $M''$. At time $t+1$, $M''$ marks $\lead(v')$ faulty and invokes \viewchange, so the next leader, $M=\lead(v)$, gets its message at time $t+2$. At the same time ($t+2$), $M'$ realizes that $\lead(v')$ has failed, and invokes \viewchange, so $M$ gets its message at time $t+3$. Note that this example also shows that a view might be of different lengths for different \peers.

Second, note that it is possible that $M=\lead(v)$ will receive {\viewchange} messages with different indices $i,i+1$, in consecutive rounds. A possible scenario where this might happen is as follows. Suppose the previous leader, $M'=\lead(v')$, failed in its second transmission round, sending the {\view} message to $M''$ but not to $M'''$. then $M''$, having received the message twice, invokes {\viewchange} $v$ with index $\indx(M'')+1$, while $M'''$, having missed the second message, invokes {\viewchange} $v$ with index $\indx(M''')$. Since $\indx(M'') = \indx(M''')=i$, $M$ will get {\viewchange} messages with $i$ and $i+1$. In this case, $M$ will work on $i+1$, and the bit $b_i$ will never be transmitted again. This is fine, however, since this bit has already been sent to everyone by the previous leader, $M'$, on its first transmission round.

\commentstart
%
%
Figures {\ref{fig:execution_timeline_nofaults}} and {\ref{fig:execution_timeline_with_faults}} illustrate two example executions. The first involves \nonfaulty leaders and shows the ``normal" way the execution should look like. The second involves a faulty leader, crashing on its first transmission round, and illustrates how {\peers} change views adaptively and handle crashes.

%
%
\def\FIGURES{
\begin{figure}
    \centering
    \includegraphics[width=0.85\linewidth]{execution_timeline_no-faults.jpeg}
    \caption{Execution timeline with \nonfaulty leaders}
    \label{fig:execution_timeline_nofaults}
\end{figure}

\begin{figure}
    \centering
    \includegraphics[width=0.85\linewidth]{execution_timeline_with faults.jpeg}
    \caption{Execution timeline with faulty leaders}
    \label{fig:execution_timeline_with_faults}
\end{figure}
} 
\FIGURES
\commentend

See Algorithm {\ref{alg:rapid_down}} for the formal code. 

\begin{algorithm}[htb]
    \caption{Rapid \download\ (Code for \peer $M$)}
    \label{alg:rapid_down}
    \small
    \begin{algorithmic}[1]
    \State \textbf{Local variables} 
        \State $\BYZ$, initially $\emptyset$ \Comment{The set of crashed \peers $M$ recognized}
        \State $\view(M)$, initially 1 \Comment{This is the current view of $M$}
        \State $\indx(M)$, initially 1 \Comment{The index of the current bit $M$ needs to learn}
        \State $\res[i]$, $i\in \{1,\dots, n\}$, initially $\bot$ \Comment{Output array}
        
                
        
        \State \Comment{Instructions for view $v$ leader}
        \Upon{receiving \viewchange $v$ message for the first time 
        }
        
        \Comment{View changes to $v$, $M=\lead(v)$ becomes leader}
        
        \State From all messages $\langle \viewchange, v, i\rangle$ received in the current round, select  the highest  index $i$.
        \If{$\indx(M) < i$}

\State $\indx(M) \gets i$
        \EndIf
        \If{$\res[\indx(M)] = \bot$}
                \State $\res[\indx(M)] \gets \DataQuery(\indx(M))$ 
            \EndIf
        \For{$j=1,2$}
            
            \State Send a \view $v$ message 
            $\langle \view\ v, \indx(M), \res[\indx(M)] \rangle$ to every \peer.
        \EndFor
        \EndUpon
        \State \Comment{\peer instructions in view $v$}
        \Upon{receiving a  message 
        $\langle \view~ v, i, b_i \rangle$}
    
            \If{$v \geq \view(M)$}
                
                \State $\view(M)\gets v$
                \If{$i \geq \indx(M)$}
                    \State$\indx(M) \gets i$
                \EndIf
                \State $\res[i] \gets b$
            \EndIf

            \If{this is the second \view $v$ message received}
                \State $\indx(M)\gets \indx(M)+1$
            \State Invoke Procedure $\viewchange$.
            \EndIf
        \EndUpon
        \State
        \Upon{not receiving a message from 
        $\lead(\view(M))$ for an entire round 
        and 
        not idle}
        
            \State 
            $\BYZ \gets \BYZ \cup \{\lead(\view(M))\}$

            \State Invoke Procedure $\viewchange$.
        \EndUpon

        \State \Comment{view change instructions}

        \Procedure{\viewchange}{}
            \State Let $v' >\view(M)$ be the least view such that $\lead(v') \notin \BYZ$.
            \State Send a {\viewchange} message 
            $\langle \viewchange\ v', {\indx(M)} \rangle$ to $\lead(v')$
            \State Stay idle for one round. 
            \State \textbf{if} $v\geq \view(M)$ \textbf{then} $\view(M) \gets v'$.
        \EndProcedure
        
    \end{algorithmic}
\end{algorithm}

\paragraph*{Analysis.}
We now establish correctness and analyze the algorithm's complexity.

We say that a view $v$ is \emph{\nonfaulty} if $\lead(v)$ is \nonfaulty during view $v$ and \emph{faulty} otherwise.

\begin{lemma}\label{lem:leader_active}
    If, at any time $t$, some \peer $M$ has $\view(M)=v$, then either $\lead(v)$ also has $\view(\lead(v)) = v$, or $\lead(v)$ has crashed.
\end{lemma}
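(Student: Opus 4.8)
\textit{The plan is to} argue by tracking how a non-leader peer $M$ can come to have $\view(M)=v$ in the first place, and then show that whenever this happens, $\lead(v)$ must already have "heard about" view $v$ — either because $M$ (or some earlier peer) sent it a \viewchange $v$ message, or because $M$ received a \view $v$ message that could only have originated from $\lead(v)$ itself. \textit{First I would} observe that there are exactly two code paths in Algorithm~\ref{alg:rapid_down} that set $\view(M)\gets v$ for a non-leader $M$: (i) the \texttt{Upon} block for receiving a $\langle\view~v,i,b_i\rangle$ message (guarded by $v\geq\view(M)$), and (ii) the last line of Procedure \viewchange, which sets $\view(M)\gets v'$. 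The leader $\lead(v)$ itself enters view $v$ in the \texttt{Upon} block triggered by the first \viewchange $v$ message it receives. So the lemma is really the statement that the set of peers who have reached view $v$ is always "downstream" of $\lead(v)$ having been notified, unless $\lead(v)$ has crashed in the interim.

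\textit{The core of the argument} is an induction on time $t$ (equivalently, on the causal order of messages). Consider the first moment some peer $M$ sets $\view(M)=v$. \textbf{Case (i):} $M$ got this by receiving a \view $v$ message $\langle\view~v,i,b_i\rangle$. Such a message is only ever sent by $\lead(v)$ in its leader \texttt{Upon} block (no other peer sends \view $v$ messages — non-leaders only relay via their own view's messages, and by construction the tag $v$ is written by the leader). Hence $\lead(v)$ executed its leader block at some earlier round, which means at that point $\view(\lead(v))=v$; and $\view$ is non-decreasing at a peer only until it crashes, so either $\view(\lead(v))=v$ still holds at time $t$, or $\view(\lead(v))>v$ — but $\lead(v)$ never moves past $v$ unless it completes its two transmissions and runs \viewchange, moving to some $v''>v$ with $\lead(v'')\neq\lead(v)$; one must check (using the round-robin structure $\lead(v)=v\bmod k$) that this transition is consistent, or more simply absorb "$\lead(v)$ has left view $v$ after serving it" into the crash-or-done dichotomy. \textit{Actually the cleaner framing:} strengthen the invariant to "$\lead(v)$ has $\view(\lead(v))\geq v$, and if $\view(\lead(v))=v$ it is active, if $>v$ it has already served $v$, and if it is stuck it has crashed" — but since the lemma only claims the weaker disjunction, I would phrase it to allow "$\lead(v)$ served view $v$ and moved on" as a sub-case of the first disjunct via a remark, or note that $\lead(v)$ moving past $v$ requires it to be non-faulty through $v$. \textbf{Case (ii):} $M$ set $\view(M)=v'$ at the end of \viewchange, having just sent $\langle\viewchange~v',\indx(M)\rangle$ to $\lead(v')$. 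Then $\lead(v')$ receives that message at the next round (synchronous network, no message loss for crash model unless the \emph{sender} crashes — and $M$ here is not crashing), so $\lead(v')$ triggers its leader \texttt{Upon} block and sets $\view(\lead(v'))=v'$, unless $\lead(v')$ had already crashed before processing it. Either way the disjunction holds.

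\textit{The main obstacle I anticipate} is handling the "$\lead(v)$ has already finished serving view $v$ and advanced" situation cleanly, since strictly speaking $\view(\lead(v))$ could then be $>v$ while some slow peer $M$ is only now entering $v$ — this does not literally match "$\view(\lead(v))=v$ or $\lead(v)$ crashed". I expect the resolution is either (a) the lemma is used only in contexts where $\lead(v)$ has not yet advanced (e.g.\ it is invoked to justify that view changes reach a live leader), so a mild restatement suffices, or (b) one shows that if $\lead(v)$ is non-faulty it only advances past $v$ after every relevant peer has received its \view $v$ messages, so no peer enters $v$ "late". I would also need the auxiliary fact that messages are reliably delivered within one round in the synchronous crash model except from a peer in the very round it crashes (stated in the model section), and the fact that \view $v$ messages are uniquely produced by $\lead(v)$ — both routine but essential. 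The induction on causal/temporal order is the organizing device; the bookkeeping over the Idle-state early-exit in \viewchange is the fiddly part but poses no real difficulty.
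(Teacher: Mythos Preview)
Your proposal is essentially correct and follows the same two-case decomposition as the paper: a peer $M$ enters view $v$ either by completing a \viewchange (your Case~(ii)) or by receiving a $\langle\view~v,\dots\rangle$ message from $\lead(v)$ (your Case~(i)). The paper's proof is considerably terser but structurally identical.

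The one place you diverge is in treating ``$\lead(v)$ may already have advanced past $v$'' as a genuine obstacle requiring a separate resolution. The paper dispatches this directly via a synchronization argument: in both cases, once $M$ and $\lead(v)$ are simultaneously in view $v$, they \emph{remain} there together for one or two more rounds, and they \emph{leave} together --- either because the second \view $v$ message is delivered (to everyone, since $\lead(v)$ is nonfaulty at that moment), or because $\lead(v)$ crashes. So there is no window in which $\lead(v)$ has moved on while $M$ lingers. This is exactly your option~(b), and it is the whole content of the paper's proof rather than an afterthought; you can drop option~(a) and the strengthened-invariant detour entirely.
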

\begin{proof}
A \peer $M$ enters view $v$ either by performing a view change or by getting a \view $v$ message from $\lead(v)$. In the former case, $\lead(v)$ has not yet entered view $v$. After the view change, both $M$ and $\lead(v)$ enter view $v$ and by the structure of the view they both remain in view $v$ for one or two more rounds, until either $\lead(v)$ crashes or two \view $v$ messages are received, which results in the end of view $v$. In the latter case, $\lead(v)$ has already entered view $v$ and $M$ receives the first \view $v$ message (out of two) and enters view $v$. Both $M$ and $\lead(v)$ remain in view $v$ for one more round, until the second message is received or $\lead(v)$ crashes.
\end{proof}

\begin{lemma} \label{lem:BYZ_justified}
If a \peer $M$ adds $\lead(v)$ to $\BYZ_M$, then $\lead(v)$ has crashed.
\end{lemma}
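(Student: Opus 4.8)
The statement to prove is Lemma~\ref{lem:BYZ_justified}: whenever a \peer $M$ adds $\lead(v)$ to $\BYZ_M$, the leader $\lead(v)$ has crashed. I would prove the contrapositive: assuming $\lead(v)$ never crashes, no \peer ever places $\lead(v)$ into its blacklist. The first step is a syntactic observation on Algorithm~\ref{alg:rapid_down}: the \emph{only} line that inserts a leader into $\BYZ$ lies inside the clause triggered by ``not receiving a message from $\lead(\view(M))$ for an entire round and not idle,'' and the leader inserted is $\lead(\view(M))$. Hence it suffices to show: if $\lead(v)$ is non-faulty, then in \emph{every} round in which a (so-far non-crashed) \peer $M$ has $\view(M)=v$ and is not in the \Idle\ state, $M$ receives a \view $v$ message from $\lead(v)$; this contradicts the triggering condition and finishes the proof. (We may take $M\neq\lead(v)$, since the leader never applies this clause to itself.)

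Suppose for contradiction that such an $M$ would blacklist $\lead(v)$ at some round $t^*$, so $\view(M)=v$, $M$ is not idle at $t^*$, and $\lead(v)$ has not crashed by then. By Lemma~\ref{lem:leader_active}, $\view(\lead(v))=v$ at round $t^*$ as well. Next I would recall the leader's behavior in view $v$: on first receiving a \viewchange $v$ message, $\lead(v)$ performs its (instantaneous) query and then broadcasts a \view $v$ message to all \peers in each of two consecutive rounds, and these are the only rounds in which $\view(\lead(v))=v$. Note also that $\lead(v)$ must have received at least one \viewchange $v$ message: $M$ came to have $\view(M)=v$ either by receiving a \view $v$ message from $\lead(v)$ (in which case, for that round, there is nothing to prove) or by executing Procedure~\viewchange, which sends a \viewchange $v$ message to $\lead(v)$.

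It then remains to check both ways $M$ can occupy view $v$. If $M$ entered view $v$ upon receiving a \view $v$ message, that was the first of $\lead(v)$'s two consecutive broadcasts; in the very next round $M$ receives the second one (a non-faulty $\lead(v)$ does send it, and in the synchronous model the non-crashed $M$ delivers it), which fires the ``second \view $v$ message'' clause and a \viewchange, so $M$ leaves view $v$. Thus $M$ spends at most two non-idle rounds in view $v$ and hears from $\lead(v)$ in each. If $M$ entered view $v$ via Procedure~\viewchange, the single \Idle\ round is precisely what synchronizes $M$'s listening window with $\lead(v)$'s broadcasting window: tracing the one-round message delay, $M$ sends \viewchange $v$, stays idle for the round in which $\lead(v)$ receives it and emits its first \view $v$ message, and then enters view $v$ exactly as that message arrives; one round later the second broadcast arrives and $M$ moves on. Either way, every non-idle round with $\view(M)=v$ is a round in which $M$ receives a \view $v$ message from $\lead(v)$, contradicting the blacklisting condition; hence $\lead(v)$ must have crashed.

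I expect the main obstacle to be the timing bookkeeping in the view-change handshake — proving rigorously that the \Idle\ round makes $M$'s first listening round coincide with the arrival of $\lead(v)$'s first broadcast, and, relatedly, ruling out the scenario where $\lead(v)$ is non-faulty but has already finished and left view $v$ before $M$ ``catches up.'' The latter cannot occur: $\lead(v)$'s two \view $v$ messages are broadcast to all \peers, so the non-crashed $M$ would have received at least one of them and would therefore already have $\view(M)\ge v$, contradicting the assumption that it is still lagging in an earlier view when it reaches view $v$. With the round offsets pinned down, the argument reduces cleanly to Lemma~\ref{lem:leader_active} together with the two-round structure of a view.
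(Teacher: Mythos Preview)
Your proposal is correct and follows essentially the same route as the paper: observe that blacklisting only fires when $\view(M)=v$ and $M$ misses a \view $v$ message in a non-idle round, then invoke Lemma~\ref{lem:leader_active} to conclude that $\lead(v)$ must also be in view $v$, whence a non-crashed $\lead(v)$ would have broadcast and been heard. The paper's proof compresses your two-case timing analysis into a single sentence (``since it hasn't sent a \view $v$ message while in view $v$, it must have crashed''), so your added bookkeeping on the \Idle\ handshake is more careful than strictly necessary but not wrong.
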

\begin{proof}
A \peer $M$ adds $\lead(v)$ to $\BYZ_M$ if'f $\view(M)=v$ and $M$ didn't receive a \view $v$ message from $\lead(v)$ (at some round other than the idle round).

By Lemma \ref{lem:leader_active}, it cannot be the case that $\lead(v)$ is in a different view, and since it hasn't sent a \view $v$ message while in view $v$, it must have crashed.
\end{proof}
\begin{lemma} \label{lem:sync_view}
    At the beginning of the second round of a \nonfaulty view $v$, every \peer $M$ has $\view(M)= v$.
\end{lemma}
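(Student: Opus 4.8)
The plan is to pin down the two-round window that \peer $\lead(v)$ occupies during view $v$ and then locate every other \peer inside it. Let $t$ be the round in which $\lead(v)$ first receives \viewchange $v$ messages; by construction this is the round in which view $v$ begins, so round $t+1$ is its second round. Because $v$ is \nonfaulty, $\lead(v)$ does not crash during view $v$, hence it runs the $j=1,2$ loop in full and sends the \view $v$ message $\langle \view\ v, i, b_i \rangle$ to \emph{every} \peer in round $t$ and again in round $t+1$; in particular, the first copy of this message reaches every \peer that has not crashed by the start of round $t+1$.

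Next I would show that no \peer has moved past view $v$ by that point. Since crashes are permanent and $\lead(v)$ is \nonfaulty during view $v$, the \peer $\lead(v)$ has never crashed up to that time, so by the contrapositive of Lemma~\ref{lem:BYZ_justified} it is never placed in any \peer's $\BYZ$ set. Procedure \viewchange always advances a \peer to the \emph{least} larger view whose leader is not currently blacklisted, so no \peer can jump from a view below $v$ to a view above $v$. The only other way to leave view $v$ upward is to receive the second \view $v$ message, which cannot arrive before round $t+2$, or to declare $\lead(v)$ crashed, which is impossible by Lemma~\ref{lem:BYZ_justified} and the assumption that $v$ is \nonfaulty. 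Hence, up to and including round $t+1$, every \peer $M$ has $\view(M) \le v$, so no \peer will discard the \view $v$ message it receives in round $t+1$ on the ground that its current view is already larger.

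Now fix any non-crashed \peer $M \neq \lead(v)$ and examine round $t+1$. If $M$ is in the Idle state in that round, then, since the only way to become idle is Procedure \viewchange and an idle spell lasts exactly one round, $M$ must have invoked \viewchange in round $t$; during the idle round it receives the \view $v$ message, leaves the Idle state early, and, because $v \ge \view(M)$, sets $\view(M) \gets v$. If $M$ is not idle in round $t+1$, it simply receives the \view $v$ message and likewise sets $\view(M) \gets v$. Either way, once the messages delivered at the start of round $t+1$ have been processed, $\view(M) = v$; together with $\view(\lead(v)) = v$, which holds throughout rounds $t$ and $t+1$, this shows that every \peer is in view $v$ at the beginning of the second round of view $v$. (For $v = 1$ the claim is immediate from the initialization, since all \peers begin in view $1$.)

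The step I expect to be the main obstacle is the timing bookkeeping underlying the last two paragraphs: one has to verify both that no \peer can be so far behind --- still mid-\viewchange toward $v$ --- that the \view $v$ message slips past it (this is exactly what the ``leave the Idle state early'' rule buys us, applied to the copy of the message that $\lead(v)$ already broadcasts in the first round of view $v$), and that no \peer can have raced ahead of view $v$ (handled by the permanence of crashes together with Lemma~\ref{lem:BYZ_justified}). Tracking precisely who sends which \viewchange or \view message in rounds $t-1$, $t$, and $t+1$ is the fiddly part.
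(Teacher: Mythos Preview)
Your proposal is correct and follows essentially the same two-step strategy as the paper: first argue that the \view $v$ message sent by $\lead(v)$ in the first round pulls every \peer up to $\view(M)\ge v$, then use Lemma~\ref{lem:BYZ_justified} to rule out $\view(M)>v$. Your version is in fact more careful than the paper's rather terse proof: you make explicit why a \peer cannot jump \emph{over} $v$ during a \viewchange (the ``least larger view'' clause), and you handle the Idle state and the ``leave Idle early'' rule, both of which the paper's proof sweeps under the phrase ``every \peer that is in a previous view advances to view $v$.''
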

\begin{proof}
During the first round of a \nonfaulty view $v$, $\lead(v)$ sends a \view $v$ message to every \peer. Subsequently, every \peer that is in a previous view advances to view $v$ (while every other \peer remains unaffected). Therefore, after the first round of view $v$, $\view(M)\geq v$ for every \peer $M$. Assume towards contradiction that there exists some \peer $M'$ such that $\view(M') > v$ at the beginning of the second round of view $v$. For that to be the case, $M'$ must have added $\lead(v)$ to $\BYZ_{M'}$, in contradiction to Lemma \ref{lem:BYZ_justified} and the fact that view $v$ is \nonfaulty.
\end{proof}

\begin{lemma}\label{lem:confirmed_bit}
For every two nonfaulty \peers $M, M'$ and index $i$, if $i < \indx(M)$ then $\res_{M'}[i] = b_i$. 
\end{lemma}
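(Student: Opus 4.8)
The plan is to prove, by strong induction on the round number, a strengthening of the statement that also tracks not‑yet‑crashed faulty \peers. Let $\mathcal{I}(t)$ be: \emph{at the end of round $t$, for every \peer $N$ that has not yet crashed, every index $i<\indx(N)$, and every \nonfaulty \peer $M'$, we have $\res_{M'}[i]=b_i$}; the lemma is the case $N=M$ \nonfaulty. I would first record the routine ``read‑safety'' fact --- whenever any \peer sets $\res[i]\gets b$, necessarily $b=b_i$ --- which follows by a sub‑induction along \view-message chains, since a crash‑faulty \peer only stops sending and never alters a reported value, and every reported value is rooted in a leader's $\DataQuery(i)$. With read‑safety in hand, $\mathcal{I}(t)$ is equivalent to ``$i<\indx(N)$ implies $\res_{M'}[i]\neq\bot$''. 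Since $\indx(\cdot)$ is nondecreasing and a cell once set correctly is never altered, $\mathcal{I}$ is a persistent invariant, so it suffices to examine the rounds in which some non‑crashed $\indx(N)$ moves past a new index.

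For the inductive step at such a round $t$, I would classify why $\indx(N)$ increased, reading off Algorithm~\ref{alg:rapid_down}: (i) $N=\lead(v)$ just entered view $v$ and set $\indx(N)$ to the largest index $i^*$ among the \viewchange $v$ messages it received; (ii) $N$ received a first \view $v$ message $\langle\view\ v,i^*,b\rangle$ from $\lead(v)$ and adopted $\indx(N)\gets i^*$; or (iii) $N$ received the \emph{second} \view $v$ message and incremented. In cases (i) and (ii) the value $i^*$ was held as $\indx(\cdot)$ by some \peer still active at an earlier round (the sender of the maximal \viewchange $v$ message; respectively $\lead(v)$ when it transmitted), so the inductive hypothesis applied there already yields $\res_{M'}[i]=b_i$ for all $i<i^*$ and all \nonfaulty $M'$. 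Case (iii) is the heart of the matter: here I would use the observation that \emph{if any \peer ever receives two \view $v$ messages, then $\lead(v)$ did not crash during the first of its two transmission rounds} --- a crash there aborts the entire two‑iteration send loop, so no \peer could ever see a second message --- hence $\lead(v)$ delivered $\langle\view\ v,i^*,b_{i^*}\rangle$ to \emph{every} \peer in that first round; combining this with Lemma~\ref{lem:sync_view} (to rule out that a \nonfaulty \peer is ``ahead of'' a view whose leader survives its first round and would therefore discard the message), every \nonfaulty \peer records $\res[i^*]=b_{i^*}$, which with the inductive hypothesis for indices below $i^*$ completes the step.

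The main obstacle I anticipate is precisely the ``no \nonfaulty \peer is ahead'' part of case (iii), together with the delicate situations flagged in the Remarks --- in particular, a \peer receiving a second \view message whose index is \emph{smaller} than its current $\indx$, and a leader receiving \viewchange messages spread over several rounds. The clean route is to carry in parallel a companion invariant bounding how far apart the views (equivalently the bit‑indices) of \nonfaulty \peers can be, so that a \peer which is ``ahead'' already has $\indx>i^*$ and hence, by $\mathcal{I}$ itself (applied inductively), already satisfies $\res[i^*]=b_{i^*}$. Disentangling this mutual induction --- and keeping it consistent with Lemmas~\ref{lem:leader_active} and~\ref{lem:BYZ_justified} and with the deliberate idle round of a view change, during which an apparent silence from a leader must not be mistaken for a crash --- is where the real work lies; the remaining bookkeeping is routine.
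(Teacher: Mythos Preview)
Your approach is correct and would work, but it is substantially more elaborate than the paper's. The paper dispenses with the round-by-round induction entirely: for each index $i$ it simply considers the \emph{first} \peer $M_i$ to push $\indx(M_i)$ past $i$, observes that this can only happen via your case~(iii) (cases (i) and (ii) would require some earlier \peer to already be past $i$), and concludes that $\lead(v)$ completed its first broadcast, so every \nonfaulty \peer received $\langle \view~v,i,b_i\rangle$. Your cases (i) and (ii), which you handle by appealing to the inductive hypothesis on earlier rounds, are thus collapsed into this single ``first crosser'' observation. Your decision to track not-yet-crashed \peers rather than only \nonfaulty ones is the right move---the paper writes ``first \nonfaulty \peer'', which as stated is slightly loose, since a later-crashing \peer could cross $i$ first and then feed its index forward via a \viewchange or \view message; your formulation of $\mathcal{I}(t)$ repairs this cleanly.

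Your identification of the ``no one is ahead'' obstacle in case~(iii) is well-taken---the paper does not address it explicitly---but the resolution is more direct than the mutual induction you sketch. If $\lead(v)$ survives its first send, then by Lemma~\ref{lem:BYZ_justified} no \peer has blacklisted $\lead(v)$, hence no \peer can have skipped view $v$ in a \viewchange; and no \peer can have \emph{left} view $v$ either, since leaving requires receiving two \view $v$ messages (impossible, the second has not been sent) or failing to receive one (impossible, the first send succeeds). A short well-foundedness argument on views then shows no \peer has $\view>v$ at that moment, so every \nonfaulty \peer processes the message. This avoids the companion invariant you propose and keeps the argument decoupled from any bit-index bookkeeping.
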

\begin{proof}
Let $M$ be a \nonfaulty \peer. 
For every $i < \indx(M)$, let $M_i$ be the first \nonfaulty {\peer} to set $\indx(M_i) > i$. Since $M_i$ is the first to increase $\indx(M_i)$ above $i$, it must be the case that 
$M_i$ received two \view $v$ messages from some leader, 
$\langle \view~ v, i, b_i \rangle$.
Since 
$M_i$ received two messages, every \nonfaulty \peer $M'$ must have received at least one such message, and subsequently set $\res_{M'}[i]=b_i$.
\end{proof}

Let $\minbit(t) = \min_{\mbox{\footnotesize \peers~} M}\{\indx(M)~\mbox{at round}~t\}$. 
\begin{lemma}\label{lem:min_bit_increase}
For every view $v\geq 1$, $\minbit(t^e_v)\geq \minbit(t^s_v) +1$.
\end{lemma}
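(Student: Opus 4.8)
The plan is to prove the lemma by a case analysis on whether view $v$ is \nonfaulty or faulty, showing in each case that by round $t^e_v$ every (surviving) \peer has its local $\indx(\cdot)$ strictly larger than $m := \minbit(t^s_v)$; since $\minbit$ is non-decreasing in time (the only updates to any $\indx(M)$ are $\indx(M)\gets\max(\indx(M),\cdot)$ and $\indx(M)\gets\indx(M)+1$), this yields $\minbit(t^e_v)\ge m+1$. Throughout I would keep in mind the timeline of a view: it opens at $t^s_v$ when $\lead(v)$ first receives \viewchange $v$ messages (queries the largest reported index, then sends the same \view $v$ message twice), and a \peer reacts by raising $\indx(\cdot)$ to that index on the first copy and incrementing it on the second.

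For a \nonfaulty view $v$ I would first pin down the index $i$ that $\lead(v)$ works on: it is the maximum of $\lead(v)$'s own current index and the largest index carried by a \viewchange $v$ message, and since $\lead(v)$ is alive at $t^s_v$ its own index is counted in $\minbit(t^s_v)$, so $i\ge m$. Then, by Lemma~\ref{lem:sync_view}, at the start of the second round of view $v$ every \peer is in view $v$; combined with the fact that $\lead(v)$ stays alive and broadcasts $\langle \view\ v,i,b_i\rangle$ to all \peers in two consecutive rounds, every \nonfaulty \peer $M$ receives both copies before it can leave the view. The first copy sets $\indx(M)\gets\max(\indx(M),i)\ge i$, and the second sets $\indx(M)\gets\indx(M)+1\ge i+1\ge m+1$ just before $M$ invokes \viewchange; a \peer sitting in the idle step of a view change when the first copy arrives leaves idle early and processes it, so it is covered as well. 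Hence at round $t^e_v$ every surviving \peer has $\indx(\cdot)\ge m+1$, giving $\minbit(t^e_v)\ge m+1$.

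The faulty case is the step I expect to be the main obstacle. Here $\lead(v)$ crashes, possibly after delivering the first \view $v$ message to only some \peers (or to none), so the ``everyone hears the same thing'' guarantee of Lemma~\ref{lem:sync_view} is gone and different \peers may occupy different views while $v$ is in progress. I would argue \peer by \peer, classifying how each surviving $M$ leaves view $v$: if $M$ leaves because it received the \emph{second} \view $v$ message, it incremented exactly as in the \nonfaulty case; if $M$ leaves because it timed out on $\lead(v)$, it added $\lead(v)$ to $\BYZ_M$ — legitimately, by Lemma~\ref{lem:BYZ_justified}, so $\lead(v)$ never leads a view of $M$ again — and I must then show $\indx(M)\ge m+1$ at that moment. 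For the latter I would combine: (i) the \peer realizing the minimum $m$ at $t^s_v$ is the laggard, so any \peer entering view $v$ via the first \view $v$ message is pulled up to index $i\ge m$; and (ii) the observation spelled out in the Remarks that the bit the laggard still needs was already transmitted once by the previous leader, so the next leader necessarily works on an index $\ge m+1$ and re-supplies it, which lets me close the gap. The delicate point is reconciling the definitions of $t^s_v,t^e_v$ with a possible run of consecutive faulty views: I would need a sub-argument that such a run cannot stall the laggard, essentially because each faulty view adds a fresh \peer to some honest \peer's $\BYZ$ set and because Lemma~\ref{lem:leader_active} forces the leader's view and the \peers' views to stay coupled. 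Once this is handled, the two cases together give $\minbit(t^e_v)\ge\minbit(t^s_v)+1$.
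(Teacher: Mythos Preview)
Your \nonfaulty case is correct and is essentially the paper's proof: the paper invokes Lemma~\ref{lem:sync_view} to put every \peer in view $v$ by the second round, notes that each \peer then receives both \view $v$ messages carrying index $i\ge\indx(\lead(v))\ge\minbit(t^s_v)$, and concludes $\indx(M)\ge i+1\ge\minbit(t^s_v)+1$ at $t^e_v$. Your write-up is more detailed but follows the same line.

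The faulty case, however, is not merely an obstacle---it is unprovable, because the statement is false for faulty views. Take the simplest instance: view $v=1$, with every \peer initialized to $\indx(\cdot)=1$, so $\minbit(t^s_1)=1$. Suppose $\lead(1)$ crashes before sending any \view~$1$ message. Then each surviving \peer, after its idle round, detects the missing message, adds $\lead(1)$ to $\BYZ$, and invokes \viewchange---without ever touching $\indx(\cdot)$. At $t^e_1$ every \peer still has $\indx(\cdot)=1$, so $\minbit(t^e_1)=1=\minbit(t^s_1)$. Your point~(ii), about the previous leader having already transmitted the laggard's bit, cannot apply here (there is no previous leader), and more generally nothing in the protocol forces an increment when the current leader sends nothing.

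This is really a wording slip in the lemma statement: the paper's own proof uses Lemma~\ref{lem:sync_view}, which is stated only for \nonfaulty views, and the only downstream use (Lemma~\ref{lem:time_fast_rapid}) reads ``$\minbit$ increases after every \nonfaulty view.'' So the intended claim is for \nonfaulty $v$; you should drop the faulty branch entirely rather than try to salvage it.
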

\begin{proof}
By Lemma \ref{lem:sync_view}, at the beginning of the second round of view $v$ every \peer $M$ has $\view(M)= v$. Hence, when view $v$ finishes, at round $t^e_v$, every \peer $M$ has $\indx(M)\geq \indx(\lead(v))+1 \geq \minbit(t^s_v)+1$. Hence, the Lemma follows.
\end{proof}
\begin{lemma} \label{lem:time_fast_rapid}
    After $n+f$ views, every \peer knows the entire input $\invec$
\end{lemma}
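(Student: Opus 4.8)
The plan is to mirror the static analysis (Theorem~\ref{thm:sync_static_download}) but to track the global quantity $\minbit(t) = \min_{M}\{\indx(M)\text{ at round }t\}$, where the minimum is over \peers that have not crashed. First I would record that $\minbit(\cdot)$ is non-decreasing in time: inspecting Algorithm~\ref{alg:rapid_down}, a \peer $M$ only updates $\indx(M)$ either by $\indx(M)\gets i$ in a case guarded by $i\ge\indx(M)$, or by $\indx(M)\gets\indx(M)+1$; hence no single index ever decreases, and neither does their minimum. Since every \peer starts with $\indx(M)=1$, we have $\minbit = 1$ at the beginning of the first view, and it suffices to show $\minbit \ge n+1$ by the end of view $n+f$: then every \nonfaulty \peer $M$ has $\indx(M)\ge n+1$, so every index $i\in[1,n]$ satisfies $i<\indx(M)$, and Lemma~\ref{lem:confirmed_bit} yields $\res_{M'}[i]=b_i$ for every \nonfaulty \peer $M'$, i.e. every \nonfaulty \peer has downloaded all of $\invec$.

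It remains to argue that $\minbit$ gains at least $n$ over the first $n+f$ views. I would classify each view $v$ the system traverses as \nonfaulty or faulty according to whether $\lead(v)$ is \nonfaulty during $v$. For a \nonfaulty view, Lemma~\ref{lem:min_bit_increase} (via Lemma~\ref{lem:sync_view}) gives $\minbit(t^e_v) \ge \minbit(t^s_v) + 1$; for a faulty view, the monotonicity above gives $\minbit(t^e_v) \ge \minbit(t^s_v)$. So it is enough to show that at most $f$ of the first $n+f$ views are faulty, so that at least $n$ are \nonfaulty and contribute $+1$ each, forcing $\minbit \ge n+1$. If $v$ is faulty then $\lead(v)$ has crashed, and by Lemma~\ref{lem:BYZ_justified} every \peer that advances past $v$ adds $\lead(v)$ to its $\BYZ$ set; by the leader-selection rule inside Procedure~\viewchange, that \peer thereafter permanently skips every view led by $\lead(v)$. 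Since at most $f$ \peers ever crash, and (as argued next) each crashed \peer is charged to at most one faulty view, there are at most $f$ faulty views in total.

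The delicate point, and what I expect to be the main obstacle, is precisely this charging step. Unlike Algorithm~\ref{alg:static_down}, where $\BYZ$ is common knowledge, here the sets $\BYZ_M$ are not synchronized, so a priori a crashed \peer could reappear as $\lead(\cdot)$ of several view indices entered by \peers that have not yet blacklisted it. To rule this out I would invoke the near-synchrony the algorithm enforces: by Lemmas~\ref{lem:leader_active} and~\ref{lem:sync_view} together with the view-change discipline (a \peer entering a view led by a crashed \peer fails to receive a \view message within one round and immediately blacklists that leader before moving to the next available view), the most- and least-advanced \peers stay within one bit index of one another, hence cannot be spread across enough views to revisit a given crashed leader twice. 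I would isolate this as a short auxiliary claim — ``no faulty view is traversed by the system more than once, and each crashed \peer leads at most one traversed faulty view'' — after which the counting above gives $\minbit \ge n+1$ at the end of view $n+f$, and the conclusion follows as in the first paragraph.
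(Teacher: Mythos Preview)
Your approach matches the paper's: track $\minbit$, invoke Lemma~\ref{lem:min_bit_increase} on each \nonfaulty view, bound the number of faulty views by $f$ via Lemma~\ref{lem:BYZ_justified} together with the \viewchange skip rule, and finish with Lemma~\ref{lem:confirmed_bit}. The paper is in fact terser than you on the charging step you flag as delicate---it dispatches it in a single clause without isolating an auxiliary claim---so your extra care there is not misplaced; one small nit is that Lemma~\ref{lem:BYZ_justified} only gives the direction ``added to $\BYZ_M$ $\Rightarrow$ crashed,'' whereas the assertion that every \peer passing through a faulty $v$ blacklists $\lead(v)$ follows directly from the algorithm's non-receipt Upon-clause rather than from that lemma.
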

\begin{proof}
    By Lemma \ref{lem:min_bit_increase}, $\minbit$ increases after every \nonfaulty view. Out of $n+f$ views, at most $f$ might be faulty, by Lemma \ref{lem:BYZ_justified} and the fact that once a \peer $M$ adds another \peer $M'$ to $\BYZ_M$ it will not enter any view $v$ where $\lead(v) = M'$. Therefore, after $n+f$ views, $\indx(M) \geq n+1$ for every \peer $M$. By Lemma \ref{lem:confirmed_bit}, this means that $\res_M[i] = b_i$ for every $1\leq i\leq n$ and \peer $M$.
\end{proof}
\begin{theorem} \label{thm:sync_rapid_download}
In the synchronous  model with $f\le \byzfrac k$ crash faults where $\byzfrac<1$, there is a deterministic protocol that solves \download\ with $\Query = O(\frac{n}{\goodfrac k})$, $\Time = O(n+f)$ and $\Message = O\left(k\cdot (n+f)\right)$.
\end{theorem}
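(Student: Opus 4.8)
The plan is to derive the theorem directly from the chain of lemmas already established for Algorithm~\ref{alg:rapid_down}, accounting separately for four quantities: the number of views in an execution, the number of rounds contributed by each view, the number of queries charged to a \peer per view, and the number of messages sent per view.

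First I would dispatch correctness. By Lemma~\ref{lem:time_fast_rapid}, after $n+f$ views every \peer $M$ has $\indx(M)\ge n+1$, and by Lemma~\ref{lem:confirmed_bit}, for every non-faulty \peer $M'$ and every index $i<\indx(M)$ we have $\res_{M'}[i]=b_i$. Instantiating $M$ as any non-faulty \peer then yields $\res_{M'}[i]=b_i$ for all $i\in\{1,\dots,n\}$, so every non-faulty \peer outputs the correct array. To bound the number of views: by Lemma~\ref{lem:min_bit_increase}, $\minbit$ strictly increases in every non-faulty view, so (starting from $1$) there are at most $n$ non-faulty views; by Lemma~\ref{lem:BYZ_justified}, every faulty view is led by a crashed \peer that is thereafter in $\BYZ$ and is never again selected as a leader, so there are at most $f$ faulty views. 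Hence any execution consists of at most $n+f$ views.

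For the time bound the key claim is that view $v$ begins by global round $O(v)$, since then all $n+f$ views finish within $O(n+f)$ rounds. Each view contributes only a constant number of rounds to the timeline: in a non-faulty view the leader transmits its two \view\ messages in two consecutive rounds, after which some \peer completes the view and, after one idle round inside Procedure~\viewchange, the next leader becomes active; in a faulty view the leader's silence is detected within one round and a view change is triggered within $O(1)$ rounds. The delicate point here — and what I expect to be the main obstacle — is exactly the feature that Algorithm~\ref{alg:rapid_down} was designed around: different \peers may simultaneously sit in different views, and a single view may have different durations for different \peers (as the Remarks illustrate, $\lead(v)$ may even receive \viewchange\ messages in two consecutive rounds with indices $i,i+1$). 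One must check these discrepancies never accumulate — concretely, by setting $\tau(v)$ to the first round at which $\lead(v)$ becomes active and proving $\tau(v+1)\le\tau(v)+O(1)$ in each of the corner cases (faulty-on-first-transmission, faulty-on-second-transmission, staggered view changes), so that $\tau(n+f)=O(n+f)$; this gives $\Time=O(n+f)$.

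Finally, the cost bounds. Each view incurs at most one query, since the leader queries $\indx$ only when $\res[\indx]=\bot$; the $f$ faulty views are led by distinct crashed \peers (at most one query each), and the at most $n$ non-faulty views are led, by the same round-robin argument as in Theorem~\ref{thm:sync_static_download} (using Lemma~\ref{lem:BYZ_justified} so that each cycle of $k$ leader slots skips at most $f$ crashed \peers), in a balanced fashion over the $\ge\goodfrac k$ non-faulty \peers, so each non-faulty \peer leads at most $\lceil n/(\goodfrac k)\rceil$ of them; since $n\ge k$ this is $\Query=O(n/(\goodfrac k))$. For messages, in each view the leader sends $O(k)$ messages (two broadcast rounds to $k$ \peers) and each \peer completing the view sends $O(1)$ \viewchange\ messages, i.e.\ $O(k)$ messages per view; over $O(n+f)$ views this yields $\Message=O(k(n+f))$, completing the proof.
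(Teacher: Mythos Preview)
Your proposal is correct and follows essentially the same structure as the paper's proof: correctness via Lemma~\ref{lem:time_fast_rapid}, at most $n+f$ views via Lemmas~\ref{lem:min_bit_increase} and~\ref{lem:BYZ_justified}, the round-robin argument for $\Query$, and $O(k)$ messages per view for $\Message$. The only difference is that where the paper simply asserts ``a view takes at most two rounds'' to get $\Time=O(n+f)$, you propose a more careful inductive argument via $\tau(v)$ to handle the staggered-view subtleties the paper's own Remarks acknowledge; this extra care is warranted but does not change the approach.
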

\begin{proof}
By Lemma \ref{lem:time_fast_rapid}, after $n+f$ views, every \peer knows the entire input. A view takes at most two rounds; hence, the time complexity is $\Time=O(n+f)$. By Lemma \ref{lem:min_bit_increase}, every \nonfaulty view increases $\minbit$ by at least 1, so after $n$ \nonfaulty views the protocol terminates. Because we use round robin order for leaders, every \nonfaulty \peer is the leader of at most $n/\goodfrac k$ views. Hence, the query complexity is $\Query=\frac{n}{\goodfrac k}$. Moreover, since every non leader {\peer} communicates only with its current leader and every leader communicates with every other {\peer}, the number of messages per round is $O(k)$, so
$\Message = O(k(n+f))$.
\end{proof}
\section{The Asynchronous Data Retrieval Model with Crash Faults} \label{sec:async_crash_faults}

\subsection{\download\ with At Most One Crash}

We start the exploration of the asynchronous setting with an algorithm that solves \download with at most one crash fault. This serves as an introduction to our main algorithm for handling an arbitrary number of crashes.

The algorithm runs in two phases. In each phase, every \peer maintains a list of assigned indices. Each phase has three stages, and every message contains the local phase number and stage number. We describe a single phase and particularly the operation of \peer $M_i$.

In stage 1, the \peer $M_i$ queries all assigned bits that are still unknown and sends stage-1 messages containing the assigned bits' values to all other \peers. (Assuming it is necessary to send the assigned bits in multiple packets, each packet would also include the packet number.) 

In stage 2, every \peer $M_i$ waits until it receives all stage-1 messages (according to its local assignment) from at least $k-1$ \peers (waiting for the last {\peer} risks deadlock, in case that {\peer} has crashed). When that condition is met, $M_i$ sends a stage-2 message containing the index $j$ of the \emph{missing} \peer, namely, the \peer $M_j$ from which it didn't receive all stage-1 messages during the current phase. When $M_i$ receives a stage-2 message containing the index $j$, it sends a stage-2 response containing either the bits assigned to $M_j$ (if it heard from $M_j$) or \metoo\ if it didn't hear from $M_j$ during stage 2. (In case $M_i$ hasn't finished waiting for stage-1 messages, it delays its response until it is finished.)

Finally, in stage 3, every \peer $M_i$ waits until it collects at least $k-1$ stage-2 responses. When that happens, there are two cases. Either $M_i$ has received only \metoo\ messages, in which case it reassigns the bits of $M_j$ evenly to all \peers and starts the next phase, or $M_i$ has received $M_j$'s bits from at least one \peer, in which case it goes into \textit{completion} mode, which means that in the next phase, $M_i$ acts as follows. In stage-1, it sends all the bits in stage-1 messages. In stage-2, it doesn't send stage-2 messages, and in stage-3, it doesn't wait for stage-2 responses.

After two phases, every \peer terminates.

Note that if a \peer receives a message from (another {\peer} which is in) a later phase, it stores it for future use, and if it receives a message from an earlier phase, it retrieves the bits the message (possibly) holds and evaluates whether or not to enter completion mode 

See Algorithm \ref{alg:single_crash_async_down} for the formal code.

\begin{algorithm}
    \caption{Asynchronous Download one crash (code for \peer $M_i$)} \label{alg:single_crash_async_down}
    \begin{algorithmic}[1]
    \State $Mode \gets$ Active
    \For{$t \in \{1, 2\}$}
        \Procedure{Stage 1}{}
        \State Query all unknown assigned bits.
        \If{$Mode =$ Active}
            \State Send a stage-1 message with the assigned bits to every other \peer. 
        \Else
            \State Send a stage-1 message with all known bits.

        \EndIf

        \State (In the CONGEST model, this message may need to be broken into packets of size $O(\log n)$.)
    \EndProcedure
    \Procedure{Stage 2}{}
        \State Maintain the following set:
        \Statex $H = \{j \mid \text{received all stage-1 messages from } M_j \}$
        \State When $|H| \geq k-1$, if $Mode=$Active, send $J_f(i)$ s.t. $\{J_f(i)\} = \{1, \dots, k\} \setminus H$ to all other \peers. 
        \State  \textbf{When}
        receiving a stage-2 message $J_f(i')$ from \peer $M_{i'}$:
        \Statex Send back a stage-2 response containing \metoo\ if $J_f(i') = J_f(i)$ and $M_{J_f(i')}$'s assigned bits otherwise. 
    \EndProcedure
    \Procedure{Stage 3}{}
        \If{$Mode=$Active}
        \State Collect stage-2 responses until receiving at least $k-1$ messages.
        \If{Received only \metoo\ responses}
            \State Reassign $M_{J_f(i)}$'s bits to \peers $I =\{1, \dots, k \} \setminus \{J_f(i)\}$, and start the next phase. 
        \Else
            \State $Mode \gets$ completion
        \EndIf
        \EndIf
    \EndProcedure
    \Procedure{messages from different phases}{}
        \If{
        Received a message from a different phase}
            \State If it is from a later phase, store it for future use.
            \Statex If it is from a previous phase, evaluate its content and update your known bits accordingly. If you have no more unknown bits, go into completion mode. 
        \EndIf
    \EndProcedure
    \EndFor
    \end{algorithmic}
\end{algorithm}

We use the following facts to show the correctness and compelxity of the protocol.

\begin{observation} \label{obs:overlap}
    (Overlap Lemma) Assuming $2f<k$, every two sets of $k-f$ \peers must overlap at least one \peer.
\end{observation}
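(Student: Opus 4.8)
The plan is to prove this by a straightforward inclusion--exclusion (pigeonhole) count; there is genuinely no obstacle here, so the argument is essentially a one-liner, and the only thing to watch is that the hypothesis $2f<k$ is exactly what forces the bound to be strictly positive.

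Concretely, let $A$ and $B$ be any two sets of $k-f$ \peers, both regarded as subsets of the full set $V$ of $k$ \peers. First I would note that $A\cup B\subseteq V$, hence $|A\cup B|\le k$. Then applying inclusion--exclusion,
\[
|A\cap B| ~=~ |A| + |B| - |A\cup B| ~\ge~ (k-f) + (k-f) - k ~=~ k - 2f.
\]
By the assumption $2f<k$ we get $|A\cap B|\ge k-2f \ge 1$, so $A\cap B\neq\emptyset$, which is exactly the claim.

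So the only ``steps'' are: (i) view the two sets as subsets of the $k$-element peer set; (ii) bound the size of their union by $k$; (iii) apply inclusion--exclusion; (iv) substitute $|A|=|B|=k-f$ and invoke $2f<k$. I expect no difficulty whatsoever. The one point worth flagging is the \emph{role} this lemma plays downstream: it is the combinatorial engine behind the ``wait for $k-1$ messages'' pattern in Algorithm~\ref{alg:single_crash_async_down} (the $f=1$ case, and more generally for $f$ crashes whenever $2f<k$) --- two honest \peers, each of which has heard from all but at most $f$ \peers, must share a common informant, which is precisely what allows information to keep propagating despite asynchrony and crashes.
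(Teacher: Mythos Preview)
Your proof is correct and is essentially the same pigeonhole/counting argument the paper gives; the paper phrases it as ``$f$ peers are missing from one set, so at least $(k-f)-f=k-2f\ge 1$ of the other set's peers must lie in both,'' which is just the complement version of your inclusion--exclusion computation.
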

\begin{proof}
Exactly $f$ distinct \peers are not present in one of the sets, hence at least $k-2f \geq 1$ \peers in the other set must not be distinct, thereby being in both sets.
\end{proof}

The following observation holds after each of the two phases, although a stronger property holds after phase 2, namely, each \nonfaulty \peer knows all input bits.

\begin{observation} \label{obs:peers_missing}
After Stage 2, every \peer lacks bits from at most one ``missing" \peer.
\end{observation}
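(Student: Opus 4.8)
The plan is to read the claim directly off the termination guard of Stage~2 of Algorithm~\ref{alg:single_crash_async_down}. Fix a phase and a \peer $M_i$ that is not in completion mode, and let $H_i$ denote the set of indices $j\in\{1,\dots,k\}$ for which $M_i$ knows the \emph{entire} block of bits assigned to $M_j$ in the current phase. Note that $i\in H_i$, since in Stage~1 every \peer queries all of its own still-unknown assigned bits; and for $j\neq i$ we have $j\in H_i$ exactly when $M_i$ has received all of $M_j$'s Stage-1 packets. The \peer $M_i$ does not proceed past Stage~2 until $|H_i|\ge k-1$ (this is precisely the condition that enables sending the Stage-2 message). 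Consequently $\{1,\dots,k\}\setminus H_i$ contains at most one element; if it is nonempty, its unique element $j$ identifies the single ``missing'' \peer $M_j$ of $M_i$.

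It then remains to observe that knowing the assigned block of every \peer in $H_i$ leaves $M_i$ missing only bits assigned to $M_j$. This relies on the invariant that at the start of every phase the per-\peer assignment lists form a partition of $\{1,\dots,n\}$: this holds initially by the way the input array is split among the \peers, and it is preserved because the only modification between phases is that the indices assigned to a detected-missing \peer are redistributed among the remaining \peers in the reassignment step of Stage~3, an operation that neither drops nor duplicates any index. Hence $M_i$ knows every input bit except possibly those in $M_j$'s list, i.e., it lacks bits from at most one \peer. For a \peer already in completion mode the claim is immediate: such a \peer entered completion mode only after receiving, from some other \peer, the block of its previously missing \peer, so it already knows all $n$ bits and lacks nothing.

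The argument is elementary; the one place where care is needed is the bookkeeping of the assignment lists across phases, together with the fact that $H_i$ is taken to include $i$ itself, so that the bound $|H_i|\ge k-1$ genuinely leaves room for only one unknown \peer and not two. I emphasize that \emph{liveness} --- that Stage~2 actually terminates --- is a separate issue, resolved by noting that with $f=1<k-1$ at least $k-1$ \nonfaulty \peers exist, each of which eventually delivers all its Stage-1 messages; but this is not needed for the invariant asserted by the observation.
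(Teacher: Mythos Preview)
Your proof is correct and follows essentially the same idea as the paper, which dispatches the observation in a single sentence: ``This observation is obvious from the algorithm since each \peer receives all stage-1 messages from at least $k-1$ \peers.'' Your version is more thorough (the partition invariant, the completion-mode case, the separation of liveness), but the core argument is identical.
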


This observation is obvious from the algorithm since each \peer receives all stage-1 messages from at least $k-1$ \peers.

\begin{lemma} \label{lem:stage_3_uniquness}
After stage 3, if two different \peers $M_i$ and $M_{i'}$ lack bits from missing \peers $M_j$ and $M_{j'}$ respectively, then $j=j'$.
\end{lemma}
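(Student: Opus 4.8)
The plan is a short counting argument built on the Overlap Lemma (Observation~\ref{obs:overlap}) together with the fact that after Stage~2 each \peer lacks bits from at most one \peer (Observation~\ref{obs:peers_missing}).

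First I would pin down what the hypothesis ``$M_i$ lacks bits from its missing \peer $M_j$ after Stage~3'' entails. Since $M_i$ did not enter completion mode, the (at least $k-1$) Stage-2 responses it collected in Stage~3 were all $\metoo$. Let $S_i$ be the set of \peers whose Stage-2 verdict for the query index $j$ reached $M_i$ as $\metoo$; this set includes $M_i$ itself, whose own missing index is $j$ and which therefore trivially contributes a $\metoo$ (the response condition $J_f(i')=J_f(i)$ holds for $i'=i$), so $|S_i|\ge k-1$. By the rule for Stage-2 responses, a \peer returns $\metoo$ to a query for index $j$ exactly when it did not receive all Stage-1 messages of $M_j$; combined with Observation~\ref{obs:peers_missing}, this means $M_j$ is the \emph{unique} missing \peer of \emph{every} member of $S_i$. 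Symmetrically, the hypothesis on $M_{i'}$ yields a set $S_{i'}$ with $|S_{i'}|\ge k-1$, every member of which has $M_{j'}$ as its unique missing \peer.

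Now I would finish with the Overlap Lemma, which under the standing assumption $2f<k$ (here $f=1$) guarantees that any two sets of $k-1$ \peers intersect: since $|S_i|\ge k-1$ and $|S_{i'}|\ge k-1$, the sets $S_i$ and $S_{i'}$ share some \peer $M_w$. Membership of $M_w$ in $S_i$ forces its unique missing \peer to be $M_j$, while membership in $S_{i'}$ forces it to be $M_{j'}$; hence $M_j=M_{j'}$, i.e.\ $j=j'$, which is exactly the claim.

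The only slightly delicate point is the accounting in the middle step: that the set of $\metoo$-responders available to $M_i$ has size at least $k-1$ (this uses the Stage-3 waiting condition, noting that under a single crash $M_i$ may count its own verdict, so that $k-1$ responses are always attainable without deadlock) and that a $\metoo$ response unambiguously identifies the responder's missing \peer (this is Observation~\ref{obs:peers_missing} together with the definition of the response). Once these are in place, the conclusion is immediate from Observation~\ref{obs:overlap}.
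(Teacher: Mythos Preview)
Your proposal is correct and follows essentially the same approach as the paper: both arguments apply the Overlap Lemma (Observation~\ref{obs:overlap}) to the two size-$(k-1)$ sets of \metoo-responders and then invoke Observation~\ref{obs:peers_missing} to conclude that the common \peer\ pins down a single missing index. Your version is slightly more explicit about why a \metoo\ response identifies the responder's unique missing \peer\ and about counting $M_i$ itself toward $S_i$, but the structure of the argument is identical to the paper's.
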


\begin{proof}
Let $M_i$ and $M_{i'}$ be \peers as required by the premises of the lemma. $M_i$ received $k-1$ \metoo\ messages for $j$ and $M_{i'}$ received $k-1$ \metoo\ messages for $j'$. By the Overlap Lemma \ref{obs:overlap}, there is at least one \peer that sent \metoo\ to both $M_i$ and $M_{i'}$ for $j$ and $j'$ respectively, yet, by Observation \ref{obs:peers_missing}, each \peer lacks bits from at most one missing \peer. Hence, $j=j'$.
\end{proof}

\begin{theorem} \label{thm:async_single_crash}
In the asynchronous model with $1$ crash fault, there is a deterministic protocol that solves Download with $\Query =\frac{n}{k} + \lceil\frac{n}{k(k-1)}\rceil$, $\Time = \tilde{O}\left(\frac{n}{k}\right)$ and $\Message = O(nk)$.

\end{theorem}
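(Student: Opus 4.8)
The plan is to derive all three bounds from the structural facts already in hand, using the two-phase structure as the backbone. For \textbf{correctness}, I would show that after phase~2 every non-faulty peer has learned all $n$ bits. In phase~1 the index set $\{1,\dots,n\}$ is partitioned evenly, so each $M_i$ queries its $\lceil n/k\rceil$ assigned bits and broadcasts them; by Observation~\ref{obs:peers_missing}, at the end of stage~2 each peer is missing only the bits assigned to a single peer, and by Lemma~\ref{lem:stage_3_uniquness} every peer still incomplete after stage~3 is missing the bits of the \emph{same} peer $M_j$. I would then split on the stage-3 outcome, \emph{per peer}: a peer that received $M_j$'s bits from someone enters completion mode already knowing all $n$ bits (its phase-1 knowledge plus $M_j$'s share), and a peer that received only \metoo\ reassigns $M_j$'s $\lceil n/k\rceil$ bits evenly among the remaining $k-1$ peers. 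In phase~2, completion-mode peers broadcast all known bits while reassigning peers query and broadcast their $\le\lceil n/(k(k-1))\rceil$-bit share of $M_j$; waiting for $k-1$ stage-1 messages then guarantees (using that at most one peer has crashed, together with the convention that a peer counts its own bits as received) that each non-faulty peer collects all of $M_j$'s bits, hence all $n$ bits. Stale out-of-phase messages are absorbed by the last procedure of Algorithm~\ref{alg:single_crash_async_down}, which only ever adds knowledge and never causes a peer to un-learn a bit.

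\textbf{Termination} of the stage-2 and stage-3 waits must be argued alongside correctness: since at most one peer crashes, at least $k-1$ stage-1 messages and at least $k-1$ stage-2 responses are eventually delivered to every non-faulty peer, so no wait blocks forever; and since the protocol runs exactly two phases, every non-faulty peer terminates. Observation~\ref{obs:overlap} is what forces the missing peer to be common across peers (it is already invoked inside Lemma~\ref{lem:stage_3_uniquness}), and the same overlap reasoning lets us conclude that the single late-or-crashed peer identified in phase~1 is exactly the peer whose bits get re-queried in phase~2.

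For the \textbf{complexity bounds}: every non-faulty peer makes $\lceil n/k\rceil$ queries in phase~1 and, in the worst case, re-queries a $\lceil n/(k(k-1))\rceil$-bit share of $M_j$ in phase~2 and nothing more, giving $\Query = n/k + \lceil n/(k(k-1))\rceil$. Each stage-1 payload is $O(n/k)$ bits, which in the CONGEST model is packetized into $O(n/(k\log n))$ messages of size $O(\log n)$; summing over the $O(k)$ recipients and the $O(1)$ phases, and adding the $O(k)$ stage-2 control messages per peer (a total of $O(k^2)\le O(nk)$ since $n\ge k$), yields $\Message = O(nk)$. Finally, each phase consists of $O(1)$ stages, each dominated by shipping $O(n/k)$ bits across a link, i.e.\ $\tilde O(n/k)$ time units, plus $O(1)$ additional message delays for the stage-2 and stage-3 waits, so $\Time = \tilde O(n/k)$.

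\textbf{The main obstacle} I anticipate is the correctness argument under asynchrony: one must rule out the scenario in which a non-faulty but perpetually slow peer is never among the first $k-1$ senders heard by anyone, and show that even then (i) no peer deadlocks in a wait, and (ii) the reassignment in the ``all \metoo'' branch still covers $M_j$'s bits --- in particular when the one crashed peer is exactly the reassigned $M_j$, which requires being careful about precisely which $k-1$ peers a phase-2 stage-1 wait can hear from. This is where ``at most one crash,'' Observation~\ref{obs:overlap}, and the handling of out-of-phase messages must be combined; the remaining accounting for $\Query$, $\Message$, and $\Time$ is routine.
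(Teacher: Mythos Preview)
Your proposal is correct and follows essentially the same approach as the paper: you invoke Observation~\ref{obs:peers_missing} and Lemma~\ref{lem:stage_3_uniquness} to reduce to a single common missing peer $M_j$ after phase~1, then argue that the phase-2 reassignment (together with completion-mode peers broadcasting everything) fills in $M_j$'s bits, and the query count $n/k + \lceil n/(k(k-1))\rceil$ falls out directly. Your treatment is in fact more explicit than the paper's on termination and on the $\Message$ and $\Time$ bounds, which the paper leaves largely to the theorem statement; the one place where the paper is slightly sharper is in spelling out the phase-2 case split as ``either $M_i$ hears a phase-2 stage-1 message from $M_j$ (hence also $M_j$'s phase-1 bits via the out-of-phase handler), or $M_i$ hears from all of $I=\{1,\dots,k\}\setminus\{j\}$,'' which is exactly the asynchrony subtlety you flagged as the main obstacle.
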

\begin{proof}
By Observation \ref{obs:peers_missing}, after stage 2 of phase 1, each \peer $M_i$ lacks bits from at most one missing \peer. By Lemma \ref{lem:stage_3_uniquness}, after stage 3 of phase 1, every \peer $M_i$ that still lacks some bits has the same missing \peer $M_j$. Consider such a \peer $M_i$. At the end of stage 3 of phase 1, $M_i$ reassigns the bits of $M_j$ evenly to $I =\{1, \dots, k\} \setminus \{j\}$. Every \peer in $I$ either lacks some bits from $M_j$ or is in completion mode, so in the following phase 2, they will send stage-1 messages consistent with the local assignment of $M_i$. Hence, in stage 2 of phase 2, either $M_i$ receives a phase-2 stage-1 message from $M_j$, meaning it also receives a phase-1 stage-1 message from $M_j$, or it receives all phase-2 stage-1 messages from all \peers in $I$. In both cases, there are no unknown bits after stage 2 of phase 2.

In terms of query complexity, each \peer queries $n/k$ times in phase 1. Because each \peer lacks bits from at most one \peer, it has at most $n/k$ unknown bits. At the end of stage 3 of phase 1, every such \peer reassigns those $n/k$ bits evenly to the remaining $k-1$ \peers, resulting in additional $n/(k(k-1))$ queries per \peer. In total, the query complexity is $n/k + n/(k(k-1))$.
\end{proof}

\subsection{Extending the Result to \texorpdfstring{$f$}{} Crashes}
    In this subsection, we present a protocol that extends Algorithm \ref{alg:single_crash_async_down} to an algorithm that can tolerate up to $f$ crashes for any $f<k$.
    
    The main difficulty in achieving tolerance with up to $f$ crashes is
    that in the presence of asynchrony, one cannot distinguish between a slow \peer and a crashed \peer,
    making it difficult to coordinate.
    
    Similarly to Algorithm \ref{alg:single_crash_async_down}, Algorithm \ref{alg:f_crash_async_down} executes in phases, each consisting of three stages. Each \peer $M$ stores the following local variables.
    \begin{itemize}
        \item $phase(M)$: $M$'s current phase.
        \item $stage(M)$: $M$'s current stage within the phase.
        \item $H_p^M$: the \emph{correct set} of $M$ for phase $p$, i.e., the set of \peers $M$ heard from during phase $p$.
        \item $\sigma_p^M$: the \emph{assignment function} of $M$ for phase $p$, which assigns the responsibility for querying each bit $i$ to some \peer $M'$.
        \item $\res^M$: the output array.
    \end{itemize}
    We omit the superscript $M$ when it is clear from the context.
    
    In the first stage of phase $p$, each \peer $M$ queries bits according to its local assignment $\sigma_p$ and sends a $\phase \ p \ \stage \ 1$ request (asking for bit values according to $\sigma_p$, namely $\{i \mid \sigma_p(i) = M'\}$) to every other \peer $M'$ and then continues to stage 2.
    Upon receiving a $\phase \ p \ \stage \ 1$ request, $M$ waits until it is at least in stage 2 of phase $p$ and returns the requested bit values that it knows.
    In stage 2 of phase $p$, $M$ waits until it hears from at least $|H_p^M| \geq k-f$ \peers (again, waiting for the remaining $f$ {\peers} risks deadlock). Then, it sends a $\phase \ p \ \stage \ 2$ request containing the set of \peer Id's $F_p^M = \{1, \dots, k\}\setminus H_p^M$ (namely, all the \peers it didn't hear from during phase $p$) and continues to stage 3.
    Upon receiving a $\phase \ p \ \stage \ 2$ request, $M$ waits until it is at least in stage 3 of phase $p$, and replies to every \peer $M'$ as follows. For every $j\in F_p^{M'}$, it sends $M_j$'s bits if $j\in H_p^{M'}$ and \metoo\ otherwise.
    In stage 3 of phase $p$, $M$ waits for $k-f$ $\phase \ p \ \stage \ 2$ responses. Then, for every $j\in F_p^M$, if it received only \metoo\ messages, it reassigns $M_j$'s bits evenly between \peers $1, \dots, k$. Otherwise, it updates $\res$ in the appropriate indices. Finally, it continues to stage 1 of phase $p+1$.
    Upon receiving a $\phase \ p \ \stage \ i$ response, $M$ updates $\res$ in the appropriate index and updates $H_p$ for every bit value in the message. See Algorithm \ref{alg:f_crash_async_down} for the formal code.

    Before diving into the analysis, we overview the following intuitive flow of the algorithm's execution. At the beginning of phase 1, the assignment function $\sigma_1$ is the same for every \peer. Every \peer is assigned $n/k$ bits, which it queries and sends to every other \peer. Every \peer $M$ hears from at least $k-f$ \peers, meaning that it has at most $f\cdot n/k$ unknown bits after phase 1. In the following phases, every \peer $M$ reassigns its unknown bits uniformly among all the \peers, such that the bits assigned to every \peer $M'$ are either known to it from a previous phase or $M'$ is about to query them in the current phase (i.e., $M'$ assigned itself the same bits). Hence, after every phase, the number of unknown bits diminishes by a factor of $f/k$. After sufficiently many phases, the number of unknown bits will be small enough to be directly queried by every \peer.

\begin{algorithm}
    \caption{Async Download version 2 for \peer $M$} \label{alg:f_crash_async_down}
    \begin{algorithmic}[1]

    \State \textbf{Local variables} 

    \State $phase(M)$, initially 0 \Comment{This is the present phase of $M$}

    \State $stage(M)$, initially 1 \Comment{This is the present stage of $M$}

    \State $H_p$, $p \in \mathbb{N}$, initially $\emptyset$      \Comment{The set of \peers $M$ heard from during phase $p$}
    \State $\sigma_p(i)$, $p \in \mathbb{N}$, initially $\sigma_p(i) \gets M_{1+\lceil i / \frac{n}{k}\rceil}$  \Comment{The assignment function of $M$ in phase $p$}
    \State $\res[i]$, $i\in \{1,\dots, n\}$, initially $\bot$ \Comment{Output array}

    \State 

    \Upon{entering stage 1 of phase $p$}
        \State \Comment{--------- stage 1 (start) ---------------------------------}
        \State Query all unknown assigned bits, $\{i \mid \res[i]=\bot, \sigma_p(i) = M\}$.
        \State Send a $\phase \ p \ \stage \ 1$ request containing $\{i \mid \res[i] = \bot, \sigma_p(i) = M'\}$ to every \peer
        \State Set $stage(M)\gets 2$

        \State \Comment{--------- stage 2 (start) ---------------------------------}
        \State Wait until $|H_p| \geq k-f$ \label{line:wait_2} 
        \State  Send a $\phase\ p\ \stage\ 2$ request containing $F_p=\{1,\dots, k\}\setminus H_p$ to every \peer $M'$ 
        \State Set $stage(M)\gets 3$
        \State \Comment{--------- stage 3 (start) ---------------------------------}
        \State Wait until received at least $k-f$ $\phase$ $p$ $\stage$ $2$ responses.\label{line:wait_3} 
        \For{$j \in F_p$} 
            \If{Received only \metoo\ responses for $j$}
                \State Let $i_0, \dots, i_{n'-1}$ be the indices such that $\sigma(i_l) = M_j$, $0\leq l\leq n'-1 $.
                \State Set $\sigma_{p+1}(i_l) \gets M_{1+\lceil l / \frac{n'}{k} \rceil}$, $0\leq l\leq n'-1 $.\label{line:reassign}
                \Comment{ Reassign $M_j$'s bits to all $\{1, \dots, k \}$} 
            \Else
                \For{$i \in \{i \mid \sigma(i) = M_j\}$}
                    \State $\res[i] \gets b_i$
                \EndFor
                \Comment{Update $\res$ in the appropriate indices}
            \EndIf
        \EndFor
        \State Set $phase(M)\gets phase(M)+1$, $stage(M)\gets 1$ 
        
    \EndUpon
    
    \State
    
    \Upon{seeing a $\phase \ p \ \stage \ 1$ request for bit Set $B$}
        \State Store the request until $phase(M)=p$ and $stage(M)\geq 2$ or $phase(M) >p$ 

        \State Send back a $\phase \ p \ \stage \ 1$ response containing $\{\langle i, \res[i] \rangle \mid i \in B\}$
        
    \EndUpon

    \State

    \Upon{seeing a $\phase \ p \ \stage \ 2$ request containing $F$ from $M'$}{}
        \State  Store the request until $phase(M)=p$ and $stage(M)\geq 3$ or $phase(M) >p$ 
        \For{$j \in F_p$}
            \State  Send back a $\phase\ p \ \stage \ 2$ response containing \metoo\ if $j\notin H_p$ and $M_j$'s assigned bits otherwise.
        \EndFor  
    \EndUpon

    \State
    \Upon{receiving a $\phase \ p$ $\stage \ i$ response}{}
        \State For every bit value in the message $\langle i, b_i \rangle$, set $\res[i] \gets b_i$
        \State Update $H_p \gets \{j \mid \res[i]\neq \bot \ \ \forall i :\sigma_p(i) = j\}$
    \EndUpon
    \State 
    



    \Upon{$phase(M) = \log_{k/f}(n)$ or $H_p=\{1, \dots, k\}$}{}
    \label{line:termination_condition}
    \State Query all unknown bits
        \State Send $\res$ to every other \peer and Terminate
    \EndUpon
    
\end{algorithmic}
\end{algorithm}

We start the analysis by showing some properties on the relations between local variables.
\begin{observation} \label{obs:no_missing_mahines_no_missing_bits}
    For every \nonfaulty \peer $M$, if $H_p^M=\{1, \dots, k\}$ for some phase $p\geq 0$ then $\res^M=\invec$
\end{observation}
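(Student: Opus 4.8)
The plan is to unpack the definition of the set $H_p^M$ directly from its update rule in Algorithm~\ref{alg:f_crash_async_down} and combine it with the fact that the assignment function $\sigma_p$ is a total map from indices to peers. Recall that $M$ maintains $H_p \gets \{j \mid \res[i] \neq \bot \text{ for all } i \text{ with } \sigma_p(i) = M_j\}$. First I would argue that, in every phase $p$, the function $\sigma_p$ assigns every index $i \in \{1, \dots, n\}$ to some peer: this holds at $p = 0$ by the initialization $\sigma_p(i) \gets M_{1 + \lceil i/(n/k)\rceil}$, and it is preserved by the reassignment in line~\ref{line:reassign}, which merely redistributes the indices previously mapped to a missing peer $M_j$ among all $k$ peers, without dropping any index.

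Given totality of $\sigma_p$, the hypothesis $H_p^M = \{1, \dots, k\}$ implies that for every index $i \in \{1, \dots, n\}$, picking $j$ with $\sigma_p(i) = M_j$ gives $j \in H_p^M$ and hence $\res^M[i] \neq \bot$. Thus $M$ has assigned a value to every cell of its output array, and it remains only to show that each such value is the correct one, i.e., $\res^M[i] = b_i$ whenever $\res^M[i] \neq \bot$.

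For the correctness of the stored values I would establish the invariant that throughout the execution every (in particular every \nonfaulty) \peer $M$ satisfies $\res^M[i] \in \{\bot, b_i\}$ for all $i$. This follows by induction on the sequence of writes to $\res$: a cell $\res^M[i]$ is written either directly from $\DataQuery(i)$, which returns $b_i$ by definition of the \Source, or upon receiving a stage response carrying a pair $\langle i, b \rangle$; since the faults here are crash faults only and no \peer fabricates values, any such pair was produced by a chain of forwards tracing back to a query, so $b = b_i$ by the inductive hypothesis. Combining this invariant with the previous paragraph yields $\res^M[i] = b_i$ for all $i$, i.e., $\res^M = \invec$, as claimed.

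The only delicate point is the bookkeeping around $\sigma_p$ and $H_p$: one must check that line~\ref{line:reassign} never leaves an index unassigned (the local quantity $n'$ counts exactly the indices currently mapped to $M_j$, and the new map partitions these $n'$ indices among all $k$ peers), and that the membership test defining $H_p$ is always evaluated with respect to the $\sigma_p$ that is current in phase $p$. Neither is hard, but both are needed for the argument to be airtight; everything else is immediate from the crash-only fault model and the read-only correctness of the \Source.
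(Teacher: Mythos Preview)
Your proposal is correct and follows essentially the same approach as the paper's proof: use totality of $\sigma_p$ together with the definition of $H_p$ to conclude $\res^M[i]\neq\bot$ for every $i$, then invoke the crash-only fault model to get $\res^M[i]=b_i$. You are simply more explicit than the paper, which states the existence of $j$ with $\sigma_p(i)=M_j$ without justifying totality and jumps directly from $\res^M[i]\neq\bot$ to $\res^M=\invec$ without spelling out the correctness invariant.
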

\begin{proof}
    Let $p\geq 0$ be such that $H_p=\{1, \dots, k\}$,
    and consider $1\leq i\leq n$. There exists some $1\leq j\leq k$ such that $\sigma_p(i) = M_j$. Since $j \in H_p^M$, $M$ has heard from $M_j$, so $\res^M[i] \neq \bot$, and overall $\res^M=\invec$.
\end{proof}

Denote by $\sigma^M_p$ the local value of $\sigma_p$ for \peer $M$ at the beginning of phase $p$.
Denote by $\res^M_p[i]$ the local value of $\res[i]$ for \peer $M$ after stage 1 of phase $p$.

\begin{claim} \label{claim:matching_assignment}
For every phase $p$, two \nonfaulty \peers $M, M'$, and bit $i$, one of the following holds.
\begin{enumerate}[$(1_p)$]
    \item $\sigma^M_p(i) =\sigma^{M'}_p(i)$, i.e., both $M$ and $M'$ assign the task of querying $i$ to the same \peer, or
    \item $\res^M_p[i] \neq \bot$ or $\res^{M'}_p[i] \neq \bot$. 
\end{enumerate}
\end{claim}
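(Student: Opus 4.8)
The plan is to prove the claim by induction on the phase index $p$, carrying a slightly stronger invariant so that the induction closes. The base case $p=0$ is immediate: every non-faulty \peer initializes $\sigma_0$ to the same default assignment $\sigma_0(i)=M_{1+\lceil i/(n/k)\rceil}$, so $(1_0)$ holds for every bit $i$ and every pair $M,M'$. Throughout I would use the standing fact that $\res$ is monotone --- a \peer never overwrites a committed bit --- so that $\res^M_p[i]\neq\bot$ implies $\res^M_{p'}[i]\neq\bot$ for all $p'\geq p$. The invariant I intend to carry is the claim together with an extra clause about block contents: \emph{for all non-faulty $M,M'$ and bits $i$, if $\res^M_p[i]=\bot$ and $\res^{M'}_p[i]=\bot$ then $\sigma^M_p(i)=\sigma^{M'}_p(i)$, and, writing $M_j$ for this common value, the preimage blocks $\{i':\sigma^M_p(i')=M_j\}$ and $\{i':\sigma^{M'}_p(i')=M_j\}$ coincide.} The first half is the claim in contrapositive form; the second half is exactly what the reassignment step needs, since reassigning a block depends only on its list of indices and on its size.

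For the inductive step I would fix $M,M'$ and a bit $i$ and assume the invariant for phase $p$. If $\res^M_p[i]\neq\bot$ or $\res^{M'}_p[i]\neq\bot$, monotonicity yields $(2_{p+1})$, so assume neither \peer knows $i$ after stage~1 of phase $p$. The invariant then gives $\sigma^M_p(i)=\sigma^{M'}_p(i)=M_j$ with coinciding blocks, and $M_j$ is neither $M$ nor $M'$ (otherwise that \peer would have queried $i$ in stage~1 and would know it). For each of $M$ and $M'$ I would then argue: if it learns $i$ at any moment during phase $p$, then that \peer already has a non-$\bot$ value at $i$ by the start of phase $p+1$, so $(2_{p+1})$ holds; otherwise, reading off the protocol, it must have had $j\in F_p$ (else it would have obtained all of $M_j$'s assigned bits, $i$ included), it issued a stage-2 request containing $j$, and it received only \metoo{} responses for $j$ in stage~3 (else it would have copied $M_j$'s bits), hence it reassigns the $M_j$-block. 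The single surviving case is that both \peers reassign the $M_j$-block; since by the strengthened invariant they reassign the \emph{same} set, the rule $i_\ell\mapsto M_{1+\lceil \ell/(n'/k)\rceil}$ sends $i$ to the same \peer, establishing $(1_{p+1})$ with a common value $M_{j'}$. To finish I would re-establish the block clause at phase $p+1$ for this $M_{j'}$: an index in $M$'s $M_{j'}$-block at phase $p+1$ was either routed there from a block $M$ just reassigned, or was already assigned to $M_{j'}$ and left untouched; in the latter case $M$ obtained all of $M_{j'}$'s bits during phase $p$ and hence knows that index, placing it under $(2_{p+1})$, and for the indices still unknown to both one checks, using the phase-$p$ invariant applied to the source block together with the stage analysis above, that $M$ and $M'$ reassigned that source block identically.

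The step I expect to be the main obstacle is making this last re-establishment rigorous --- that is, controlling ``stale'' entries in blocks. Since $\sigma$ is never cleaned up, an index that a \peer learned in an early phase can keep pointing to some $M_j$ whose block is reassigned much later, so the two \peers' copies of a block may differ by such stale indices; and a single stale index below $i$ shifts $i$'s rank and the block's size and could route $i$ to a different \peer. I would handle this with a companion induction showing that every index in the symmetric difference of the two copies of a block is known to at least one of the two \peers (immediate from the claim at phase $p$), and --- the delicate point --- that whenever $M$ and $M'$ are simultaneously about to reassign a block carrying the same name $M_j$, their stale parts are in fact identical; I would prove this by tracking, for each index, the phase and the message through which each \peer first learned it, and invoking the inductive hypothesis at that earlier phase. (Alternatively, under the reading that a learned index is dropped from every block and reassigned to its owner, each block consists only of currently-unknown indices and the obstacle disappears; in either reading the remaining three-stage case analysis is routine.)
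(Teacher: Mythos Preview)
Your approach is the same as the paper's: induction on $p$, with the base case from the common initialisation, monotonicity of $\res$ to propagate $(2_{p-1})$ to $(2_p)$, and a case split on whether each of $M,M'$ heard from $M_j$ during phase $p-1$ to handle the $(1_{p-1})$ branch. The paper's proof does exactly this, in about six lines, and simply asserts that when both $M$ and $M'$ fail to hear from $M_j$ and therefore both reassign, ``both \peers will assign the same \peer to $i$ in stage~3'', citing the reassignment line.

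Where you go further than the paper is in recognising that this last assertion is not automatic: the reassignment target of $i$ depends on $i$'s rank inside the whole block $\{i':\sigma_p(i')=M_j\}$ and on the block's size, so one needs the two \peers' copies of that block to agree, not just their values at the single index $i$. The paper's inductive hypothesis is stated per-bit and does not obviously deliver block equality; your strengthened invariant (agreement of entire preimage blocks on the still-unknown part) is precisely what is needed to make that step rigorous. In short, your plan is the paper's plan plus an honest acknowledgement of a gap the paper elides. Your proposed fix --- carrying the block clause, and arguing that indices in the symmetric difference of the two blocks are already known to at least one side --- is the right shape; the ``stale indices shift the rank'' worry you flag is genuine and is exactly where the paper's brevity hides the work. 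Under the reading in which only currently-unknown indices participate in a block (which is what the surrounding complexity arguments also use), the obstacle disappears and your proof goes through cleanly.
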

\begin{proof}
By induction on $p$.
For the basis, $p=0$, the claim is trivially true because of the initialization values (specifically, 
property ($1_0$) holds).

For $p\geq 1$. By the induction hypothesis, either $(1_{p-1})$ or $(2_{p-1})$ holds. 
Suppose first that $(2_{p-1})$ holds, i.e., $\res^M_{p-1}[i] \neq \bot$ or $\res^{M'}_{p-1}[i] \neq \bot$. Without loss of generality, assume that $\res^M_{p-1}[i] \neq \bot$. Then, since values are never overwritten, $\res^M_p[i] \neq \bot$, so $(2_p)$ holds as well.

Now suppose that $(1_{p-1})$ holds, i.e., $\sigma^M_{p-1}(i) =\sigma^{M'}_{p-1}(i)$. Let $j$ be an index such that $\sigma^M_{p-1}(i)=M_j$. If both $M$ and $M'$ didn't hear from $M_j$ during phase $p-1$, then both \peers will assign the same \peer to $i$ in stage 3 of phase $p-1$ (see Line \ref{line:reassign}), so $(1_p)$ holds. If one of the \peers heard from $M_j$, w.l.o.g assume $M$ did, then $\res^M_p[i] \neq \bot$. Hence, $(2_p)$ holds.
\end{proof}

Claim \ref{claim:matching_assignment} yields the following corollary.

\begin{corollary} \label{cor:correct_response}
    Every phase $p$\ stage $1$ request received by a \nonfaulty \peer is answered with the correct bit values.
\end{corollary}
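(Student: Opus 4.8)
The plan is to reduce the statement to two independent facts: (i) every value ever stored in the output array $\res$ of any \peer is correct, i.e. if $\res^M[i]\neq\bot$ then $\res^M[i]=b_i$; and (ii) whenever a \nonfaulty \peer $M$ answers a $\phase\ p\ \stage\ 1$ request, it already has a (necessarily non-$\bot$) value stored for every bit index appearing in that request. Together these give that $M$'s response consists only of pairs $\langle i,b_i\rangle$, which is exactly the claim. Fact (i) I would record as a short observation, proved by induction over the execution: a value enters $\res$ either directly through $\DataQuery(\cdot)$ (correct by the definition of the \Source) or by copying a value carried in a $\stage\ 1$ or $\stage\ 2$ response, which by the induction hypothesis was itself a correct value held by the sender; note it also holds for crashed \peers up to the moment they crash.

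The core of the argument is fact (ii). Let $M'$ be the sender and $M$ the \nonfaulty receiver, and recall that $M'$'s request, built at the start of $M'$'s stage $1$ of phase $p$, is the set $B=\{i \mid \res^{M'}[i]=\bot,\ \sigma_p^{M'}(i)=M\}$; recall also that $M$ postpones answering until $phase(M)=p$ with $stage(M)\ge 2$ (or $phase(M)>p$), so in particular $M$ has finished executing stage $1$ of phase $p$ before it replies, a stage during which it queried every index $i'$ with $\res^M[i']=\bot$ and $\sigma_p^M(i')=M$. Fix $i\in B$. Since $M'$ emitted a $\phase\ p\ \stage\ 1$ request it behaved correctly at least through the start of stage $1$ of phase $p$, so Claim~\ref{claim:matching_assignment} applies to the pair $(M,M')$ and the index $i$ at phase $p$: either $(1_p)$ $\sigma_p^M(i)=\sigma_p^{M'}(i)=M$, or $(2_p)$ $\res_p^M[i]\neq\bot$ or $\res_p^{M'}[i]\neq\bot$. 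The second disjunct of $(2_p)$ is impossible, because $i\in B$ forces $\res^{M'}[i]=\bot$ at the instant $M'$ leaves stage $1$ of phase $p$ (the query step of that stage touches only $M'$'s own assigned indices, and $\sigma_p^{M'}(i)=M$), i.e. $\res_p^{M'}[i]=\bot$. Hence in case $(2_p)$ we obtain $\res_p^M[i]\neq\bot$, and in case $(1_p)$ we have $\sigma_p^M(i)=M$, so $M$ queried $i$ during its stage $1$ of phase $p$ unless it already knew it; either way $\res^M[i]\neq\bot$ once $M$ has completed stage $1$ of phase $p$, and since stored values are never erased, $\res^M[i]\neq\bot$ when $M$ sends its response. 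Combining with fact (i), $\res^M[i]=b_i$, proving the corollary.

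The step I expect to need the most care is the bookkeeping around Claim~\ref{claim:matching_assignment}: one must argue that it is legitimate to invoke it when the sender $M'$ may be a \peer that crashes later (it is, because emitting a $\phase\ p\ \stage\ 1$ request certifies correct behaviour through stage $1$ of phase $p$, which is all the claim's induction actually uses), and one must pin down the timing so that ``$\res_p^{M'}[i]$'' genuinely coincides with the value of $\res^{M'}[i]$ used to assemble the request (there is no wait statement between querying, sending the request, and advancing to stage $2$, so the two values agree), and so that $M$ really has passed stage $1$ of phase $p$ by the time it replies (which follows from the storing rule for incoming stage-$1$ requests together with monotonicity of $phase(M)$ and $stage(M)$). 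The boundary case $M'=M$, if self-requests are permitted, falls under $(1_p)$ and needs no separate treatment.
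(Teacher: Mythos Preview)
Your proposal is correct and follows essentially the same approach as the paper, which simply states that the corollary is yielded by Claim~\ref{claim:matching_assignment} without further detail. Your write-up is considerably more careful than the paper's one-line justification: you explicitly separate out the (implicit in the crash model) fact that stored values are always correct, and you handle the timing and the possibility that the requester later crashes, both of which the paper leaves to the reader.
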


Next, we show that the algorithm never deadlocks, i.e., whenever a \nonfaulty \peer waits in stages 2 and 3 (see Lines \ref{line:wait_2} and \ref{line:wait_3}), it will eventually continue.

\begin{claim} \label{claim:one_nonfaulty_termination}
    If one \nonfaulty \peer has terminated, then every \nonfaulty \peer will eventually terminate.
\end{claim}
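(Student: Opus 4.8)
The plan is to reduce termination of an arbitrary nonfaulty peer to the single hypothesis that some nonfaulty peer $M$ has halted, by observing that a peer halts only after it has learned all of $\invec$ and has broadcast it. First I would check that, at the instant $M$ executes the termination handler (Line~\ref{line:termination_condition}), $\res^M = \invec$: if the trigger was $H_p^M=\{1,\dots,k\}$ for $M$'s current phase $p$, this is immediate from Observation~\ref{obs:no_missing_mahines_no_missing_bits}; if the trigger was $phase(M)=\log_{k/f}(n)$, then $M$ first runs ``query all unknown bits'', after which every index has been assigned its value from the (trusted) source, so again $\res^M=\invec$. In both cases the final action of $M$ before halting is to send $\res^M=\invec$ to every other peer, so a complete copy of the input is placed into the network toward each nonfaulty peer, and in the asynchronous model it is delivered after a finite delay regardless of $M$ having since stopped.

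Next I would fix an arbitrary nonfaulty peer $M'\neq M$. If $M'$ has already terminated there is nothing to prove, so suppose $M'$ is still running. A peer blocked on one of the ``wait until'' statements in Stage~2 or Stage~3 (Lines~\ref{line:wait_2}, \ref{line:wait_3}) continues to receive and process incoming messages --- this is exactly the mechanism by which those waits are eventually released --- so $M'$ eventually consumes $M$'s broadcast and applies the bit values it carries, setting $\res^{M'}[i]\neq\bot$ for every $i\in\{1,\dots,n\}$. Recomputing the correct set for $M'$'s current phase $p$ then gives $H_p^{M'}=\{j \mid \res^{M'}[i]\neq\bot \text{ for all } i \text{ with } \sigma_p(i)=j\}=\{1,\dots,k\}$, which fires the termination handler of $M'$; the subsequent ``query all unknown bits'' step is vacuous, and $M'$ halts. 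Since $M'$ was arbitrary, every nonfaulty peer eventually terminates.

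The only point that needs care is the interplay between the reactive \textbf{Upon} handlers and the blocking ``wait until'' statements: one must be certain that a peer waiting inside a stage still admits the termination broadcast, so that its termination condition can become true rather than the message sitting undelivered forever behind a stuck loop. This is guaranteed by the standard semantics of the asynchronous message-passing model used throughout the section (reception handlers are always enabled and every sent message is eventually delivered), but I would state it explicitly, as it is the crux of the argument; the remainder is a direct unwinding of the pseudocode.
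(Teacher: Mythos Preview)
Your proposal is correct and follows essentially the same approach as the paper: a terminated nonfaulty peer $M$ has queried its remaining unknown bits and broadcast the complete $\res^M=\invec$, so every other nonfaulty $M'$ eventually absorbs all bit values, obtains $H_p^{M'}=\{1,\dots,k\}$, and triggers its own termination handler. Your write-up is in fact more careful than the paper's---you separate the two possible termination triggers for $M$ and explicitly flag the assumption that a peer blocked on a ``wait until'' still processes incoming messages---but the underlying argument is identical.
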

\begin{proof}
    Let $M$ be a \nonfaulty \peer that has terminated. Prior to terminating, $M$ queried all the remaining unknown bits and sent all of the bits to every other \peer. Since $M$ is \nonfaulty, every other \nonfaulty \peer $M'$ will eventually receive the message sent by $M$ and will set $H_p^{M'} = \{1, \dots, k\}$, resulting in $\res^{M'}=\invec$ by Observation \ref{obs:no_missing_mahines_no_missing_bits}. Subsequently, $M'$ will terminate as well.
\end{proof}
\begin{claim} \label{claim:finite_wait}
    While no \nonfaulty \peer has terminated, a \nonfaulty \peer will not wait infinitely for $k-f$ responses.
\end{claim}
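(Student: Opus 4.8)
\textbf{Proof proposal for Claim \ref{claim:finite_wait}.}
The plan is to prove the stronger statement that, for every $p\geq 0$, either some \nonfaulty \peer has terminated, or every \nonfaulty \peer eventually enters stage~$1$, then stage~$2$, then stage~$3$ of phase~$p$. Claim \ref{claim:finite_wait} is an immediate consequence: the only points at which a \nonfaulty \peer blocks are the wait in stage~$2$ (for $|H_p|\geq k-f$) and the wait in stage~$3$ (for $k-f$ stage-$2$ responses), and the fact that \emph{all} \nonfaulty \peers eventually reach stage~$3$ of phase~$p$ means in particular that these two waits have completed for each of them. Throughout I would use that at least $k-f$ of the $k$ \peers are \nonfaulty, hence never stop sending or answering messages (until they possibly terminate), and that a \nonfaulty \peer dispatches its stage-$1$ request (resp.\ its stage-$2$ request) \emph{before} beginning the corresponding wait, so these requests go out as soon as the \peer enters the phase (resp.\ reaches stage~$2$).

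The argument is an induction on $p$, with a three-step inner argument on the stage. The base case $p=0$ is immediate, since every \nonfaulty \peer starts in phase~$0$, stage~$1$. For the inductive step, assume every \nonfaulty \peer eventually enters stage~$1$ of phase~$p$ (for $p\geq 1$ this is exactly what the previous iteration delivers, since a \peer sets its phase to $p$ precisely when it finishes stage~$3$ of phase~$p-1$). Upon entering stage~$1$ of phase~$p$ a \nonfaulty \peer queries its assigned bits, sends its stage-$1$ requests, and moves straight to stage~$2$; hence every \nonfaulty \peer eventually enters stage~$2$ of phase~$p$. Now fix a \nonfaulty \peer $M$ blocked in stage~$2$ of phase~$p$: each of the $\geq k-f$ \nonfaulty \peers eventually sees $M$'s stage-$1$ request (either at once, if it is already in stage~$\geq 2$ of phase~$p$ or in a later phase, or after it advances there), and by Corollary \ref{cor:correct_response} answers it with the correct values of the requested bits; once $M$ receives these answers, the corresponding \peer joins $H_p^M$. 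Therefore $|H_p^M|\geq k-f$ holds eventually, $M$ sends its stage-$2$ request and moves to stage~$3$, so every \nonfaulty \peer eventually enters stage~$3$ of phase~$p$. Finally, fix a \nonfaulty \peer $M$ blocked in stage~$3$ of phase~$p$: each of the $\geq k-f$ \nonfaulty \peers eventually sees $M$'s stage-$2$ request (after it has reached stage~$\geq 3$ of phase~$p$, which we just established, or a later phase) and returns a stage-$2$ response, so $M$ eventually collects $k-f$ such responses and finishes stage~$3$, closing the induction.

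The point requiring care — and the main obstacle — is the apparent circularity in the last step: a \nonfaulty \peer cannot finish stage~$3$ of phase~$p$ until \emph{other} \nonfaulty \peers are already in stage~$3$ of phase~$p$ and can answer its stage-$2$ request. This is broken by the observation that a \nonfaulty \peer enters stage~$3$ \emph{without} waiting for anyone: it only needs to have passed the stage-$2$ wait, which the preceding step guarantees; and by the fact that stored stage-$1$ and stage-$2$ requests are released as soon as the recipient reaches the appropriate stage (or any later phase), so no request is ever lost. A last subtlety is the conditional phrasing ``while no \nonfaulty \peer has terminated'': if some \nonfaulty \peer does terminate, then by Claim \ref{claim:one_nonfaulty_termination} all \nonfaulty \peers eventually terminate and in particular stop waiting, so the statement holds vacuously; otherwise the inductive statement above applies at every phase $p$ up to the last one any \peer reaches before the termination condition of Line \ref{line:termination_condition} fires, which is exactly what is needed.
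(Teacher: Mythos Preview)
Your proposal is correct and rests on the same underlying idea as the paper: at least $k-f$ \peers are \nonfaulty and, by Corollary~\ref{cor:correct_response}, will eventually answer any request they receive, so a waiting \nonfaulty \peer always collects $k-f$ responses unless some \nonfaulty \peer has already terminated.

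The difference is one of detail rather than substance. The paper's proof is two sentences: it simply observes that $k-f$ \nonfaulty \peers respond to requests and concludes. You, by contrast, make explicit the phase-by-phase induction and the stage-by-stage inner argument, and you name and resolve the apparent circularity (a \peer cannot answer a stage-$2$ request until it is itself in stage~$3$, which seems to require others to have answered first). This unpacking is implicit in the paper's argument---the key point being that entering stage~$2$ (resp.\ stage~$3$) happens \emph{before} the corresponding wait completes, so responders are already available---but the paper leaves the reader to reconstruct it. Your version buys rigor at the cost of length; the paper's buys brevity at the cost of leaving the deadlock-freedom argument largely to the reader.
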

\begin{proof}
    First, note that at least $k-f$ \peers are \nonfaulty and by Corollary \ref{cor:correct_response} will eventually respond if they see a request before they terminate. Thus, the only case in which a \peer will not get $k-f$ responses is if at least one \nonfaulty \peer has terminated. 
\end{proof}

The combination of Claims \ref{claim:finite_wait} and \ref{claim:one_nonfaulty_termination} implies that eventually, every \nonfaulty \peer satisfies the termination condition (see Line \ref{line:termination_condition}) and subsequently terminates correctly (since it queries all unknown bits beforehand). That is because by Claim \ref{claim:finite_wait} some \nonfaulty \peer $M$ will get to phase $\log_{k/f}(n)$, or set $H_p^M=\{1, \dots, k\}$ prior to that, and terminate, which will lead to the termination of every \nonfaulty \peer by Claim \ref{claim:one_nonfaulty_termination}.


\begin{claim} \label{claim:unkown_bits_per_phase}
    At the beginning of phase $p\geq 0$, every \nonfaulty \peer has at most $n\cdot \left(\frac{f}{k}\right)^p$ unknown bits.
\end{claim}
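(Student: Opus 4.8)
The plan is a short induction on $p$, but to make it close I would strengthen the statement to also control how the still-unknown bits are spread by the assignment function. Concretely, I would prove the joint invariant: for every \nonfaulty \peer $M$ that reaches phase $p$, writing $U_p^M$ for the set of bits still unknown to $M$ at the start of phase $p$, we have $|U_p^M|\le n(f/k)^p$, and moreover for every index $j$ at most $a_p:=\lceil (n/k)(f/k)^p\rceil$ bits $i\in U_p^M$ satisfy $\sigma_p^M(i)=M_j$. (A \peer that terminated before phase $p$ has $U=\emptyset$ by the termination rule, and Claims~\ref{claim:finite_wait} and~\ref{claim:one_nonfaulty_termination} ensure no \nonfaulty \peer blocks forever, so conditioning on ``reaches phase $p$'' is harmless.) The base case $p=0$ is immediate: $\res^M[i]=\bot$ for all $i$, there are $n$ bits in all, and $\sigma_0$ assigns at most $\lceil n/k\rceil=a_0$ of them to each \peer.

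For the inductive step I would fix $M$, assume the invariant at the start of phase $p$, and follow phase $p$. Three structural facts drive the argument: (i) entries of $\res^M$ are only written, never erased, so $U_{p+1}^M\subseteq U_p^M$ and it suffices to bound $|U^M|$ at the \emph{end} of phase $p$; (ii) $M$ leaves stage~2 of phase $p$ only once $|H_p^M|\ge k-f$, and $H_p^M$ can only grow afterwards, so at most $f$ indices $j$ remain outside $H_p^M$; and (iii) by the very definition of $H_p^M$, for $j\in H_p^M$ the \peer $M$ already knows every bit $i$ with $\sigma_p^M(i)=M_j$. From (ii) and (iii), at the end of phase $p$ every bit still unknown to $M$ has $\sigma_p^M(i)=M_j$ for one of at most $f$ indices $j$ (those $j\in F_p^M$ for which $M$ received only \metoo\ responses in stage~3), and by the induction hypothesis each such $j$ accounts for at most $a_p$ unknown bits; hence $|U_{p+1}^M|\le f\cdot a_p\le f\cdot(n/k)(f/k)^p=n(f/k)^{p+1}$.

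For the second half of the invariant I would use the reassignment in Line~\ref{line:reassign}: for each of the $\le f$ indices $j$ above, the $\le a_p$ unknown bits previously assigned to $M_j$ are split as evenly as possible among all $k$ \peers, so they contribute at most $\lceil a_p/k\rceil$ unknown bits to any single \peer's share of $\sigma_{p+1}^M$; summing over the $\le f$ such $j$ (and noting that $\sigma_{p+1}^M$ agrees with $\sigma_p^M$ on every other bit, all of which are now known) shows that no \peer is assigned more than $f\lceil a_p/k\rceil\le a_{p+1}$ unknown bits, closing the induction. If one reads Line~\ref{line:reassign} literally as reassigning \emph{all} of $M_j$'s bits, this changes nothing: the already-known ones are never queried in Stage~1 and never enter $U$, while the unknown ones among them still number $\le a_p$ and are split $k$ ways.

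I expect the friction to be bookkeeping, not concept. The one place to be careful is to read ``$M$ heard from $M_j$'', i.e. $j\in H_p^M$, exactly as ``$\res^M[i]\neq\bot$ for all $i$ with $\sigma_p^M(i)=M_j$'', so that fact (iii) is a tautology and Claim~\ref{claim:matching_assignment}/Corollary~\ref{cor:correct_response} is needed only to guarantee that such bits can in fact be supplied by the queried \peers (the liveness side), not for the count itself. Secondly, the ceilings must be carried through; under the divisibility assumptions implicit in the protocol (its termination condition $phase=\log_{k/f}n$ already presupposes $\log_{k/f}n\in\mathbb{N}$) they disappear and the clean bound $n(f/k)^p$ is obtained, which is exactly what the subsequent termination argument needs in order to conclude that after $\log_{k/f}n$ phases fewer than one bit per \peer remains unknown.
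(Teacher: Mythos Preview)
Your proposal is correct and follows essentially the same inductive route as the paper: reduce the unknown-bit count by a factor $f/k$ per phase, using that unknown bits are spread evenly by the reassignment and that at most $f$ \peers\ go unheard in stage~2. The main difference is that you make explicit, as part of the induction hypothesis, the per-\peer\ assignment bound $a_p$ that the paper's proof uses tacitly when it asserts ``each \peer\ is assigned $\hat n/k$ unknown bits''; your version is therefore a cleaner formalization of the same argument.
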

\begin{proof}
    By induction on $p$. Consider \nonfaulty \peer $M$.
    For the base step $p=0$ the claim holds trivially by the initialization values.
    
    Now consider $p\geq 1$.
    By the induction hypothesis on $p-1$, $M$ has at most $\hat{n} = n\cdot \left(\frac{f}{k}\right)^{p-1}$ unknown bits at the start of phase $p-1$.
    Since unknown bits are assigned evenly in stage $3$ (see Line \ref{line:reassign}), each \peer is assigned $\hat{n}/k$ unknown bits (to be queried during phase $p-1$).
    During stage 2 of phase $p-1$, $M$ waits until $|H_{p-1}^M| \geq k-f$, meaning that $M$ did not receive the assigned bits from at most $f$ \peers.
    Hence, there are at most $\hat{n}/k \cdot f = n\cdot \left(\frac{f}{k}\right)^p$ unknown bits after stage 2 of phase $p$. The claim follows.
\end{proof}

By Claim \ref{claim:unkown_bits_per_phase} and since unknown bits are distributed evenly among $\{0, \dots, k-1\}$, every \nonfaulty \peer queries at most $\frac{n}{k}\cdot \left(\frac{f}{k}\right)^{p}$ in phase $0\leq p\leq \log_{k/f}(n)$ and at most $\frac{n}{k}\cdot \left(\frac{f}{k}\right)^{\log_{k/f}(n)} = \frac{1}{k}$ additional bits when terminating (By Observation \ref{obs:no_missing_mahines_no_missing_bits}). Hence, the worst case query complexity (per \peer) is bounded by 
$$\Query ~\leq~ \frac{1}{k} + \sum_{p=1}^{\log_{k/f}(n)} \frac{n}{k}\cdot \left(\frac{f}{k}\right)^{p} ~=~ O\left(\frac{n}{\goodfrac k}\right).$$

We next turn to time analysis.
Let $M$ be a \peer. For every phase $p$, after $n/k \cdot (\frac{f}{k})^p$ time, every $\phase\ p\ \stage\ 1$ response by a \nonfaulty {\peer} is heard by $M$ (even slow ones),
and stage 2 starts. After that, it takes at most $n \cdot (\frac{f}{k})^{p+1}$ time units for every $\phase\ p\ \stage\ 2$ response to be heard by $M$, allowing it to move to stage 3. Hence, it takes at most $n/k \cdot (\frac{f}{k})^p + n \cdot (\frac{f}{k})^{p+1}$ time for phase $p$ to finish once $M$ started it. Finally, upon termination, $M$ sends $\res_M$ which takes $n$ time. Overall the time complexity is 

\[
\Time ~\le~ n + \sum_{p=0}^{\log_{k/f}(n)} \left(\frac{n}{k} \cdot \left(\frac{f}{k}\right)^p + n \cdot \left(\frac{f}{k}\right)^{p+1}\right) ~=~ n + O\left(\frac{n\cdot f}{\goodfrac k}\right) ~=~ O\left(\frac{\byzfrac}{\goodfrac}\cdot n\right).
\]

This results in the following.
\begin{lemma}
Algorithm \ref{alg:f_crash_async_down} solves Download in the asynchronous setting with at most $f$ crash faults after $\log_{k/f}(n)$ phases with $\Query = O(\frac{n}{\goodfrac k})$ and $\Time = n + O(\frac{\byzfrac}{\goodfrac}\cdot n)$.
\end{lemma}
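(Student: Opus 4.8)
The plan is to assemble the lemma from the structural claims and the two complexity computations already in place; essentially the proof is a bookkeeping exercise once deadlock-freedom is pinned down. There are three things to establish: (a) every nonfaulty peer terminates, and does so within $\log_{k/f}(n)$ phases; (b) the array each nonfaulty peer outputs is exactly $\invec$; and (c) the stated $\Query$ and $\Time$ bounds.

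For (a) I would first argue that no nonfaulty peer blocks forever at the two wait-statements (Lines~\ref{line:wait_2} and~\ref{line:wait_3} of Algorithm~\ref{alg:f_crash_async_down}). Since at least $k-f$ peers are nonfaulty and, by Corollary~\ref{cor:correct_response}, a nonfaulty peer that sees a stage-$1$ or stage-$2$ request eventually answers it (with correct values) unless it has already terminated, Claim~\ref{claim:finite_wait} gives that a waiting nonfaulty peer collects its $k-f$ responses as long as no nonfaulty peer has terminated. Hence progress through phases is perpetual until the termination condition on Line~\ref{line:termination_condition} fires for some nonfaulty peer --- which must happen by phase $\log_{k/f}(n)$ at the latest --- and then Claim~\ref{claim:one_nonfaulty_termination} propagates termination to all nonfaulty peers. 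This yields both termination and the $\log_{k/f}(n)$-phase bound.

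For (b), a peer that triggers the termination condition first queries every still-unknown bit, so it outputs a complete array; the entries it had already filled are correct because $\res$ is only ever updated from query results or from stage-$1$/stage-$2$ responses, and by Claim~\ref{claim:matching_assignment} together with Corollary~\ref{cor:correct_response} those responses carry true values --- inductively, whenever two nonfaulty peers disagree about who should query bit $i$, one of them already holds the correct value, so no wrong bit is ever propagated. Observation~\ref{obs:no_missing_mahines_no_missing_bits} handles the case where termination is triggered by $H_p=\{1,\dots,k\}$. For (c) I would invoke Claim~\ref{claim:unkown_bits_per_phase}: at most $n(f/k)^p$ bits are unknown at the start of phase $p$ and they are reassigned evenly, so peer $M$ performs at most $(n/k)(f/k)^p$ queries in phase $p$ and at most one more at termination; summing the geometric series yields $\Query=O(n/(\goodfrac k))$. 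For the time bound, phase $p$ takes at most $(n/k)(f/k)^p + n(f/k)^{p+1}$ time units (the first term to hear all slow-but-nonfaulty stage-$1$ responses, the second for the stage-$2$ responses), and the final broadcast of $\res_M$ costs $n$; summing over $p\le \log_{k/f}(n)$ gives $\Time = n + O((\byzfrac/\goodfrac)\,n)$.

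The one genuinely delicate step is (a): because nonfaulty peers may sit in different phases, ``wait for $k-f$ responses'' could in principle stall. The resolution is the request-buffering discipline (a peer stores a $\phase\ p$ request until it is itself at phase $p$ in a late-enough stage) coupled with the fact that the slowest nonfaulty peer's phase counter is non-decreasing and, absent termination, strictly advances; so in the write-up I would make the interaction between the buffering rule and the $k-f$ threshold fully explicit, since that is exactly where an asynchronous deadlock would hide if the protocol were mis-stated.
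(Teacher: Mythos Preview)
Your proposal is correct and follows essentially the same approach as the paper: termination via Claims~\ref{claim:finite_wait} and~\ref{claim:one_nonfaulty_termination}, correctness via Corollary~\ref{cor:correct_response} and the final catch-up query, and the $\Query$ and $\Time$ bounds via Claim~\ref{claim:unkown_bits_per_phase} and the two geometric sums. Your write-up is in fact more explicit than the paper's about the request-buffering discipline and why no wrong bit is ever propagated, but the skeleton is identical.
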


The following explains how we get better time complexity 
Note that this requires a slight modification of the code.
To make the time complexity analysis more precise, we identify a necessary condition for a \peer to send bits in a $\phase\ p \ \stage\ 2$ response (rather than 
\metoo). We observe that after 1 time unit, every message is delivered (even by slow \peers). Hence, after at most $\frac{n}{k}\cdot \left(\frac{f}{k}\right)^{p}$ time units, every $\phase\ p\ \stage\ 1$ response that was sent is delivered (including slow ones).
Therefore, while waiting for $k-f$ $\phase\ p\ \stage\ 2$ responses, it might be the case that a slow $\phase\ p\ \stage\ 1$ response arrives from \peer $M'$ eliminating the need for $\phase\ p\ \stage\ 2$ responses regarding \peer $M'$. 
The modification needed for this argument to work is that if $\phase\ p\ \stage\ 2$ responses regarding \peer $M'$ are no longer necessary because of its $\phase\ p\ \stage\ 1$ response arriving, $M$ is not blocked from continuing. Note that it is easy to see that this modification doesn't effect the correctness of the protocol.
Hence, the only time when a \peer must wait for a long $\phase\ p\ \stage\ 2$ response is when the corresponding \peer for which a $\phase\ p\ \stage\ 1$ response was not received has crashed (and therefore its $\phase\ p\ \stage\ 1$ response will never arrive). Also note that once a \peer crashed in phase $p$, it will not be heard from by any \peer in following phases resulting in 
\metoo\
responses. Therefore every \peer waits for long $\phase\ p\ \stage\ 2$ responses at most $f$ times.
This results in the complexity being
\[\Time ~\leq~ n + \beta n + \sum_{p=0}^{\log_{k/f}(n)} \frac{n}{k} \cdot \left(\frac{f}{k}\right)^p ~=~ O\left(n + \frac{n}{\goodfrac k}\right) ~=~ O(n) .
\]

\begin{theorem} \label{thm:async_multi_crash}
There is a deterministic algorithm for solving Download in the asynchronous setting with at most $f$ crash faults 
with $\Query = O(\frac{n}{\goodfrac k})$ and $\Time = O(n)$.
\end{theorem}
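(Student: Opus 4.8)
The plan is to assemble the groundwork already laid out in this subsection, namely Algorithm~\ref{alg:f_crash_async_down} together with Claims~\ref{claim:matching_assignment}--\ref{claim:unkown_bits_per_phase} and the refined time accounting, into the three required guarantees: every \nonfaulty \peer terminates with $\res = \invec$, the per-\peer query cost is $O(n/(\goodfrac k))$, and the running time is $O(n)$.

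For correctness I would first invoke Claim~\ref{claim:matching_assignment}: by induction on the phase number one obtains that for any two \nonfaulty \peers and any bit $i$, either they agree on which \peer is responsible for querying $i$, or at least one of them already knows $b_i$ — using that entries of $\res$ are never overwritten and that the reassignment in Line~\ref{line:reassign} is a deterministic function of the (commonly known) missing source. This yields Corollary~\ref{cor:correct_response}: every stage-1 request answered by a \nonfaulty \peer carries correct bit values, so nothing incorrect is ever written. Next comes deadlock-freedom: as long as no \nonfaulty \peer has terminated, the at least $k-f$ \nonfaulty \peers eventually respond, so no \nonfaulty \peer can block forever in the stage-2 or stage-3 waits (Claim~\ref{claim:finite_wait}), and one termination propagates to all (Claim~\ref{claim:one_nonfaulty_termination}). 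Hence every \nonfaulty \peer eventually reaches the termination condition of Line~\ref{line:termination_condition} — either by reaching phase $\log_{k/f} n$ or by setting $H_p = \{1,\dots,k\}$ earlier — queries whatever is still unknown, and by Observation~\ref{obs:no_missing_mahines_no_missing_bits} outputs $\res = \invec$.

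For the query bound I would use Claim~\ref{claim:unkown_bits_per_phase}: since stage~2 waits for only $k-f$ \peers, each \peer misses at most $f$ of the $k$ evenly-split blocks of its current unknowns, so the number of unknown bits contracts by a factor $f/k$ per phase, giving at most $n(f/k)^p$ unknowns at the start of phase $p$ and hence at most $\tfrac{n}{k}(f/k)^p$ queries in that phase. Summing the geometric series over $p=0,\dots,\log_{k/f} n$ plus the at most $1/k$ bits queried on termination gives $\Query = O(n/(\goodfrac k))$.

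The time bound is the delicate step, and I expect it to be the main obstacle. Normalizing the maximal message delay to one time unit, a naive accounting charges phase $p$ a cost $\tfrac nk(f/k)^p$ for delivering its stage-1 responses plus $n(f/k)^{p+1}$ for the stage-2 responses (which may each carry a full reassigned block); summed this is only $O(n + \beta n/\goodfrac)$, which is not $O(n)$ when $\beta$ is close to $1$. The fix is the observation that a late stage-1 response from a \peer $M'$ is nonetheless delivered within $\tfrac nk(f/k)^p$ time, so once it arrives there is no further need to wait for stage-2 responses concerning $M'$; one modifies stage~3 so that such a wait is released, which cannot hurt correctness since it only ever removes an unnecessary wait. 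Then the \emph{only} way a \peer is forced to endure a long stage-2 response is when the missing source has genuinely crashed, and a crashed \peer is never heard from again, so each \peer suffers at most $f$ such long waits over the entire run, contributing $O(\beta n)$ in total. Adding the geometric sum $\sum_p \tfrac nk(f/k)^p = O(n/(\goodfrac k))$ of the (always short) stage-1 delays and the final $n$ time units to broadcast $\res$ upon termination, we obtain
\[
\Time ~\le~ n + \beta n + O\!\left(\frac{n}{\goodfrac k}\right) ~=~ O(n),
\]
which completes the theorem.
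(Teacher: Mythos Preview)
Your proposal is correct and follows essentially the same approach as the paper: you assemble the same sequence of claims (Claim~\ref{claim:matching_assignment} through Claim~\ref{claim:unkown_bits_per_phase}) for correctness and the query bound, and for the time bound you reproduce exactly the paper's two-step analysis---first the naive $O(n+\beta n/\gamma)$ accounting, then the modification that releases stage-3 waits once a late stage-1 response arrives, yielding at most $f$ genuinely long stage-2 waits and hence $\Time \le n + \beta n + O(n/(\gamma k)) = O(n)$.
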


\section{Conclusions and Future Work}\label{sec: conclusion}
In this work, we studied the Data Retervial model, introduced in \cite{ABMPRT24}, and improved the previous results on the Download problem by achieving both optimal query complexity and optimal resiliency of any fraction $\byzfrac < 1$ of byzantine \peers. In addition, we presented several new results, including a protocol with $O(\log n)$ time complexity and near optimal expected query complexity in a model with a dynamic adaptive adversary, and, in the Broadcast model, near optimal worst case time, query, and message complexity. We also established a lower bound for single-round protocols, demonstrating that it is necessary to query every bit to solve the download problem within a single round. Further, we solved the download problem with optimal query complexity in synchronous and asynchronous networks in the presence of crash faults.

Our work raised several intriguing questions for Byzantine fault model, especially regarding the lower bounds on query and time complexity and the query-time tradeoff. 
A natural extension of this research would be to explore the optimal message complexity for the point-to-point model (i.e., without Broadcast). In an asynchronous network, for deterministic algorithms that handle crash faults, a key question is whether both time and query optimal complexities can be achieved simultaneously. Addressing these open questions could lead to a deeper understanding and refinement of efficient protocols in this domain.

Our work is useful in the context of blockchain oracles, specifically, as a sub-routine for data extraction from multiple data sources, where some of those data sources have some (possibly probabilistic) guarantee of being nonfaulty (or honest). A relevant question in this context involves handling data that changes over time, such as stock prices or exchange rates. We leave this question open for future exploration, as formalizing these temporal changes may require adjustments to the problem’s existing constraints.

\bibliographystyle{plain}
\bibliography{references}

\end{document}